\providecommand{\U}[1]{\protect\rule{.1in}{.1in}}
\newtheorem{theorem}{Theorem}[section]
\newtheorem{corollary}[theorem]{Corollary}
\newtheorem{definition}[theorem]{Definition}
\newtheorem{example}[theorem]{Example}
\newtheorem{lemma}[theorem]{Lemma}
\newtheorem{proposition}[theorem]{Proposition}
\newtheorem{remark}[theorem]{Remark}
\newcommand{\BIGOP}[1]{\mathop{\mathchoice{\raise-0.22em\hbox{\huge $#1$}} {\raise-0.05em\hbox{\Large $#1$}}{\hbox{\large $#1$}}{#1}}}
\makeatletter\@addtoreset{equation}{section}\makeatother
\makeatletter\@addtoreset{figure}{section}\makeatother
\makeatletter\@addtoreset{table}{section}\makeatother
\DeclareMathOperator{\tr}{tr}
\DeclareMathOperator{\diag}{diag}
\newcommand{\rD}{\mathsf{D}}
\newcommand{\rH}{\mathsf{H}}
\newcommand{\rI}{\mathsf{I}}
\newcommand{\rP}{\mathsf{P}}
\newcommand{\rR}{\mathsf{R}}
\newcommand{\rU}{\mathsf{U}}
\newcommand{\rV}{\mathsf{V}}
\newcommand{\rA}{\mathsf{A}}
\newcommand{\rB}{\mathsf{B}}
\newcommand{\rC}{\mathsf{C}}
\newcommand{\rX}{\mathsf{X}}
\newcommand{\rZ}{\mathsf{Z}}
\newcommand{\rT}{\mathsf{T}}
\newcommand{\rO}{\mathsf{O}}
\definecolor{customgreen}{HTML}{006400} 
\lstdefinestyle{python}{
    language=Python,
    basicstyle=\ttfamily\small,
    commentstyle=\color{green},
    keywordstyle=\color{blue},
    stringstyle=\color{red},
    frame=single,
    numbers=left,
    numberstyle=\tiny\color{gray},
    breaklines=true,
    showstringspaces=false,
    captionpos=b,
    xleftmargin=2em,
    xrightmargin=2em,
    aboveskip=1em,
    belowskip=1em,
    tabsize=4,
    morekeywords={as, assert, async, await, break, class, continue,
                  def, del, elif, except, exec, finally, from, global,
                  import, lambda, nonlocal, pass, print, raise, return,
                  try, with, yield, self, True, False, None}
}
\begin{document}

\title{A mathematical model for a universal digital quantum computer with an application to the Grover-Rudolph algorithm}
\author{Antonio Falc\'{o}\footnote{Corresponding author}$^{\, \, \,,1},$ Daniela Falc\'{o}--Pomares$^{2}$ and Hermann G. Matthies$%
^{3}$ \\
$^{1}$ Departamento de Matem\'aticas, F\'{\i}sica y Ciencias
Tecnol\'ogicas,\\
Universidad Cardenal Herrera-CEU, CEU Universities \\
San Bartolom\'e 55,
46115 Alfara del Patriarca (Valencia), Spain\\
e-mail: \texttt{afalco@uchceu.es}\\
$^{2}$ Grupo de Investigaci\'on Bisite \emph{Universidad de Salamanca}\\
Calle Espejo s/n, 37007 Salamanca (Spain)   \\
e-mail: \texttt{{dfp99@usal.es} }\\
$^{3}$ Institute of Scientific Computing, \\
Technische Universität Braunschweig\\
Universitätsplatz 2, 38106 Braunschweig, Germany.\\
e-mail: \texttt{h.matthies@tu-bs.de}}
\date{}

\maketitle

\begin{abstract}
In this work, we develop a novel mathematical framework for universal digital quantum computation using algebraic probability theory. We rigorously define quantum circuits as finite sequences of elementary quantum gates and establish their role in implementing unitary transformations. A key result demonstrates that every unitary matrix in \(\mathrm{U}(N)\) can be expressed as a product of elementary quantum gates, leading to the concept of a universal dictionary for quantum computation.  We apply this framework to the construction of quantum circuits that encode probability distributions, focusing on the Grover-Rudolph algorithm. By leveraging controlled quantum gates and rotation matrices, we design a quantum circuit that approximates a given probability density function. Numerical simulations, conducted using Qiskit, confirm the theoretical predictions and validate the effectiveness of our approach. These results provide a rigorous foundation for quantum circuit synthesis within an algebraic probability framework and offer new insights into the encoding of probability distributions in quantum algorithms. Potential applications include quantum machine learning, circuit optimization, and experimental implementations on real quantum hardware.
\end{abstract}

\bigskip
\noindent\emph{Mathematics Subject Classification:} 81P68 (Primary) 81P65 68Q12 (Secondary)

\noindent \emph{Keywords:} Quantum Computing, Algebraic Probability Theory,
Quantum Circuits, Grover-Rudolph Algorithm, Unitary Transformations

\section{Introduction.}\label{sec1}

Quantum computing promises revolutionary computational capabilities by harnessing the principles of quantum mechanics to tackle problems intractable for classical computers \cite{Shor1997,Grover1996}. Early breakthroughs such as Shor’s prime factorization algorithm and Grover’s search algorithm exemplify this potential. However, further progress requires a more rigorous mathematical foundation for quantum computation. Current formulations, which typically rely on state-vector evolution in Hilbert space, often lack a seamless integration with classical probability theory. This gap hampers our ability to fully understand quantum algorithms in probabilistic terms and motivates the need for a unified framework.

To address these limitations, we introduce a novel mathematical definition of a quantum computer 
grounded in \emph{algebraic probability theory} \cite{Gudder88,Meyer95,Parthasarathy1992}. 
This framework provides a unified probabilistic interpretation of quantum computation 
that naturally extends classical probability concepts into the quantum domain. 
Specifically, we define quantum computation as the action of the unitary group on 
the manifold of rank-one density matrices, derived from solutions to the Liouville--von Neumann 
differential equation. 

This definition offers a geometric perspective on computation, capturing both quantum superposition 
and probabilistic mixtures within a single formalism. Moreover, in our framework, and
\emph{elementary quantum gate} is rigorously defined as a fundamental unitary operation acting on a qubit. 
Such a precise characterization establishes a standardized basis for comparing and classifying 
quantum algorithms, laying the groundwork for a more principled analysis of quantum programs.

Our approach is general enough to encompass the standard gate-based circuit model of quantum computation. In the conventional circuit model, an algorithm is implemented as a finite sequence of discrete quantum gates acting on an initial state \cite{Nielsen_Chuang_2010,Kitaev2002}. This gate-based paradigm provides an intuitive operational blueprint for designing quantum algorithms and underlies most current quantum hardware implementations. Notably, it is the foundation for today’s Noisy Intermediate-Scale Quantum (NISQ) devices. An alternative paradigm, \textit{measurement-based quantum computing} (MBQC), takes a fundamentally different route by driving computation through adaptive measurements on highly entangled resource states \cite{Raussendorf2001}. MBQC decouples entanglement generation from the computation itself and is particularly promising for certain error-correction schemes and specialized architectures. While effective in their domains, both the circuit model and MBQC treat unitary evolution and measurement-based randomness in separate theoretical frameworks.

In contrast, our algebraic probability framework offers a unified and mathematically rigorous foundation that inherently incorporates both the coherent unitary dynamics of quantum circuits and the probabilistic nature of quantum measurements. By representing quantum states as density matrices and quantum operations as unitary actions on these states, our model seamlessly integrates state evolution with measurement outcomes within one probabilistic formalism. This unified perspective not only subsumes the conventional circuit model but also provides a natural setting for statistical analysis of quantum algorithms. Consequently, the framework enables more systematic comparisons between different quantum algorithms and supports the explicit construction of the unitary operators that implement them.

The key contribution of this work is a new theoretical framework for \emph{universal digital quantum computing} that bridges quantum mechanics and classical probability theory. Our algebraic approach offers a novel lens through which to examine quantum algorithms, yielding both conceptual clarity and practical criteria for algorithm design and evaluation. To illustrate the power and generality of the framework, we derive a generalized version of the Grover–Rudolph algorithm within our model. This generalized algorithm, originally inspired by \cite{GroverRudolph}, demonstrates how encoding a problem’s probability distribution into a unitary operator can be achieved explicitly in our formalism. The success of this construction not only extends a well-known quantum algorithm but also showcases the capability of our approach to analyze and generalize quantum algorithms in a principled way.

It is important to clarify the scope of our model within the landscape of quantum computing paradigms. Quantum computers can be broadly categorized as analog, digital, or fully error-corrected systems \cite{2019NAC}. The \textit{analog} approach exploits continuous physical interactions (e.g., adiabatic evolution or Hamiltonian simulations) to perform computations, whereas the \textit{digital} approach implements algorithms via discrete gate operations on qubits. (Current NISQ devices fall under the digital category, albeit without full error correction.) Our work focuses exclusively on the gate-based digital paradigm, which encompasses present NISQ processors as well as future fault-tolerant machines. Accordingly, we refer to our model as a \textit{digital quantum computer} to emphasize its foundation in discrete gate operations. This focus aligns our framework with the quantum computing architectures used in practice, while our theoretical advances provide deeper insight into the structure and analysis of such gate-based quantum algorithms.

The remainder of this paper is structured as follows: Section~\ref{ch_2} reviews the necessary mathematical background on finite-dimensional algebraic probability theory. Section~\ref{ch_3} introduces our proposed framework for a universal digital quantum computer, emphasizing the role of unitary group actions in quantum computation. Section~\ref{ch_4} applies this framework to derive the generalized Grover–Rudolph algorithm, including an explicit construction of the unitary operator for probability encoding. Section~\ref{Conclusions} concludes the paper by summarizing the key findings and outlining potential directions for future research.

\section{Basic Notions of Algebraic Probability}\label{ch_2}

This section introduces the fundamental notions and results in algebraic probability theory in a finite-dimensional setting.

\bigskip

Classical probability theory is traditionally formulated in terms of measure spaces. A classical probability space is defined as a triple $(\Omega, \mathcal{F}, P)$, where:
\begin{itemize}
    \item $\Omega$ is the sample space, representing all possible outcomes of a stochastic experiment.
    \item $\mathcal{F} \subseteq 2^{\Omega}$ is a $\sigma$-algebra of events.
    \item $P: \mathcal{F} \to [0,1]$ is a probability measure assigning probabilities to events.
\end{itemize}
A random variable $X: (\Omega, \mathcal{F}, P) \to (\mathbb{R}, \mathcal{B}(\mathbb{R}))$ induces a probability measure $P_X$ on $\mathbb{R}$, given by:
\begin{equation}
    P_X(B) = P(X^{-1}(B)), \quad B \in \mathcal{B}(\mathbb{R}).
\end{equation}
This measure is referred to as the \emph{distribution} of $X$. In the multivariate case, the joint distribution of an $n$-tuple $(X_1, \dots, X_n)$ of random variables is defined analogously.

\bigskip

On the other hand, quantum mechanics requires an extension of classical probability theory to incorporate non-commutative observables. In classical probability, observables are represented by real-valued random variables, which commute under multiplication, allowing them to be simultaneously measured with certainty. However, in quantum mechanics, observables correspond to self-adjoint operators on a Hilbert space, and these operators generally do not commute. This non-commutativity reflects the fundamental uncertainty in quantum measurements. To accommodate this structure, algebraic probability theory replaces classical probability measures with states in operator algebras, giving rise to the framework of algebraic probability spaces. This framework provides a mathematical setting where probability is defined over a non-commutative algebra rather than a classical sample space. 

\begin{definition}
An \emph{algebraic probability space} (or \emph{quantum probability space}) is a pair $(\mathcal{A}, \varphi)$, where:
\begin{itemize}
    \item $\mathcal{A}$ is a unital associative $*$-algebra over $\mathbb{C}$.
    \item $\varphi: \mathcal{A} \to \mathbb{C}$ is a \emph{state}, meaning it satisfies:
    \begin{enumerate}
        \item \emph{Linearity:} $\varphi(\lambda a + \mu b) = \lambda \varphi(a) + \mu \varphi(b)$ for all $a, b \in \mathcal{A}$ and $\lambda, \mu \in \mathbb{C}$.
        \item \emph{Positivity:} $\varphi(a^*a) \geq 0$ for all $a \in \mathcal{A}$.
        \item \emph{Normalization:} $\varphi(\mathbb{I}) = 1$, where $\mathbb{I}$ is the unit of $\mathcal{A}$.
    \end{enumerate}
\end{itemize}
\end{definition}

In this setting, a \emph{quantum random variable} corresponds to a self-adjoint element $X \in \mathcal{A}$. The state $\varphi$ provides an expectation value, analogous to classical probability:
\begin{equation}
    \mathbb{E}[X] = \varphi(X).
\end{equation}
A classical probability space $(\Omega, \mathcal{F}, P)$ can be embedded into an algebraic probability space $(\mathcal{A}, \varphi)$ by defining:
\begin{itemize}
    \item $\mathcal{A} = L^\infty(\Omega, \mathcal{F}, P)$, the algebra of bounded measurable functions on $\Omega$.
    \item The involution is pointwise complex conjugation: $f^*(\omega) = \overline{f(\omega)}$.
    \item The state $\varphi$ is given by expectation: $\varphi(f) = \int_{\Omega} f(\omega) dP(\omega)$.
\end{itemize}
Thus, classical probability emerges as a commutative subcase of the algebraic probability framework.

\subsection{Finite-Dimensional Example: The Matrix Algebra \(\mathbb{M}_N(\mathbb{C})\)}

Let \( N \in \mathbb{N} \) be given, and consider the finite-dimensional complex Hilbert space
$\mathbb{H}_N := \mathbb{C}^N.$ Along this paper, vectors in \(\mathbb{H}_N\) are represented in Dirac notation. For example, a vector \(\ket{\Psi} \in \mathbb{H}_N\) is written as
\[
\ket{\Psi} = \begin{bmatrix} \Psi_1 \\ \vdots \\ \Psi_N \end{bmatrix}.
\]
Its corresponding dual vector in \(\mathbb{H}_N^{\star}\) is identified with the conjugate transpose, namely
\[
\ket{\Psi}^{\star} := \bra{\Psi} = \begin{bmatrix} \overline{\Psi_1} & \cdots & \overline{\Psi_N} \end{bmatrix},
\]
where \(\overline{\Psi_i}\) denotes the complex conjugate of \(\Psi_i\). This notation allows the inner product of two vectors \(\ket{\Psi}, \ket{\Phi} \in \mathbb{H}_N\) to be expressed as
\[
\bra{\Psi}\ket{\Phi} = \sum_{i=1}^{N} \overline{\Psi_i}\Phi_i = \overline{\bra{\Phi}\ket{\Psi}},
\]
with the associated norm given by
\[
\|\Psi\| = \sqrt{\bra{\Psi}\ket{\Psi}}.
\]

\subsubsection{\(\mathbb{M}_N(\mathbb{C})\) as a Tensor Product Space}

In quantum mechanics, the outer product \(\ket{\Psi}\bra{\Phi}\) is an element of the tensor product space \(M_N(\mathbb{C}):=\mathbb{H}_N \widehat{\otimes} \mathbb{H}_N^{\star}\), which we denote by
\(
\ket{\Psi}\,\widehat{\otimes}\,\bra{\Phi} := \ket{\Psi}\bra{\Phi}.
\)
Explicitly,
\[
\ket{\Psi}\bra{\Phi} = \begin{bmatrix} \Psi_1 \\ \vdots \\ \Psi_N \end{bmatrix}
\begin{bmatrix} \overline{\Phi_1} & \cdots & \overline{\Phi_N} \end{bmatrix}
= \begin{bmatrix}
\Psi_1\overline{\Phi_1} & \cdots & \Psi_1\overline{\Phi_N} \\
\vdots & \ddots & \vdots \\
\Psi_N\overline{\Phi_1} & \cdots & \Psi_N\overline{\Phi_N}
\end{bmatrix}.
\]
Recall that
the \emph{trace} is a functional $\operatorname{tr}: \mathbb{M}_N(\mathbb{C}) \to \mathbb{C}$ given by:
\begin{equation}
    \operatorname{tr}(\rA) = \sum_{j=1}^{n} a_{jj},
\end{equation}
for any matrix $\rA = (a_{jk}) \in \mathbb{M}
_N(\mathbb{C})$. The trace can also be computed as:
\begin{equation}
    \operatorname{tr}(\rA) =\sum_{j=1}^{N} \bra{e_j} \ket{\rA e_j},
\end{equation}
where $\{\ket{e_1}, \dots, \ket{e_N}\}$ is any orthonormal basis of $\mathbb{H}_N$.

\bigskip

The tensor product \(\widehat{\otimes}\) is bilinear and satisfies the trace property
\begin{equation}\label{trace_property}
\tr\bigl(\ket{\Psi}\bra{\Phi}\bigr) = \sum_{i=1}^N \overline{\Phi_i}\Psi_i = \bra{\Phi}\ket{\Psi}.
\end{equation}
Note that rank-one matrices generate the matrix algebra \(\mathbb{M}_N(\mathbb{C})=\mathbb{H}_N \otimes \mathbb{H}_N^{\star}\), that is,
\[
  \mathbb{M}_N(\mathbb{C}) = \mathrm{span}\Bigl\{ \ket{\Psi}\bra{\Phi} : \ket{\Psi} \in \mathbb{H}_N,\; \bra{\Phi} \in \mathbb{H}_N^{\star} \Bigr\}.
\]
If \(\{\ket{\Psi_1},\dots,\ket{\Psi_N}\}\) is a basis for \(\mathbb{H}_N\), then \(\{\ket{\Psi_i}\bra{\Psi_j} : 1\le i,j\le N\}\) forms a basis for \(\mathbb{M}_N(\mathbb{C})\). Indeed, $\mathbb{M}_N(\mathbb{C})$ is a a unital associative $*$-algebra over $\mathbb{C}.$

\subsubsection{Hilbert-Schmidt Norm and the Spectral Theorem}

Before introducing the Hilbert-Schmidt norm, we present the following notation. Given a matrix
\[
\rA = \begin{bmatrix}
A_{1,1} & \cdots & A_{1,N} \\
\vdots & \ddots & \vdots \\
A_{N,1} & \cdots & A_{N,N}
\end{bmatrix} \in \mathbb{M}_N(\mathbb{C}),
\]
its conjugate transpose is defined by
\[
\rA^{\star} = \begin{bmatrix}
\overline{A_{1,1}} & \cdots & \overline{A_{N,1}} \\
\vdots & \ddots & \vdots \\
\overline{A_{1,N}} & \cdots & \overline{A_{N,N}}
\end{bmatrix} \in \mathbb{M}_N(\mathbb{C}).
\]
A matrix \(\rA \in \mathbb{M}_N(\mathbb{C})\) is said to be \emph{self-adjoint} (or \emph{Hermitian}) if \(\rA = \rA^{\star}\). 

\bigskip

The Hilbert-Schmidt norm on \(\mathbb{M}_N(\mathbb{C})\) is defined as
\[
\|\rA\|_{HS} = \sqrt{\tr\bigl(\rA^{\star}\rA\bigr)},
\]
and, as shown in \cite{Hackbusch2012}, \((\mathbb{M}_N(\mathbb{C}), \|\cdot\|_{HS})\) is a tensor Hilbert space with inner product
\[
(\rA, \rB)_{HS} = \tr\bigl(\rA^{\star}\rB\bigr) \quad \text{for all } \rA,\rB \in \mathbb{M}_N(\mathbb{C}).
\]
From \eqref{trace_property} and the definition of the Hilbert--Schmidt norm, it follows that for any vector \(\ket{\Psi}\in \mathbb{H}_N\),
\[
\|\ket{\Psi}\bra{\Psi}\|_{HS} \;=\; \sqrt{\langle \Psi \mid \Psi\rangle} \;=\; \|\Psi\|.
\]

Since $(\mathbb{M}_N(\mathbb{C}),\|\cdot\|_{HS})$ is a Hilbert space, the dual space
\[
\mathbb{M}_N(\mathbb{C})^{\star} = \{\,\varpi: \mathbb{M}_N(\mathbb{C}) \to \mathbb{C} \mid \varpi \text{ is complex linear} \,\}
\]
is isometrically isomorphic to \(\mathbb{M}_N(\mathbb{C})\) via the Hilbert-Schmidt inner product. In particular, for every \(\varpi \in \mathbb{M}_N(\mathbb{C})^{\star}\) there exists a unique \(\rho \in \mathbb{M}_N(\mathbb{C})\) such that
\[
\varpi(\rA) = (\rho, \rA)_{HS} \quad \text{for all } \rA \in \mathbb{M}_N(\mathbb{C}).
\]
We thus identify the linear functional \(\varpi\) with the matrix \(\rho \in \mathbb{M}_N(\mathbb{C})\).

\subsubsection{States on $\mathbb{M}_N(\mathbb{C})$}

To characterize the states on \(\mathbb{M}_N(\mathbb{C})\), we use the following result (see Theorem~4.6.2 in \cite{FranzPrivault2016}).

\begin{theorem}
  \label{thm:state_density_matrix}
Let $(\mathbb{M}_N(\mathbb{C}),\varphi)$ be an algebraic probability space. Then, there exists a unique matrix $\rho \in M_N(\mathbb{C})$ such that
  \begin{equation}
      \varphi(A) = \operatorname{tr}(\rho A), \text{ for all } A \in M_N(\mathbb{C}).
  \end{equation}
  The matrix $\rho,$ called the \emph{density matrix} of the state $\varphi,$ satisfies the following conditions:
  \begin{enumerate}
    \item[(a)] \(\rho\) is Hermitian (i.e., \(\rho = \rho^{\star}\));
    \item[(b)] All eigenvalues of \(\rho\) are nonnegative (i.e., \(\sigma(\rho) \subset \mathbb{R}_+\));
    \item[(c)] \(\tr(\rho) = 1\).
  \end{enumerate}
  \end{theorem}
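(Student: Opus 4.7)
My plan is to combine the Riesz-type duality $\mathbb{M}_N(\mathbb{C})^{\star} \cong \mathbb{M}_N(\mathbb{C})$ noted just before the statement with the three state axioms, extracting existence and uniqueness of $\rho$ from the former and the three listed properties from the latter.

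For existence and uniqueness, I would invoke the isomorphism already established in the excerpt: the linear functional $\varphi$ corresponds to a unique $\sigma \in \mathbb{M}_N(\mathbb{C})$ with $\varphi(A) = (\sigma, A)_{HS} = \tr(\sigma^{\star} A)$, and setting $\rho := \sigma^{\star}$ gives the desired $\varphi(A) = \tr(\rho A)$, with uniqueness of $\rho$ inherited from uniqueness of $\sigma$. Alternatively one can argue directly that the linear map $\rho \mapsto \tr(\rho\,\cdot\,)$ from $\mathbb{M}_N(\mathbb{C})$ to its dual is injective (plugging $A = \rho^{\star}$ into $\tr(\rho A) = 0$ forces $\|\rho\|_{HS}^2 = 0$) and hence surjective by a dimension count, since both spaces have complex dimension $N^2$.

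Property (c) will then be a one-line consequence: substituting $A = \mathbb{I}$ into the representation and using normalization yields $1 = \varphi(\mathbb{I}) = \tr(\rho)$. The substantive work lies in (a). Here I plan a two-stage argument. First, I would show that $\varphi$ takes real values on Hermitian matrices: by the spectral theorem, any Hermitian $A$ decomposes as $A = \sum_i \lambda_i \ket{v_i}\bra{v_i}$ with real $\lambda_i$, and each rank-one orthogonal projection $P_i = \ket{v_i}\bra{v_i}$ satisfies $P_i = P_i^{\star} P_i$, so the positivity axiom forces $\varphi(P_i) \geq 0$ and thus $\varphi(A) \in \mathbb{R}$. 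Second, I would split $\rho = \rho_1 + i\rho_2$ with $\rho_1, \rho_2$ Hermitian, check that $\tr(\rho_j A) \in \mathbb{R}$ whenever $A$ is Hermitian, and deduce from the reality of $\varphi(A) = \tr(\rho_1 A) + i \tr(\rho_2 A)$ that $\tr(\rho_2 A) = 0$ for every Hermitian $A$; decomposing an arbitrary matrix into its Hermitian and anti-Hermitian parts then upgrades this to $\tr(\rho_2 A) = 0$ for all $A$, forcing $\rho_2 = 0$.

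Property (b) will follow quickly once (a) is in hand: the spectral decomposition $\rho = \sum_j \mu_j P_j$ with real $\mu_j$ and rank-one orthogonal projections $P_j$ gives $\varphi(P_k) = \tr(\rho P_k) = \mu_k$ via $P_j P_k = \delta_{jk} P_k$ together with $\tr(P_k) = 1$, and positivity applied to $P_k = P_k^{\star} P_k$ then yields $\mu_k \geq 0$. The main obstacle I anticipate is the first stage of (a): extracting reality of $\varphi$ on self-adjoint elements purely from positivity, since the conjugation identity $\varphi(A^{\star}) = \overline{\varphi(A)}$ is not among the stated axioms and must be derived. The argument genuinely depends on expressing Hermitian matrices as real-linear combinations of elements of the form $P^{\star} P$, which is available in $\mathbb{M}_N(\mathbb{C})$ thanks to the spectral theorem but would not be automatic in a general unital $\ast$-algebra.
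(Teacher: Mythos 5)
Your proof is correct, but note that the paper itself offers no proof of this statement: it is quoted from Theorem~4.6.2 of the cited reference (Franz--Privault), so there is no internal argument to compare against. Your route --- Riesz/Hilbert--Schmidt duality for existence and uniqueness, $A=\mathbb{I}$ for the trace condition, reality of $\varphi$ on Hermitian elements via the spectral decomposition into rank-one projections $P_i=P_i^{\star}P_i$ to force $\rho=\rho^{\star}$, and $\mu_k=\varphi(P_k)\ge 0$ for nonnegativity of the spectrum --- is the standard one and every step checks out, including the nondegeneracy argument $\tr(\rho_2 A)=0$ for all $A$ implying $\rho_2=0$. One small correction to your closing remark: the identity $\varphi(a^{\star})=\overline{\varphi(a)}$ \emph{is} automatic in any unital $*$-algebra, since positivity makes the sesquilinear form $\langle a,b\rangle:=\varphi(a^{\star}b)$ positive semidefinite and hence Hermitian-symmetric (by polarization over $\mathbb{C}$), and setting $b=\mathbb{I}$ gives the claim; your spectral-theorem detour is thus a valid but not strictly necessary specialization to $\mathbb{M}_N(\mathbb{C})$.
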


We denote the set of all density matrices in $\mathbb{M}_N(\mathbb{C})$  by
\(
\mathcal{S}(\mathbb{M}_N(\mathbb{C})).
\)
In order to decompose $\mathcal{S}(\mathbb{M}_N(\mathbb{C})),$ we introduce the set of matrices with fixed rank. For \(r = 0,1,\dots, N\), let
\[
\mathcal{M}_r(\mathbb{M}_N(\mathbb{C})) := \{\,\rA \in \mathbb{M}_N(\mathbb{C}) \mid \mathrm{rank}\,\rA = r\,\}.
\]
An immediate consequence is that the set of rank-one matrices is given by
\[
\mathcal{M}_1(\mathbb{M}_N(\mathbb{C})) = \Bigl\{ \rA \in \mathbb{M}_N(\mathbb{C}) \,\Big|\, \rA = \lambda\,\ket{\Psi}\bra{\eta},\; \ket{\Psi},\ket{\eta}\in \mathbb{H}_N\setminus\{\ket{0_N}\},\; \lambda \in \mathbb{C}\setminus\{0\} \Bigr\},
\]
where \(\ket{0_N}\) denotes the zero vector in \(\mathbb{H}_N\),
In particular, for any \(\ket{\Psi} \in \mathbb{H}_N \setminus \{\ket{0_N}\}\) the rank-one matrix \(\ket{\Psi}\bra{\Psi}\) satisfies
\[
\|\ket{\Psi}\bra{\Psi}\|_{HS} = \sqrt{\bra{\Psi}\ket{\Psi}} = \|\Psi\|.
\]
Furthermore, \(\mathbb{M}_N(\mathbb{C})\) can be decomposed as the disjoint union
\[
\mathbb{M}_N(\mathbb{C}) = \bigcup_{r=0}^{N} \mathcal{M}_r(\mathbb{M}_N(\mathbb{C})).
\]
Since the zero matrix \(\rO_N\) is excluded as a state, we define, for each integer \(r\) with \(1 \le r \le N\), the subset of states of rank \(r\) by
\[
\mathcal{S}_r(\mathbb{M}_N(\mathbb{C})) 
:= \mathcal{S}(\mathbb{M}_N(\mathbb{C})) \;\cap\; \mathcal{M}_r(\mathbb{M}_N(\mathbb{C})).
\]
This construction partitions the set of all states according to their ranks, namely
\[
\mathcal{S}(\mathbb{M}_N(\mathbb{C})) 
= \bigcup_{r=1}^{N} \mathcal{S}_r(\mathbb{M}_N(\mathbb{C})).
\]

We now introduce the unitary group on \(\mathbb{M}_N(\mathbb{C})\):
\[
\mathrm{U}(N) := \{\rU \in \mathbb{M}_N(\mathbb{C}) \mid \rU^{\star}\rU = \rU\rU^{\star} = \rI_N\},
\]
where \(\rI_N\) denotes the \(N \times N\) identity matrix, that is, the matrix whose diagonal entries are all \(1\) and whose off-diagonal entries are all \(0\).
For a Hermitian matrix \(\rA\), its \emph{spectrum} is defined by
\[
\sigma(\rA) = \{\,\lambda \in \mathbb{C} \mid (\rA - \lambda \rI_N) \text{ is not invertible} \,\}.
\]
For any \(\lambda \in \sigma(\rA)\), the corresponding eigenspace is
\[
\ker(\rA-\lambda \rI_N) = \{\,\ket{u} \in \mathbb{H}_N \mid \rA\ket{u} = \lambda\ket{u}\,\}.
\]
If we write
\[
\rA = \begin{bmatrix} \ket{a_1} & \cdots & \ket{a_N} \end{bmatrix},
\]
where $\ket{a_i} \in \mathbb{H}_N$ for $1\le i \le N,$ we denote the set of its column vectors by
\[
\mathrm{col}\,\rA = \{\ket{a_1},\dots,\ket{a_N}\}.
\]

By employing the Singular Value Decomposition (SVD) (see \cite[Chapter 4]{Ipsen2009}), one may prove the following version of the spectral theorem for Hermitian matrices.

\begin{theorem}[Spectral Theorem]\label{SVD}
Let \(\rA \in \mathbb{M}_N(\mathbb{C})\) be a Hermitian matrix with spectrum \(\sigma(\rA) = \{\lambda_1, \ldots, \lambda_N\}\) and corresponding orthonormal eigenvectors \(\{\ket{u_1}, \ldots, \ket{u_N}\}\). Then, there exists a unitary matrix 
\[
\rU = \bigl[\ket{u_1}\, \ket{u_2}\, \cdots\, \ket{u_N}\bigr] \in \mathrm{U}(N)
\]
and a diagonal matrix \(\rD = \diag(\lambda_1, \ldots, \lambda_N) \in \mathbb{M}_N(\mathbb{C})\) such that
\[
\rA = \rU \rD \rU^{\star} = \sum_{i=1}^{N} \lambda_i \ket{u_i}\bra{u_i}.
\]
Furthermore, for \(0 \le r \le N\), the matrix \(\rA\) has rank \(r\) if and only if
\[
r = \sum_{\lambda \in \sigma(\rA) \setminus \{0\}} \mathrm{rank}(\rP_\lambda),
\]
where 
\[
\rP_\lambda = \sum_{\ket{u} \in \mathrm{col}\,\rU \cap \ker(\rA-\lambda \rI_N)} \ket{u}\bra{u}.
\]
Thus, \(\rA\) may be written as
\begin{equation}\label{eq:SVD1}
\rA = \sum_{\lambda \in \sigma(\rA)} \lambda\, \rP_\lambda.
\end{equation}
\end{theorem}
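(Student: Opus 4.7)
The plan is to invoke the SVD of $\rA$ and then exploit the hypothesis $\rA=\rA^{\star}$ to collapse the left and right singular structure into a single orthonormal eigendecomposition with real eigenvalues. First I would apply the SVD to obtain unitaries $\rU_1,\rU_2\in\mathrm{U}(N)$ and a diagonal matrix $\Sigma=\diag(\sigma_1,\ldots,\sigma_N)$ with $\sigma_i\ge 0$ such that $\rA=\rU_1\Sigma\rU_2^{\star}$. From $\rA^{\star}=\rA$ it follows that $\rA^{\star}\rA=\rA\rA^{\star}=\rA^2$, so the columns of $\rU_1$ and those of $\rU_2$ each form an orthonormal eigenbasis of the same Hermitian matrix $\rA^2$ with eigenvalues $\sigma_i^2$.

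Next, for each distinct value $\sigma\ge 0$ appearing in $\Sigma$, I would consider the eigenspace $V_\sigma:=\ker(\rA^2-\sigma^2\rI_N)$. Because $\rA$ commutes with $\rA^2$, it preserves $V_\sigma$, and the restriction $\rA|_{V_\sigma}$ is Hermitian and satisfies $(\rA|_{V_\sigma})^2=\sigma^2\,\rI_N|_{V_\sigma}$; hence its eigenvalues lie in $\{+\sigma,-\sigma\}$. For $\sigma>0$, the complementary orthogonal projectors $\frac{1}{2}(\rI_N\pm\rA/\sigma)$, restricted to $V_\sigma$, split it into the $+\sigma$- and $-\sigma$-eigenspaces of $\rA$, from which orthonormal bases can be selected; for $\sigma=0$ any orthonormal basis of $V_0=\ker\rA$ suffices. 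Concatenating these local bases across all distinct $\sigma$ yields an orthonormal eigenbasis $\{\ket{u_1},\ldots,\ket{u_N}\}$ of $\rA$ with real eigenvalues $\lambda_1,\ldots,\lambda_N$.

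Setting $\rU=[\ket{u_1}\;\cdots\;\ket{u_N}]$ and $\rD=\diag(\lambda_1,\ldots,\lambda_N)$, orthonormality gives $\rU^{\star}\rU=\rI_N$, while $\rA\ket{u_i}=\lambda_i\ket{u_i}$ for every $i$ implies $\rA\rU=\rU\rD$; hence $\rA=\rU\rD\rU^{\star}=\sum_{i=1}^N\lambda_i\ket{u_i}\bra{u_i}$. Regrouping this sum by distinct eigenvalues identifies precisely the projectors $\rP_\lambda$ defined in the statement and yields $\rA=\sum_{\lambda\in\sigma(\rA)}\lambda\rP_\lambda$. The rank statement then follows because $\rU$ is invertible, so $\mathrm{rank}(\rA)=\mathrm{rank}(\rD)$ equals the number of nonzero $\lambda_i$ counted with multiplicity, which in turn equals $\sum_{\lambda\in\sigma(\rA)\setminus\{0\}}\mathrm{rank}(\rP_\lambda)$ since $\mathrm{rank}(\rP_\lambda)$ coincides with the geometric multiplicity of $\lambda$.

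The main obstacle is the degenerate step in the second paragraph, where a singular value $\sigma>0$ has multiplicity greater than one and both signs $\pm\sigma$ occur as eigenvalues of $\rA$ on $V_\sigma$: the SVD by itself only diagonalizes $\rA^2$, not $\rA$, on such a block. The projector trick $\frac{1}{2}(\rI_N\pm\rA/\sigma)$ described above resolves this cleanly, after which everything else in the theorem reduces to linear-algebraic bookkeeping.
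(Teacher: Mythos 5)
Your proof is correct and takes essentially the route the paper itself indicates: the paper gives no written proof of Theorem~\ref{SVD}, merely remarking that it "may be proved by employing the SVD" with a citation, and your argument is exactly that derivation with the details supplied. In particular, the one genuinely delicate point --- that the SVD only diagonalizes $\rA^{2}$, so a degenerate singular-value block must be further split into the $+\sigma$ and $-\sigma$ eigenspaces of $\rA$ --- is handled correctly by your use of the orthogonal projectors $\tfrac{1}{2}\bigl(\rI_N \pm \rA/\sigma\bigr)$ on each $\rA$-invariant block $V_\sigma$, after which the rank and projector statements follow as you describe.
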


A consequence of the above theorem is that $\rA \in \mathcal{M}_1(\mathbb{M}_N(\mathbb{C}))$ is a Hermitian matrix if and only if $\rA = \lambda\, \ket{\Psi}\bra{\Psi}$ for some unit vector $\ket{\Psi} \in \mathbb{H}_N$ and $\lambda \in \mathbb{R}\setminus\{0\}.$ Clearly, \(\|A\|_{HS} = |\lambda|.\) 
In particular, the extremal points of \(\mathcal{S}(\mathbb{M}_N(\mathbb{C}))\) are given by
\[
\mathcal{S}_1(\mathbb{M}_N(\mathbb{C})) = \Bigl\{ \rho \in \mathcal{S}(\mathbb{M}_N(\mathbb{C})) \mid \rho = \ket{\Psi}\bra{\Psi},\; \|\Psi\| = 1 \Bigr\}.
\]
The elements of $\mathcal{S}_1(\mathbb{M}_N(\mathbb{C}))$ are called \emph{pure states}.

\subsubsection{Random Variables, Events, and Laws}

Next, we define a \emph{random variable} in this setting.

\begin{definition}
A matrix $\rA \in \mathbb{M}_N(\mathbb{C})$ is called a random variable in the algebraic probability space \((\mathbb{M}_N(\mathbb{C}), \rho)\) if $\rA$ is Hermitian, i.e., \(\rA = \rA^{\star}\).
\end{definition}

Since \(\rA\) is Hermitian, by Theorem~\ref{SVD} it has the spectral decomposition
\[
\rA = \sum_{\lambda \in \sigma(\rA)} \lambda\, \rP_\lambda,
\]
where
\[
\rP_\lambda = \sum_{\ket{u} \in \mathrm{col}\,\rU \cap \ker(\rA-\lambda \rI_N)} \ket{u}\bra{u}.
\]
This decomposition allows us to define the \emph{event} corresponding to the outcome \(\rA = x\).

\begin{definition}
Let \(\rA\) be a random variable in the algebraic probability space \((\mathbb{M}_N(\mathbb{C}), \rho)\). For a real number \(x\), define the event \(\{\rA = x\}\) as the random variable in the algebraic probability space \((\mathbb{M}_N(\mathbb{C}), \rho)\) given by
\[
\rP_{\{\rA = x\}} :=
\begin{cases}
\displaystyle \sum_{\ket{u} \in \mathrm{col}\,\rU \cap \ker(\rA-x \rI_N)} \ket{u}\bra{u}, & \text{if } x \in \sigma(\rA), \\[1ex]
\rO_N, & \text{otherwise.}
\end{cases}
\]
Similarly, the events \(\{\rA < x\}\) and \(\{\rA > x\}\) may be defined.
\end{definition}

\begin{remark}
The set of random variables \(\{\rP_{\{\rA = x\}}: x \in \sigma(\rA)\}\) in the algebraic probability space \((\mathbb{M}_N(\mathbb{C}), \rho)\) satisfy the orthogonality relation
\[
\rP_{\{\rA = x\}}\,\rP_{\{\rA = y\}} = \delta_{x,y}\,\rP_{\{\rA = x\}},
\]
analogous to the indicator functions in classical probability.
\end{remark}

\begin{definition}
The \emph{law of the random variable} \(\rA\) in the algebraic probability space \((\mathbb{M}_N(\mathbb{C}), \rho)\) is defined as the function
\[
\mathbb{P}_{\rho}(\rA = x) := (\rho, \rP_{\{\rA = x\}})_{HS} =
\begin{cases}
\displaystyle \sum_{\ket{u} \in \mathrm{col}\,\rU \cap \ker(\rA-x \rI_N)} \tr\bigl(\rho\,\ket{u}\bra{u}\bigr), & x \in \sigma(\rA), \\[1ex]
0, & \text{otherwise.}
\end{cases}
\]
It follows that
\[
\sum_{\lambda \in \sigma(\rA)} \mathbb{P}_{\rho}(\rA = \lambda) = 1.
\]
\end{definition}

\begin{remark}
In \cite{FranzPrivault2016} the law of a random variable $\rA$ in the algebraic probability space \((\mathbb{M}_N(\mathbb{C}), \rho)\) is represented by the measure
$$
\mathcal{L}_{\rho}(\rA) = \sum_{\lambda \in \sigma(\rA)} \tr(\rho \rP_{\{\rA = \lambda\}}) \delta_{\lambda}
=  \sum_{\lambda \in \sigma(\rA)} \langle\rho,\rP_{\{\rA = \lambda\}} \rangle_{HS} \delta_{\lambda}.
$$
Here $\delta_{\lambda}$ is a Dirac delta considered as a positive and finite Radon measure.
\end{remark}

To illustrate the concept of the law of a random variable in the algebraic probability framework, we now present two concrete examples. The first example considers a Bernoulli random variable, which provides an intuitive probabilistic interpretation of the outcomes of quantum measurement. The second example generalizes this notion to arbitrary random variables, demonstrating how the spectral decomposition of Hermitian matrices naturally extends classical probability distributions into the quantum domain.

\subsubsection{Example (Bernoulli random variable)}
Let \(\rA\) be a random variable in the algebraic probability space \((\mathbb{M}_N(\mathbb{C}), \rho)\), where  
\[
\rA = \ket{u}\bra{u} \in \mathcal{M}_1(\mathbb{M}_N(\mathbb{C}))
\]
and  
\[
\rho = \ket{\Psi}\bra{\Psi} \in \mathcal{S}_1(\mathbb{M}_N(\mathbb{C})),
\]
with \(\ket{\Psi}\) being a unit vector in \(\mathbb{H}_N\).  
Since the spectrum of \(\rA\) is \(\sigma(\rA) = \{0,1\}\), the corresponding random variables are given by:
\[
\rP_{\{\rA = 1\}} = \frac{\ket{u}\bra{u}}{\|\ket{u}\|^2}, \quad
\rP_{\{\rA = 0\}} = \rI_N - \frac{\ket{u}\bra{u}}{\|\ket{u}\|^2}.
\]
Here, \(\rP_{\{\rA = 1\}}\) represents the projection onto the subspace spanned by  \(\ket{u}\), while \(\rP_{\{\rA = 0\}}\) is the projection onto its orthogonal complement.
If \(\theta_{\Psi,u}\) denotes the angle between \(\ket{\Psi}\) and \(\ket{u}\), then
\[
\mathbb{P}_{\ket{\Psi}\bra{\Psi}}(\rA = 1) = \tr\!\Bigl(\ket{\Psi}\bra{\Psi}\,\frac{\ket{u}\bra{u}}{\|\ket{u}\|^2}\Bigr)
= \frac{|\bra{\Psi}\ket{u}|^2}{\|\ket{u}\|^2} = \cos^2\theta_{\Psi,u} = p,
\]
and
\[
\mathbb{P}_{\ket{\Psi}\bra{\Psi}}(\rA = 0) = 1 - p = \sin^2\theta_{\Psi,u}.
\]

\subsubsection{Example (General random variable)}
Let \(\rA\) be a random variable in the algebraic probability space \((\mathbb{M}_N(\mathbb{C}), \rho)\) with spectral decomposition
\(
\rA = \rU \rD \rU^{\star} = \sum_{\lambda \in \sigma(\rA)} \lambda\, \rP_\lambda,
\)
where for each \(\lambda\),
\[
\rP_\lambda = \sum_{\ket{u} \in \mathrm{col}\,\rU \cap \ker(\rA-\lambda \rI_N)} \ket{u}\bra{u}.
\]
Assume that \(\rho \in \mathcal{S}_r(\mathbb{M}_N(\mathbb{C}))\) for some \(1 \le r \le N\). Since \(\rho\) is Hermitian, Theorem~\ref{SVD} guarantees that there exists \(\rV \in \mathrm{U}(N)\) and a diagonal matrix $\Sigma$ such that
\(
\rho = \rV \Sigma \rV^{\star} = \sum_{p \in \sigma(\rho)} p\, \rP_{p},
\)
with
\[
\rP_{p} = \sum_{\ket{\Psi} \in \mathrm{col}\,\rV \cap \ker(\rho-p \rI_N)} \ket{\Psi}\bra{\Psi},
\]
and
\(
\sum_{p \in \sigma(\rho)\setminus\{0\}} p = 1.
\)
Then, for each \(\lambda \in \sigma(\rA)\), the law of the random variable \((\rA,\rho)\) is given by
\[
\mathbb{P}_{\rho}(\rA = \lambda)
= \sum_{\ket{u} \in \mathrm{col}\,\rU \cap \ker(\rA-\lambda \rI_N)}  \quad \sum_{p \in \sigma(\rho)} \quad\sum_{\ket{\Psi} \in \mathrm{col}\,\rV \cap \ker(\rho-p \rI_N)}  p\, |\bra{\Psi}\ket{u}|^2.
\]

An important property of the law of a random variable in the algebraic probability framework is its invariance under unitary transformations. This invariance ensures that the probabilistic structure of a quantum system remains unchanged when evolving under a unitary transformation, a fundamental principle in quantum mechanics. The following theorem formalizes this property, demonstrating that the law of a random variable remains invariant under conjugation by unitary operators.

\begin{theorem}
Let $\rA$ be a random variable in the algebraic probability space $(\mathbb{M}_N(\mathbb{C}), \rho).$ The for each $\rV \in \mathrm{U}(N)$ the law of the random variable $\rA$ in the algebraic probability space $(\mathbb{M}_N(\mathbb{C}),\rho)$ is the same as the law of the random variable $\rV\rA\rV^{\star}$ in the algebraic probability space $(\mathbb{M}_N(\mathbb{C}), \rV\rho\rV^{\star}).$ Furthermore, $\sigma(\rA) = \sigma(\rV\rA \rV^{\star})$ and $\mathbb{P}_{\rho}(\rA = x) = \mathbb{P}_{\rV\rho\rV^{\star}}(\rV\rA\rV^{\star} = x)$ for all $x \in \sigma(\rA).$
\end{theorem}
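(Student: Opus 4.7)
The plan is to reduce everything to three ingredients: (i) unitary conjugation preserves the Hermitian property and the density-matrix conditions, (ii) it preserves spectra and transports spectral projections covariantly, and (iii) the Hilbert--Schmidt pairing $\tr(\rho \rP_\lambda)$ used in the definition of the law is invariant under simultaneous conjugation. The steps are essentially algebraic manipulations; the only care needed is to verify that the spectral data for $\rV\rA\rV^{\star}$ is exactly the conjugate of the spectral data for $\rA$, so the definition of $\rP_{\{\rV\rA\rV^{\star} = x\}}$ indeed coincides with $\rV\, \rP_{\{\rA=x\}}\,\rV^{\star}$.

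First I would check well-posedness: $(\rV\rA\rV^{\star})^{\star} = \rV\rA^{\star}\rV^{\star} = \rV\rA\rV^{\star}$, so $\rV\rA\rV^{\star}$ is Hermitian and thus a random variable; similarly $\rV\rho\rV^{\star}$ is Hermitian, has trace $\tr(\rV\rho\rV^{\star}) = \tr(\rV^{\star}\rV\rho) = \tr(\rho) = 1$ by cyclicity, and is positive semidefinite since $\bra{x}\rV\rho\rV^{\star}\ket{x} = \bra{\rV^{\star}x}\rho\ket{\rV^{\star}x} \ge 0$. By Theorem~\ref{thm:state_density_matrix} it is a state. For the spectra, from $\rV\rA\rV^{\star} - \lambda \rI_N = \rV(\rA-\lambda\rI_N)\rV^{\star}$ and the invertibility of $\rV$, the matrix on the left is invertible if and only if $\rA - \lambda \rI_N$ is, giving $\sigma(\rA) = \sigma(\rV\rA\rV^{\star})$.

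Next I would transport eigenvectors: if $\rA\ket{u} = \lambda \ket{u}$ then $(\rV\rA\rV^{\star})\rV\ket{u} = \rV\rA\ket{u} = \lambda \rV\ket{u}$, and since $\rV$ is unitary it sends an orthonormal basis of $\ker(\rA - \lambda \rI_N)$ to an orthonormal basis of $\ker(\rV\rA\rV^{\star} - \lambda \rI_N)$. Writing $\rA = \rU\rD\rU^{\star}$ as in Theorem~\ref{SVD}, the unitary $\rV\rU$ diagonalizes $\rV\rA\rV^{\star}$ with the same diagonal $\rD$, and one reads off
\[
\rP_{\{\rV\rA\rV^{\star} = x\}} \;=\; \rV\, \rP_{\{\rA=x\}}\, \rV^{\star}
\]
directly from the definition (using $(\rV\ket{u})(\rV\ket{u})^{\star} = \rV\ket{u}\bra{u}\rV^{\star}$).

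Finally, combining these ingredients with cyclicity of the trace,
\[
\mathbb{P}_{\rV\rho\rV^{\star}}(\rV\rA\rV^{\star} = x)
= \tr\bigl(\rV\rho\rV^{\star}\cdot \rV\rP_{\{\rA=x\}}\rV^{\star}\bigr)
= \tr\bigl(\rho\, \rP_{\{\rA=x\}}\bigr)
= \mathbb{P}_{\rho}(\rA = x),
\]
for every $x \in \sigma(\rA) = \sigma(\rV\rA\rV^{\star})$, and the value is $0$ off the common spectrum. I do not expect any real obstacle; the subtle point worth stating carefully is the covariance $\rP_{\{\rV\rA\rV^{\star}=x\}} = \rV\rP_{\{\rA=x\}}\rV^{\star}$, since the definition in the excerpt refers to $\mathrm{col}\,\rU \cap \ker(\rA - x\rI_N)$ and one must verify that the analogous set for the conjugated matrix is the image under $\rV$ of this set.
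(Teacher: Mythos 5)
Your proposal is correct and follows essentially the same route as the paper's proof: both arguments hinge on the covariance $\rP_{\{\rV\rA\rV^{\star}=x\}} = \rV\,\rP_{\{\rA=x\}}\,\rV^{\star}$ obtained by transporting eigenvectors, combined with the invariance of the Hilbert--Schmidt pairing $\tr(\rho\,\rP_{\{\rA=x\}})$ under simultaneous conjugation. You are in fact somewhat more complete than the paper, since you explicitly verify that $\rV\rho\rV^{\star}$ is a state and that $\sigma(\rA)=\sigma(\rV\rA\rV^{\star})$, points the paper asserts but does not spell out.
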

\begin{proof}
First at all, observe that for any $\rV \in \mathrm{U}(N)$ we have
\begin{align*}
\mathbb{P}_{\rho}(\rA = x) = \langle \rho,\rP_{\{\rA = x\}} \rangle_{HS} = \langle \rV\rho\rV^{\star},\rV \rP_{\{\rA = x\}} \rV^{\star} \rangle_{HS}.
\end{align*}
Thus, the theorem follows if we show that $\rV \rP_{\{\rA = x\}} \rV^{\star} = \rP_{\{\rV\rA\rV^{\star} = x\}}$ for all $x \in \sigma(\rA).$ To prove it, let $\lambda \in \sigma(\rA)$ and $\ket{u} \in \mathrm{col}\,\rU \cap \ker(\rA-\lambda \rI_N).$ Then $\rA\ket{u} = \lambda \ket{u}$ and $\rV\ket{u} \in \ker(\rV\rA\rV^{\star}-\lambda \rI_N).$ Thus, we have 
\begin{align*}
  \rV \rP_{\{\rA = x\}} \rV^{\star} & = \rV \left( 
  \sum_{\ket{u} \in \mathrm{col}\,\rU \cap \ker(\rA-\lambda \rI_N)} \ket{u}\bra{u} \right) \rV^{\star} \\
  & = \sum_{\ket{u} \in \mathrm{col}\,\rU \cap \ker(\rA-\lambda \rI_N)} \rV\ket{u}\bra{\rV\ket{u}} \\  
  & = \sum_{\ket{z} \in \mathrm{col}\,\rV\rU \cap \ker(\rV\rA \rV^* -\lambda \rI_N)} \ket{z}\bra{z}
  = \rP_{\{\rV\rA\rV^{\star} = x\}}.
\end{align*}
This completes the proof.
\end{proof}

Now, with these fundamentals established, we next develop our unified quantum computing framework.

\section{An Algebraic Probability Framework for Universal Digital Quantum Computers}\label{ch_3}

In this section, we present a mathematical model for a universal digital quantum computer within the framework of algebraic probability. The model comprises two key components:

\begin{enumerate}
    \item \textbf{An \(n\)-qubit Quantum Processing Unit (QPU):} This serves as the fundamental quantum computational resource.
    \item \textbf{A dynamic state evolution mechanism:} Analogous to a Turing machine's tape, here it is realized through the action of the unitary group \(\mathrm{U}(N)\) on the space of quantum states.
\end{enumerate}

Formally, a quantum computation is defined as a finite sequence of unitary operations:
\[
    \rU \;=\; \rU_{\ell}\,\rU_{\ell-1}\,\cdots\,\rU_1 \;\in\; \mathrm{U}(N),
\]
where each \(\rU_k\) corresponds to a quantum gate. In practical implementations, these gates are typically \emph{local} unitaries that act on a small number of qubits, usually one, two, or three. Within our algebraic probability framework, a quantum gate is formally defined as a unitary operation on a qubit that preserves the probabilistic structure of states. This definition provides a rigorous mathematical foundation for constructing and analyzing quantum circuits. 
We conclude this section by establishing the formal properties of quantum gates within the unitary group \(\mathrm{U}(N)\), thereby laying the groundwork for the design and classification of quantum algorithms.

\subsection{A mathematical model of a Quantum Processor Unit}

To define the model of a quantum computer, we first introduce the concept of a \emph{quantum processor unit (QPU)} as a random variable in the algebraic probability space $(\mathbb{M}_2(\mathbb{C}),\rho),$ where $\rho$ is a pure state. 

\begin{definition}
A \emph{single-qubit quantum processor unit} is defined as a random variable $\rA$ in the algebraic probability space $(\mathbb{M}_2(\mathbb{C}),\rho),$ where $\rho \in \mathcal{S}_1(\mathbb{M}_2(\mathbb{C}))$ is a pure state. Thus, there exists a unitary vector $\ket{\Psi}\in \mathbb{H}_2$ such that $\rho = \ket{\Psi}\bra{\Psi}.$
\end{definition}

To give a more detailed description of a single-qubit QPU, we begin with an orthonormal basis for \(\mathbb{H}_2\):
\[
\mathcal{B}_2 := \Bigl\{\ket{0} = \begin{bmatrix} 1 \\ 0 \end{bmatrix}, \quad \ket{1} = \begin{bmatrix} 0 \\ 1 \end{bmatrix}\Bigr\}.
\]
The spectral decomposition of $\rA$ implies that
\[  
\rA = \rU (\lambda_0 \ket{0}\bra{0} + \lambda_1 \ket{1}\bra{1}) \rU^{\star},
\]
for some $\rU \in \mathrm{U}(2)$ and $\lambda_0,\lambda_1 \in \mathbb{R}$ with $\lambda_0 \le \lambda_1.$ The corresponding events $\{\rA = \lambda_z\}$ are
$\rP_{\{\rA = \lambda_z\}} = \ket{z}\bra{z}$ for
$z \in \{0,1\}.$ The law of the random variable $\rA$ is then given by
\[
\mathbb{P}_{\rho}(\rA = \lambda_z) = \tr\bigl(\rho\,\ket{z}\bra{z}\bigr) = |\bra{z}\ket{\Psi}|^2  \text{ for } z \in \{0,1\}.
\]

\bigskip

To extend this model to a QPU with multiple qubits, we recall the definition and some useful properties of Kronecker product. 

\bigskip

Let \(\rA \in \mathbb{C}^{M \times N}\) and \(\rB \in \mathbb{C}^{M' \times N'}\). The \emph{Kronecker product} \(\rA \otimes \rB \) is a matrix defined by
\[
\rA \otimes \rB =
\begin{pmatrix}
A_{1,1}\rB & A_{1,2}\rB & \dots & A_{1,N}\rB \\
A_{2,1}\rB & A_{2,2}\rB & \dots & A_{2,N}\rB \\
\vdots & \vdots & \ddots & \vdots \\
A_{M,1}\rB & A_{M,2}\rB & \dots & A_{M,N}\rB \\
\end{pmatrix} \in \mathbb{C}^{MM' \times NN'}.
\]
Some standard properties of the Kronecker product include:
\begin{enumerate}
  \item \(\rA \otimes (\rB \otimes \rC) = (\rA \otimes \rB) \otimes \rC\),
  \item \((\rA + \rB) \otimes \rC = (\rA \otimes \rC) + (\rB \otimes \rC)\),
  \item \(\rA \rB \otimes \rC \rD = (\rA \otimes \rC)(\rB \otimes \rD)\),
  \item \((\rA \otimes \rB)^{-1} = \rA^{-1} \otimes \rB^{-1}\),
  \item \((\rA \otimes \rB)^{\star} = \rA^{\star} \otimes \rB^{\star}\), and
  \item \(\tr(\rA \otimes \rB) = \tr(\rA) \tr(\rB).\)
  \item Assume that $\rA \in \mathbb{M}_N(\mathbb{C})$ and $\rB \in \mathbb{M}_M(\mathbb{C})$ are diagonal matrices. Then $\rA \otimes \rB$ is also diagonal.
\end{enumerate}
When the dimension of the Hilbert space is \(N = M^q\) for some \(M \ge 2\) and \(q \ge 2\), we can identify the tensor product space \(\mathbb{H}_{M^q}\) with the \(q\)-fold tensor product
\[
\mathbb{H}_{M^q} = \underbrace{\mathbb{H}_{M} \otimes \mathbb{H}_{M} \otimes \cdots \otimes \mathbb{H}_{M}}_{q \text{ times}} = \mathbb{H}_M^{\otimes q}.
\]
If \(\{\ket{e_1},\ldots,\ket{e_M}\}\) is an orthonormal basis of \(\mathbb{H}_M\), then an orthonormal basis of \(\mathbb{H}_{M^q}\) is given by
\[
\bigl\{\ket{e_{i_1}} \otimes \cdots \otimes \ket{e_{i_q}} : 1 \le i_k \le M,\; 1 \le k \le q \bigr\}.
\]
Since \(\mathbb{H}_{M^q}\) and its dual \((\mathbb{H}_{M^q})^{\star}\) form a Hilbert space, the associated matrix algebra is
\[
\mathbb{M}_{M^q}(\mathbb{C}) = \mathbb{H}_{M^q} \,\widehat{\otimes}\, (\mathbb{H}_{M^q})^{\star} = \mathbb{H}_M^{\otimes q} \,\widehat{\otimes}\, (\mathbb{H}_M^{\star})^{\otimes q}.
\]
It can be shown that \(\mathbb{M}_{M^q}(\mathbb{C})= \mathbb{M}_{M}(\mathbb{C})^{\otimes q}\); in other words, the algebra of \(M^q \times M^q\) matrices is linearly isomorphic to the tensor product of \(q\) copies of the algebra of \(M \times M\) matrices. In particular,
\begin{align*}
(\ket{e_{i_1}} \otimes \cdots \otimes \ket{e_{i_q}})\,\widehat{\otimes}\,(\bra{e_{j_1}} \otimes \cdots \otimes \bra{e_{j_q}})
&:= (\ket{e_{i_1}} \otimes \cdots \otimes \ket{e_{i_q}})
(\bra{e_{j_1}} \otimes \cdots \otimes \bra{e_{j_q}})\\[1ex]
&= \ket{e_{i_1}}\bra{e_{j_1}} \otimes \cdots \otimes \ket{e_{i_q}}\bra{e_{j_q}}\\[1ex]
&= (\ket{e_{i_1}}\,\widehat{\otimes}\,\bra{e_{j_1}}) \otimes \cdots \otimes (\ket{e_{i_q}}\,\widehat{\otimes}\,\bra{e_{j_q}}).
\end{align*}

\bigskip

For the remainder of this paper, we assume that the Hilbert space $\mathbb{H}_N$ has dimension \(N = 2^n\) for some \(n \in \mathbb{N}\). 

\bigskip

Now, consider $\rA_1,\ldots, \rA_n$  $n$-single-qubit QPUs, that is, $n$-random variables in the algebraic probability space $(\mathbb{M}_2(\mathbb{C}),\rho).$ where $\rho = \ket{\Psi}\bra{\Psi} \in \mathcal{S}_1(\mathbb{M}_2(\mathbb{C}))$ for some unit vector $\ket{\Psi} \in \mathbb{H}_2.$ Assume that the spectral decomposition of each $\rA_k$ is given by
\begin{align}\label{QPU}
\rA_k = \rU_k (\lambda_{0}^{(k)} \ket{0}\bra{0} + \lambda_{1}^{(k)} \ket{1}\bra{1}) \rU_k^{\star},  
\end{align}
where $\rU_k \in \mathrm{U}(2)$ and $\lambda_{0}^{(k)},\lambda_{1}^{(k)} \in \mathbb{R}$ with $\lambda_{0}^{(k)} \le \lambda_{1}^{(k)}.$

\begin{definition}
an $n$-qubit QPU is defined by the Kronecker product of $n$-single qubit QPUs, that is, $\rA_1 \otimes \cdots \otimes \rA_n$ where  $\rA_1,\ldots, \rA_n$  are $n$-single-qubit QPUs. We remark that, given 
 $\rA_1,\ldots, \rA_n$ random variables in the algebraic probability space  $(\mathbb{M}_2(\mathbb{C}),\rho),$where $\rho = \ket{\Psi}\bra{\Psi} \in \mathcal{S}_1(\mathbb{M}_2(\mathbb{C}))$ for some unit vector $\ket{\Psi} \in \mathbb{H}_2,$ then $\rA_1 \otimes \cdots \otimes \rA_n$ is a random variable in the algebraic probability space $(\mathbb{M}_{2^n}(\mathbb{C}),\rho^{\otimes n}),$ where $\rho^{\otimes n} = (\ket{\Psi}\bra{\Psi})^{\otimes n} = \ket{\Psi^{\otimes n}}\bra{\Psi^{\otimes n}}  \in \mathcal{S}_1(\mathbb{M}_{2^n}(\mathbb{C})).$ 
\end{definition}

Let $\rA:= \rA_1 \otimes \cdots \otimes \rA_n$ be an $n$-qubit QPU. By using \eqref{QPU}, we have
\begin{align*}
\rA & = \bigotimes_{k=1}^n (\rU_k \rA_k \rU_k^{\star}) \\
& =\bigotimes_{k=1}^n \rU_k \bigotimes_{k=1}^n \left(\lambda_{0}^{(k)} \ket{0}\bra{0} + \lambda_{1}^{(k)} \ket{1}\bra{1}\right) \bigotimes_{k=1}^n \rU_k^{\star} \\ 
& = \bigotimes_{k=1}^n \rU_k \left(\sum_{z_1\in \{0,1\}} \cdots \sum_{z_n \in \{0,1\}}\lambda_{z_1}^{(1)} \cdots \lambda_{z_n}^{(n)} \ket{z_1}\bra{z_1} \otimes \cdots \otimes \ket{z_n}\bra{z_n} \right) \bigotimes_{k=1}^n \rU_k^{\star} \\
& = \bigotimes_{k=1}^n \rU_k \left(\sum_{z_1\in \{0,1\}} \cdots \sum_{z_n \in \{0,1\}}\lambda_{z_1}^{(1)} \cdots \lambda_{z_n}^{(n)} \ket{z_1\cdots z_n}\bra{z_1 \cdots z_{n}}  \right) \left(\bigotimes_{k=1}^n \rU_k\right)^{\star},
\end{align*}
where 
$$
\ket{z_1\cdots z_n}:=\ket{z_1} \otimes \cdots \otimes \ket{z_n} 
$$
for $z_1\cdots z_n\in \{0,1\}^n.$ 

\bigskip

The above representation of the $n$-qubits QPU $\rA,$ allows to introduce the following notation. Identifying \(\{0,1\}\) with \(\mathbb{Z}_2\), the set \(\{0,1\}^n\) corresponds to \(\mathbb{Z}_2^n\). Next, we consider the natural bijection
\[
b_n : \mathbb{Z}_{2^n} \to \mathbb{Z}_2^n,
\]
given by
\[
b_n(k) = b_n\left(\sum_{i=1}^{n} z_i\,2^{i-1}\right) = z_1 z_2 \cdots z_n.
\]
Thus, to each integer \(k \in \mathbb{Z}_{2^n}\) we can associate a basis vector in \(\mathbb{H}_{2^n}\) by
\[
\ket{b_n(k)} = \ket{z_1 z_2 \cdots z_n} = \ket{z_1} \otimes \ket{z_2} \otimes \cdots \otimes \ket{z_n}.
\]
The canonical basis of \(\mathbb{H}_{2^n}\) is then
\(
\mathcal{B}_N := \{\ket{b_n(k)} : k \in \mathbb{Z}_{2^n}\} = \{\ket{\mathbf{z}} : \mathbf{z} \in \mathbb{Z}_2^n\}.
\)

\bigskip

Now, we can represent $n$-qubits QPU as
\begin{align*}
  \rA = \rA_1 \otimes \cdots \otimes \rA_n = \bigotimes_{k=1}^n \rU_k \left(\sum_{k \in \mathbb{Z}_N} \lambda_{b_n(k)}  \ket{b_{n}(k)} \bra{b_n(k)} \right) \bigotimes_{k=1}^n \rU_k^{\star},
\end{align*}
where 
$\lambda_{b_n(k)} = \lambda_{z_1\cdots z_n}:=\lambda_{z_1}^{(1)} \cdots \lambda_{z_n}^{(n)}$ for each $k \in \mathbb{Z}_{2^n}.$ In consequence, the event \(\{\rA = \lambda_{b_n(k)}\}\) corresponds to the random variable \(\rP_{\{\rA = \lambda_{b_n(k)}\}} = \ket{b_n(k)}\bra{b_n(k)}\) in the algebraic probability space $(\mathbb{M}_{2^n}(\mathbb{C}),\rho^{\otimes n}),$ which is a projector onto the linear space generated by the basis vector \(\ket{b_n(k)}\). Moreover, the law of the random variable \(\rA\) in the algebraic probability space $(\mathbb{M}_{2^n}(\mathbb{C}),\rho^{\otimes n})$ is given by
\begin{align*}
\mathbb{P}_{\rho^{\otimes n}}(\rA = \lambda_{b_n(k)}) & = \tr\bigl(\rho^{\otimes n}\,\ket{b_n(k)}\bra{b_n(k)}\bigr)  \\ 
& = |\bra{b_n(k)}\ket{\Psi^{\otimes n}}|^2 = \prod_{j=1}^n |\bra{z_j}\ket{\Psi}|^2.
\end{align*}
This completes the extension form a single-qubit QPU to an $n$-qubit QPU. From now on, to simplify notation  we will write $\{\rA = \lambda_{b_n(k)}\}$ as $\{\rA = k \}$ for each $k \in \mathbb{Z}_{2^n}.$

\subsection{Dynamics of State Evolution in Digital Quantum Computers}\label{ch:dynamic}

To develop our model of a \emph{Universal Digital Quantum Computer (UDQC)}, we introduce a discrete unitary-evolution framework. Specifically, we define the map
\begin{equation}\label{eq:QPU}
\mathcal{N} : \mathrm{U}(N) \times \mathcal{S}(\mathbb{M}_N(\mathbb{C})) \,\to\, \mathcal{S}(\mathbb{M}_N(\mathbb{C})), 
\quad 
(\rU, \rho) \;\mapsto\; \mathcal{N}_\rho(\rU) \;:=\; \rU\,\rho\,\rU^{\star},
\end{equation}
which is derived from the following theorem.

\bigskip

\begin{theorem}[Liouville-von Neumann Equation]\label{Liouville-vonNeumann}
Let \(\rH \in \mathbb{M}_N(\mathbb{C})\) be Hermitian, and consider the Liouville--von Neumann equation on 
\(\bigl(\mathbb{M}_N(\mathbb{C}),\|\cdot\|_{HS}\bigr)\):
\begin{align}\label{dynamics}
\frac{d\rho}{dt} \;=\; -\,i\,[\rH,\,\rho], \quad 
\rho(0) \;=\; \rho_0, 
\end{align}
where \([\rH,\,\rho] = \rH\rho - \rho\rH.\)
Then there is a unique solution given by
\[
\rho(t) \;=\; e^{-\,i\,t\,\rH}\,\rho_0\,e^{\,i\,t\,\rH}.
\]
Moreover, if \(\rho_0\) belongs to the set \(\mathcal{S}_r\bigl(\mathbb{M}_N(\mathbb{C})\bigr)\)
for some \(1 \le r \le N\), then \(\rho(t)\) remains in \(\mathcal{S}_r\bigl(\mathbb{M}_N(\mathbb{C})\bigr)\)
for all real \(t\).
\end{theorem}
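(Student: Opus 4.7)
The plan is to split the argument into three parts: verifying by direct substitution that the proposed formula solves the initial value problem, establishing uniqueness via standard linear ODE theory on the finite-dimensional Banach space $(\mathbb{M}_N(\mathbb{C}), \|\cdot\|_{HS})$, and then checking that conjugation by $e^{-it\rH}$ preserves each of the three defining conditions of a rank-$r$ density matrix given by Theorem~\ref{thm:state_density_matrix}.

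For existence, I would first note that since $\rH$ is Hermitian, the matrix $\rU(t) := e^{-it\rH}$ is unitary: the power series definition yields $(e^{-it\rH})^{\star} = e^{it\rH^{\star}} = e^{it\rH}$, which is the inverse of $\rU(t)$ because $it\rH$ and $-it\rH$ commute. Because the exponential series converges absolutely in the Hilbert--Schmidt norm, it can be differentiated term-by-term, giving $\frac{d}{dt}\rU(t) = -i\rH\,\rU(t) = -i\,\rU(t)\,\rH$. Applying the product rule to $\rho(t) := \rU(t)\rho_0\rU(t)^{\star}$ then yields $\frac{d\rho}{dt} = -i\rH\rho(t) + i\rho(t)\rH = -i[\rH,\rho(t)]$, and $\rho(0) = \rho_0$ is immediate from $\rU(0) = \rI_N$.

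For uniqueness, the map $\mathcal{L} : \rho \mapsto -i[\rH,\rho]$ is a bounded linear operator on the finite-dimensional Banach space $(\mathbb{M}_N(\mathbb{C}),\|\cdot\|_{HS})$, so \eqref{dynamics} is a constant-coefficient linear ODE in a Banach space and the Picard--Lindelöf theorem gives a unique global solution. A self-contained alternative: given two solutions $\rho_1,\rho_2$, set $\Delta(t) = \rho_1(t)-\rho_2(t)$ with $\Delta(0)=\rO_N$, and apply Grönwall's inequality to $\|\Delta(t)\|_{HS}$ to conclude $\Delta \equiv \rO_N$.

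For rank preservation, by Theorem~\ref{thm:state_density_matrix} I need to check that $\rho(t) = \rU(t)\rho_0\rU(t)^{\star}$ is Hermitian, has nonnegative eigenvalues and unit trace, and the same rank as $\rho_0$. Hermiticity follows from $\rho(t)^{\star} = \rU(t)\rho_0^{\star}\rU(t)^{\star} = \rho(t)$; cyclicity of the trace gives $\tr(\rho(t)) = \tr(\rU(t)^{\star}\rU(t)\rho_0) = \tr(\rho_0) = 1$; and since $\rU(t)$ is invertible, the map $\rho_0 \mapsto \rU(t)\rho_0\rU(t)^{\star}$ is a similarity transformation, which preserves both the spectrum (so eigenvalues remain nonnegative) and the rank. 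The only genuinely technical step is confirming that the matrix exponential is well-defined in the Hilbert--Schmidt norm and that term-by-term differentiation is valid, but this is standard in finite dimensions; the main conceptual content is the observation that unitary conjugation implements a similarity transformation, which simultaneously preserves Hermiticity, the spectrum, the trace, and the rank.
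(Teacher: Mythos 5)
Your proposal is correct and follows essentially the same route as the paper: verify the formula by differentiation, invoke boundedness and linearity of the commutator map on the finite-dimensional Hilbert space $(\mathbb{M}_N(\mathbb{C}),\|\cdot\|_{HS})$ for existence and uniqueness, and observe that unitary conjugation preserves rank (and the other density-matrix properties). Your write-up merely spells out in more detail the invariance of Hermiticity, spectrum, and trace, which the paper leaves as immediate.
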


\begin{proof}
See Appendix~\ref{Appendix_1}.
\end{proof}

\smallskip

The solution \(\rho(t) = e^{-\,i\,t\,\rH}\,\rho_0\,e^{\,i\,t\,\rH}\) to \eqref{dynamics} evolves the quantum state according to the Liouville--von Neumann equation. This evolution is unitary and preserves the probabilistic structure of the state, making it a natural model for quantum dynamics.

\bigskip

In the context of complex \(N \times N\) matrices, a matrix \(\mathsf{G}\) is called 
\emph{skew-Hermitian} if it satisfies
\(
\mathsf{G}^{\star} \;=\; -\,\mathsf{G}.
\) A fundamental fact is that the matrix exponential of
every skew-Hermitian matrix \(\mathsf{G}\) is  a 
unitary matrix:
\(\rU \;=\; e^{\mathsf{G}}.\) Moreover, this correspondence is essentially \emph{onto}: every unitary matrix \(\rU \in \mathrm{U}(N)\) 
can be written in the form \(\rU = e^{\mathsf{G}}\) for some skew-Hermitian \(\mathsf{G}\). In other words,
\[
\mathrm{U}(N) \;=\; 
\bigl\{\, e^{\mathsf{G}} \,\bigl|\; \mathsf{G}^{\star} = -\,\mathsf{G} \bigr\}.
\]

\bigskip

We are now in a position to define the map \(\mathcal{N}\) precisely. Take \((\rU, \rho) \in \mathrm{U}(N) \times \mathcal{S}(\mathbb{M}_N(\mathbb{C}))\). Since \(\rU \in \mathrm{U}(N)\), there exists a skew-Hermitian matrix \(\mathsf{G}\) such that \(e^{\mathsf{G}} = \rU.\) Setting \(\rH = i\,\mathsf{G}\) makes \(\rH\) Hermitian. By Theorem~\ref{Liouville-vonNeumann}, the unique solution of \eqref{dynamics} with \(\rho(0) = \rho_0\) is 
\[
\rho(t) \;=\; e^{-\,i\,t\,\rH}\,\rho_0\,e^{\,i\,t\,\rH}.
\]
Evaluating at \(t = 1\), we obtain
\[
\rho(1) \;=\; e^{-\,i\,\rH}\,\rho_0\,e^{\,i\,\rH}
\;=\;
e^{\,\mathsf{G}}\,\rho_0\,e^{-\,\mathsf{G}}
\;=\;
\rU \,\rho_0\, \rU^{\star}.
\]
Hence we define 
\[
\mathcal{N}(\rU,\rho) \;:=\; \rho(1) 
\;=\; \rU \,\rho_0\, \rU^{\star},
\]
and it follows that \(\mathcal{N}\) is well-defined and preserves the rank of quantum states.

\begin{remark}
The map \(\mathcal{N}\) can be viewed as a discrete dynamical system, where the unitary group \(\mathrm{U}(N)\) acts on the manifold \(\mathcal{S}(\mathbb{M}_N(\mathbb{C}))\). Furthermore, each subset \(\mathcal{S}_r(\mathbb{M}_N(\mathbb{C}))\) is invariant under this action.
\end{remark}

Consider the \(n\)-qubit QPU given by \(\rA := \rA_1 \otimes \cdots \otimes \rA_n\). Each \(\rA_k\) for \(1 \le k \le n\) is treated as a random variable in the algebraic probability space \(\bigl(\mathbb{M}_2(\mathbf{C}), \rho_0\bigr)\), where \(\rho_0 = \ket{0}\bra{0}\). Consequently, the tensor product \(\rA_1 \otimes \cdots \otimes \rA_n\) is a random variable in the enlarged algebraic probability space \(\bigl(\mathbb{M}_{2^n}(\mathbf{C}), \rho_0^{\otimes n}\bigr)\). Here,
\[
\rho_0^{\otimes n} \;=\; \ket{0^{\otimes n}}\bra{0^{\otimes n}} 
\;=\; \ket{b_n(0)}\bra{b_n(0)}.
\]
For each \(\rU \in \mathrm{U}(2^n)\), the map
\[
\mathcal{N}_{\rho_0^{\otimes n}}(\rU) 
\;=\; \rU \,\rho_0^{\otimes n} \,\rU^{\star}
\]
produces a new pure state in \(\mathbb{M}_{2^n}(\mathbf{C})\). Hence, one may regard 
\(\rA_1 \otimes \cdots \otimes \rA_n\) as a random variable in the algebraic probability space 
\(\bigl(\mathbb{M}_{2^n}(\mathbf{C}), \mathcal{N}_{\rho_0^{\otimes n}}(\rU)\bigr)\).

\subsection{A Universal Digital Quantum Computer}

We now define an \emph{universal digital quantum computer} within the algebra \(\mathbb{M}_N(\mathbb{C})\).

\begin{definition}
    A \emph{universal digital quantum computer} in \(\mathbb{M}_N(\mathbb{C})\) is a triple \((\rA, \rho_0, \mathcal{N})\) where:
    \begin{enumerate}
      \item[(a)] \(\rA = \rA_1 \otimes \cdots \otimes \rA_n\) is an $n$-qubit QPU with \(\rA\) being a Hermitian matrix of the form
      \[
      \rA = \rU \left(\sum_{k \in \mathbb{Z}_N} \lambda_{b_n(k)}\, \ket{b_n(k)}\bra{b_n(k)}\right) \rU^{\star},
      \]
      for some unitary matrix  $\rU \in \mathrm{U}(N),$ 
      \item[(b)]\(\rho_0 = \ket{b_n(0)}\bra{b_n(0)}\) is the fixed initial pure state, and
      \item[(c)] \(\mathcal{N} : \mathrm{U}(N) \times \{\rho_0\} \to \mathcal{S}_1(\mathbb{M}_N(\mathbb{C}))\) is the map defined in \eqref{eq:QPU}.
    \end{enumerate}
    \end{definition}

We now describe a \emph{quantum computational procedure} in a digital quantum computer.

\bigskip

\noindent\textbf{Quantum Computational Procedure:}
Given a universal digital quantum computer \(\bigl(\rA, \rho_0, \mathcal{N}\bigr)\) in \(\mathbb{M}_N(\mathbb{C})\) 
and a unitary matrix \(\rU\), the quantum computational procedure is composed of two steps:

\begin{enumerate}
  \item[\textbf{(a)}] \emph{Quantum Circuit Process:}  
  A quantum state evolution applies \(\rU\) to the initial state \(\rho_0\) via
  \[
  \mathcal{N}_{\rho_0}(\rU) 
  \;=\; \rU\,\rho_0\,\rU^{\star} 
  \;=\; \rho,
  \]
  yielding the random variable \(\rA\) in the output algebraic probability space 
  \(\bigl(\mathbb{M}_N(\mathbb{C}), \rho\bigr)\).

  \item[\textbf{(b)}] \emph{Measurement Process:}  
  A measurement of \(\rA\) in the output space \(\bigl(\mathbb{M}_N(\mathbb{C}), \rho\bigr)\) reveals 
  the outcome \(k \in \mathbb{Z}_N\) with probability \(\mathbb{P}_{\rho}\bigl(\rA = k\bigr)\). 
  This is summarized by the set
  \[
  \bigl\{(k,\, \mathbb{P}_{\rho}(\rA = k)) \in \mathbb{Z}_N \times [0,1] 
  \;\bigl|\; k \in \mathbb{Z}_N \bigr\}.
  \]
\end{enumerate}

\subsubsection{Examples}

\textbf{Example 1.} Let \((\rA, \rho_0, \mathcal{N})\) in \(\mathbb{M}_N(\mathbb{C})\) be an universal digital quantum computer in \(\mathbb{M}_N(\mathbb{C})\). Consider \(\rU = \rI_N\), then
\[
\mathcal{N}_{\rho_0}(\rI_N) = \rho_0 = \ket{b_n(0)}\bra{b_n(0)}.
\]
The law of the random variable \(\rA\) in the output the algebraic probability space \((\mathbb{M}_N(\mathbb{C}), \rho_0)\) is
\[
\mathbb{P}_{\rho_0}(\rA = k) = |\bra{b_n(0)}\ket{b_n(k)}|^2 = \delta_{0,k},
\]
for all \(k \in \mathbb{Z}_N\).

\bigskip

\noindent \textbf{Example 2 (Unitary matrix for a specific task).}  
Let \((\rA, \rho_0, \mathcal{N})\) be an universal digital quantum computer in \(\mathbb{M}_{2^3}(\mathbb{C})\). Suppose we wish to identify the numbers in \(\mathbb{Z}_8 = \{0,1,2,3,4,5,6,7\}\) that are powers of two. In other words, we wish to find the integers \(k \in \mathbb{Z}_8\) such that \(k = 2^m\) for \(m = 0, 1, 2\). The pure state that encodes the solution is given by $\rho = \ket{\Psi}\bra{\Psi}$ with
\[
\ket{\Psi} = \frac{1}{\sqrt{3}} \Bigl(\ket{b_3(1)} + \ket{b_3(2)} + \ket{b_3(4)}\Bigr) = \frac{1}{\sqrt{3}}\Bigl(\ket{100} + \ket{010} + \ket{001}\Bigr).
\]
The measurement process provides the law of the random variable \(\rA\) in the output the algebraic probability space \((\mathbb{M}_N(\mathbb{C}), \rho)\) that yields
\[
\mathbb{P}_{\rho}(\rA = k) = \tr\!\Bigl(\ket{\Psi}\bra{\Psi}\ket{b_3(k)}\bra{b_3(k)}\Bigr) = \frac{1}{3}\Bigl(\delta_{k,1} + \delta_{k,2} + \delta_{k,4}\Bigr),
\]
for each \(k \in \mathbb{Z}_8\). Hence, the outcome is the set \(\{1,2,4\}\) with full probability, demonstrating the effectiveness of the constructed pure state. The challenge, then, is to design a unitary matrix (i.e., an oracle or quantum algorithm) \(\rU \in \mathrm{U}(N)\) that produces the state \(\rho = \mathcal{N}_{\rho_0}(\rU)\) starting from the initial state \(\rho_0 = \ket{b_3(0)}\bra{b_3(0)}\).

\bigskip

From the above example we learn that unitary matrices are the quantum equivalent of classical computational algorithms. In this context, we need to introduce a notion of elementary computational unit in our quantum computational framework. Similar to the definition of QPU as a cluster of single qubits joined by the help of the Kronecker product, we assume that a quantum elementary operation corresponds to the action over a single qubit, by means the map defined in \eqref{eq:QPU}.

\subsection{Elementary Quantum Gates and Quantum Circuits}

We now introduce elementary quantum gates as the fundamental building blocks of quantum computation and formally define quantum circuits, which represent oracles or quantum algorithms within \(\mathrm{U}(N)\).

\bigskip

To achieve this, we first recall that a function \( f : \mathrm{U}(2) \to \mathrm{U}(N) \) is a group homomorphism if it satisfies the condition  
\[
f(\rU \rV) = f(\rU) f(\rV) \quad \text{for all } \rU, \rV \in \mathrm{U}(2).
\]
A homomorphism is called a monomorphism if it is injective. We denote by \(\mathrm{Emb}(\mathrm{U}(2),\mathrm{U}(N))\) the set of such group monomorphisms.
Before proceeding, we formally define the fundamental building blocks of quantum circuits: the elementary quantum gates.

\begin{definition}
A matrix \(\rU \in \mathrm{U}(N)\) is called an \emph{elementary quantum gate} if there exists a pair \((\rV, \mathfrak{i}) \in \mathrm{U}(2) \times \mathrm{Emb}(\mathrm{U}(2),\mathrm{U}(N))\) such that \(\rU = \mathfrak{i}(\rV)\). The set of all elementary quantum gates in \(\mathrm{U}(N)\) is denoted by \(\mathrm{QG}(N)\).
\end{definition}

From the above definition, it follows that \(\mathrm{QG}(2) = \mathrm{U}(2)\). 
The set of elementary quantum gates can be viewed as a \emph{dictionary}. 
More generally, a subset \(\mathcal{D} \subset \mathrm{U}(N)\) is called a dictionary 
if it is nonempty and satisfies the property
\[
\mathcal{D}^{\star} = \{ \rU^{\star} : \rU \in \mathcal{D} \} = \mathcal{D}.
\]

\begin{remark}
It is straightforward to verify that if \(\rU \in \mathrm{QG}(N)\), then \(\rU^{\star} \in \mathrm{QG}(N)\); hence, \(\mathrm{QG}(N)\) forms a dictionary in \(\mathrm{U}(N)\).
\end{remark}

For a given dictionary \(\mathcal{D}\), we denote by \(\langle \mathcal{D} \rangle\) the group it generates:
\[
\langle \mathcal{D} \rangle = \bigcup_{k \geq 0} \mathcal{D}^{(k)}, \quad \text{where } \mathcal{D}^{(0)} = \{\rI_N\}, \quad \mathcal{D}^{(k)} = \{\rU_1 \rU_2 \cdots \rU_k : \rU_i \in \mathcal{D} \}.
\]
A dictionary \(\mathcal{D}\) is said to be \emph{universal} if \(\langle \mathcal{D} \rangle = \mathrm{U}(N)\).

\bigskip

In a quantum computer with \(n\) qubits, a wire represents the action of \(\mathrm{U}(2)\) on a particular quantum processing unit (QPU), namely \(\rA_j\) for \(1 \leq j \leq n\). Each wire can be represented by elements of \(\mathrm{Emb}(\mathrm{U}(2),\mathrm{U}(N))\). For \(1 \leq j \leq n\), we define the mapping
\[
w_j^{(n)} : \mathrm{U}(2) \to \mathrm{U}(2^n), \quad \rU \mapsto w_j^{(n)}(\rU) := \rI_2^{\otimes (j-1)} \otimes \rU \otimes \rI_2^{\otimes (n-j)}.
\]
The action of an elementary quantum gate \(w_j^{(n)}(\rU)\) on the initial state of a universal digital quantum computer in \(\mathbb{M}_N(\mathbb{C})\) is given by
\begin{align*}
\mathcal{N}_{\ket{b_n(0)}\bra{b_n(0)}}(w_j^{(n)}(\rU))
& = \ket{b_{j-1}(0)} \bra{b_{j-1}(0)} \otimes \rU \ket{0}\bra{0} \rU^{\star} \otimes \ket{b_{n-j}(0)} \bra{b_{n-j}(0)} \\ 
& = \ket{b_{j-1}(0)} \bra{b_{j-1}(0)} \otimes \mathcal{N}_{\ket{0}\bra{0}}(\rU) \otimes \ket{b_{n-j}(0)} \bra{b_{n-j}(0)}.
\end{align*}
Thus, this corresponds to the action of \(\rU \in \mathrm{U}(2)\) on the \(j\)th initial state. 

\bigskip

To conclude, we define the set
\[
\mathcal{W}_N := \{w_j^{(n)}(\rV) : \rV \in \mathrm{U}(2),\, 1 \leq j \leq n\} \subset \mathrm{QN}(N),
\]
which also forms a dictionary that generates the group
\[
\langle \mathcal{W}_N \rangle := \mathrm{U}(2)^{\otimes n}.
\]
Although \(\mathrm{U}(2)^{\otimes n} \subset \mathrm{U}(N)\), the set \(\mathcal{W}_N\) is not a universal dictionary. Since \(\mathrm{U}(2)^{\otimes n}\) is a subgroup of the group generated by all quantum gates, i.e.,
\[
\mathrm{U}(2)^{\otimes n} \subset \langle \mathrm{QG}(N) \rangle,
\]
a natural question arises: is \(\mathrm{QG}(N)\) itself universal in \(\mathrm{U}(N)\)? The next result provides a positive answer to this question.

\begin{theorem}\label{thm:universal_dictionary}
The set of quantum gates \(\mathrm{QG}(N)\) forms a universal dictionary for the unitary group \(\mathrm{U}(N)\); that is, for every \(\rU \in \mathrm{U}(N)\), there exists an integer \(m = \frac{N(N-1)}{2}\) such that \(\rU\) can be expressed as a product of \(m\) elementary quantum gates.
\end{theorem}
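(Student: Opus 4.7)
The plan is to prove this by the standard two-level unitary decomposition (a Givens-like reduction), which cleanly gives the bound $N(N-1)/2$. First I would verify that every two-level unitary acting on a coordinate pair belongs to $\mathrm{QG}(N)$. For each pair $(i,j)$ with $1 \le i < j \le N$, define $\mathfrak{i}_{i,j} : \mathrm{U}(2) \to \mathrm{U}(N)$ by sending $\rV \in \mathrm{U}(2)$ to the unitary that acts as $\rV$ on $\mathrm{span}\{\ket{b_n(i-1)},\ket{b_n(j-1)}\}$ and as the identity on its orthogonal complement. This map is clearly a group monomorphism (injectivity is immediate; the homomorphism property follows because both factors act as $\rI$ outside the two-dimensional subspace, so block-multiplication reduces to multiplication inside $\mathrm{U}(2)$). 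Hence $\mathfrak{i}_{i,j}(\rV) \in \mathrm{QG}(N)$ for every $\rV \in \mathrm{U}(2)$.

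Next I would carry out the Givens-style column reduction. Given $\rU \in \mathrm{U}(N)$, write its first column as $(u_{11},\dots,u_{N1})^{T}$. I would apply a sequence of two-level unitaries $\mathfrak{i}_{1,k}(\rV_k^{\star})$ for $k = 2,\dots,N$, where each $\rV_k \in \mathrm{U}(2)$ is chosen so that $\rV_k^{\star}$ acts on the pair of entries in positions $1$ and $k$ by zeroing out the $k$-th entry and placing a nonnegative real number in position $1$. After these $N-1$ gates, the first column of the transformed matrix equals $\ket{b_n(0)}$, so by unitarity the first row is $\bra{b_n(0)}$ as well. The matrix thus block-decomposes as $\rI_1 \oplus \rU^{(1)}$ with $\rU^{(1)} \in \mathrm{U}(N-1)$. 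I would then iterate: on step $s$ (for $s = 1,\dots,N-1$) reduce the $s$-th column of the current matrix using $N-s$ two-level unitaries on coordinate pairs $(s, s+1), (s, s+2), \dots, (s, N)$.

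After the full reduction I would obtain the identity, having used a total of
\[
\sum_{s=1}^{N-1}(N-s) \;=\; \frac{N(N-1)}{2}
\]
elementary quantum gates $\rW_1,\dots,\rW_m$ with $m = N(N-1)/2$, where $\rW_m\cdots \rW_1\,\rU = \rI_N$. Using that $\mathrm{QG}(N)$ is closed under taking adjoints (so $\rW_j^{\star} \in \mathrm{QG}(N)$), I would conclude
\[
\rU \;=\; \rW_1^{\star}\,\rW_2^{\star}\cdots \rW_m^{\star},
\]
a product of exactly $m = N(N-1)/2$ elementary quantum gates. If at some step the entry to be eliminated is already zero (so no rotation is needed), I would simply insert the identity gate $\mathfrak{i}_{1,2}(\rI_2) = \rI_N$, which lies in $\mathrm{QG}(N)$, to preserve the exact count.

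The main obstacle is not the reduction itself (which is routine linear algebra) but the careful bookkeeping ensuring that every $2\times 2$ unitary we deploy truly arises as $\mathfrak{i}(\rV)$ for an embedding in the precise sense of the paper, and that the subsequent steps — which operate on the block $\rU^{(s)} \in \mathrm{U}(N-s)$ — can still be realized as embeddings into $\mathrm{U}(N)$ rather than into $\mathrm{U}(N-s)$. This is handled by noting that the embeddings $\mathfrak{i}_{i,j}$ are defined for \emph{all} coordinate pairs in $\mathrm{U}(N)$, so the reduced blocks never require leaving $\mathrm{U}(N)$: the gates used at later steps act on coordinates $(s, s'),$ $s' > s$, while keeping the already-reduced coordinates $1,\dots,s-1$ fixed, which is exactly what the embeddings $\mathfrak{i}_{s,s'}$ provide.
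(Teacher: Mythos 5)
Your proposal is correct and follows essentially the same route as the paper: both arguments realize two-level (Givens-type) unitaries on coordinate pairs as embeddings of $\mathrm{U}(2)$ into $\mathrm{U}(N)$, reduce one column at a time to a standard basis vector using $N-s$ such gates at step $s$, and obtain the count $\sum_{s=1}^{N-1}(N-s)=\frac{N(N-1)}{2}$ before inverting the product via closure of $\mathrm{QG}(N)$ under adjoints. The paper merely packages the iteration as an induction on $N$ using a block embedding of $\mathrm{U}(N-1)$ into $\mathrm{U}(N)$, which is equivalent to your direct bookkeeping with the embeddings $\mathfrak{i}_{s,s'}$.
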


\begin{proof}
See Appendix~\ref{Appendix_A}.
\end{proof}

An immediate consequence of Theorem~\ref{thm:universal_dictionary} is that every unitary matrix in \(\mathrm{U}(N)\) can be represented as a finite sequence of elementary quantum gates. This observation naturally motivates the following definition.

\begin{definition}
A \emph{quantum circuit} of length \(\ell \geq 0\) for a unitary matrix \(\rU \in \mathrm{U}(N)\) is defined as follows:  
\(\ell = 0\) if and only if \(\rU = \rI_N\). Otherwise, a quantum circuit is a finite sequence of elementary quantum gates  
\(\{\rU_1, \ldots, \rU_\ell\} \subset \mathrm{QG}(N) \setminus \{\rI_N\}\), satisfying
\[
\rU = \rU_\ell \rU_{\ell-1} \cdots \rU_1,
\]
with the additional condition that \(\rU_{i+1} \rU_i \neq \rI_N\) for all \(1 \leq i < \ell\).
\end{definition}

\bigskip

Thus, if \(\rU \in \mathrm{QG}(N)\) has circuit length \(\ell \geq 1\), then for each \(1 \leq k \leq \ell\), there exists a pair \((\rV_k, \mathfrak{i}_k) \in \mathrm{U}(2) \times \mathrm{Emb}(\mathrm{U}(2),\mathrm{U}(N))\) such that \(\rU_k = \mathfrak{i}_k(\rV_k)\), and
\[
\rU = \mathfrak{i}_\ell(\rV_\ell) \, \mathfrak{i}_{\ell-1}(\rV_{\ell-1}) \cdots \mathfrak{i}_1(\rV_1).
\]
It is important to note that a given \(\rU \in \mathrm{U}(N)\) may admit representations as quantum circuits of different lengths, meaning that the representation is generally non-unique. 

In particular, the identity matrix \(\rI_N = \rI_2^{\otimes n}\) is itself an elementary quantum gate (e.g., \(w_1^{(n)}(\rI_2) = \rI_N\)) and serves as the empty word in the formal language where elementary matrices act as symbols in the alphabet \(\mathrm{U}(2) \times \mathrm{Emb}(\mathrm{U}(2),\mathrm{U}(N))\). Consequently, the circuit length \(\ell\) can be interpreted as the number of elementary one-qubit operations required to implement a given quantum circuit.

\begin{figure}[ht]
\centering
\includegraphics[scale=0.15]{./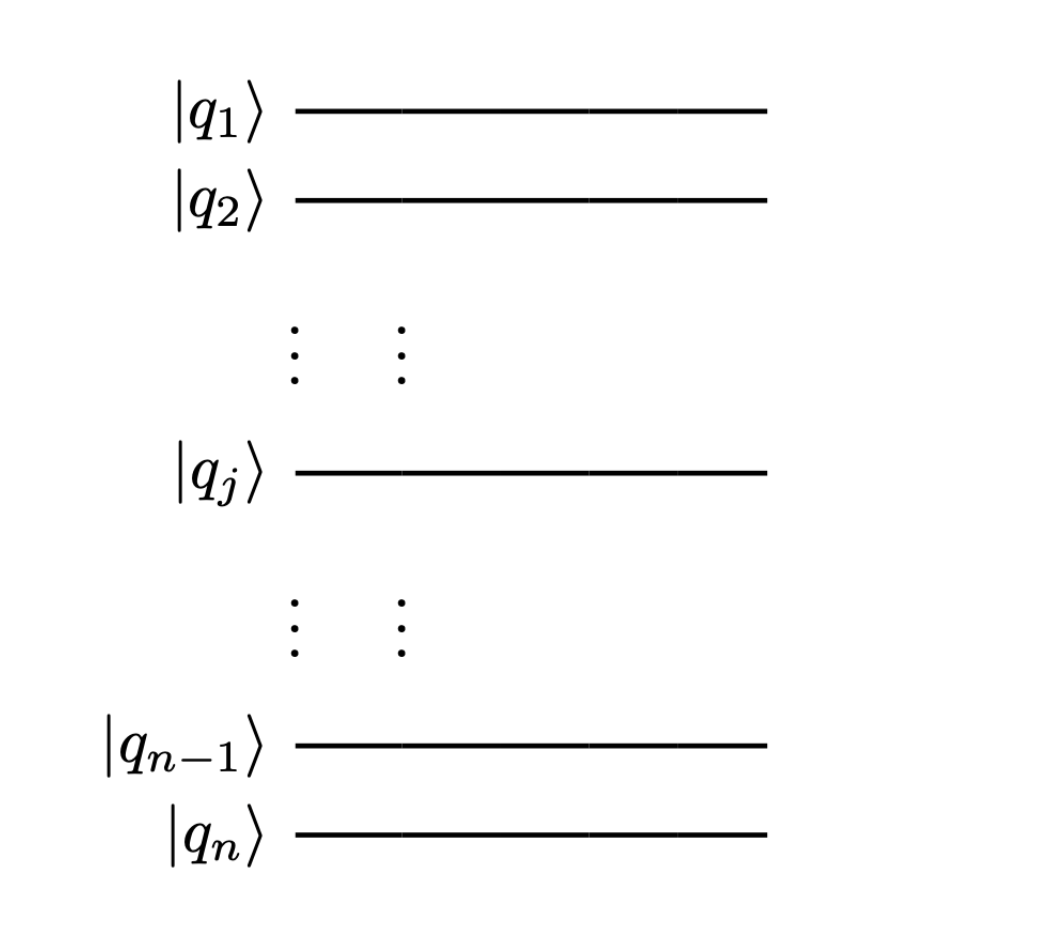}
\caption{Wire diagram for a \(n\)-qubit universal digital quantum computer.}
\label{circuit00}
\end{figure}

\begin{figure}[ht]
\centering
\includegraphics[scale=0.5]{./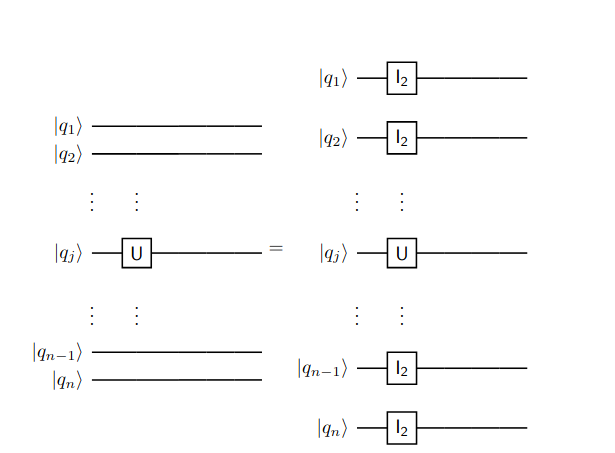}
\caption{Wire diagram of the elementary quantum gate \(w_j^{(n)}(\rU) = \rI_2^{\otimes (j-1)} \otimes \rU \otimes \rI_2^{\otimes (n-j)}.\)}
\label{circuit01}
\end{figure}

\begin{figure}[ht]
\centering
\includegraphics[scale=0.5]{./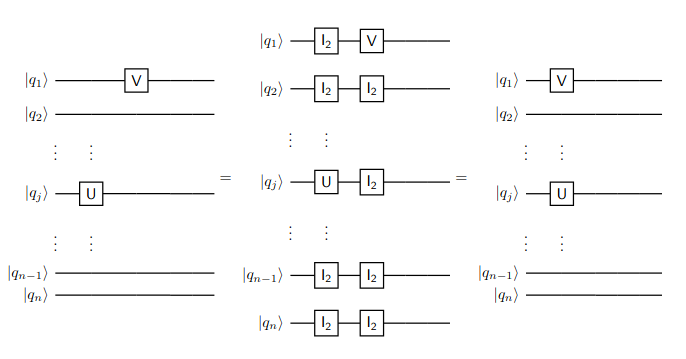}
\caption{Wire diagram of the product of elementary quantum gates \(w_1^{(n)}(\rV)w_j^{(n)}(\rU) = \rV \otimes \rI_2^{\otimes (j-2)} \otimes \rU \otimes \rI_2^{\otimes (n-j)}.\)}
\label{circuit02}
\end{figure}

A quantum circuit is typically depicted as a wire diagram consisting of:
\begin{enumerate}
\item[(a)] \emph{Wires}: Horizontal lines representing qubits (see Figure~\ref{circuit00}).
\item[(b)] \emph{Elementary quantum gates}: Symbols placed along the wires, each acting on a single qubit (see Figure~\ref{circuit01}).
\item[(c)] \emph{Directionality}: The circuit progresses from left to right, indicating the sequence of operations applied to the qubits (see Figure~\ref{circuit02}).
\end{enumerate}

\section{The Grover-Rudolph Algorithm}\label{ch_4}

The main goal of this section is to prove the following theorem.

\begin{theorem}[Grover-Rudolph]\label{theorem:Grover-Rudolph}
Let $(\rA, \rho_0, \mathcal{N})$ be a universal digital quantum computer in $\mathbb{M}_N(\mathbb{C})$. Given a non-negative function \(\varrho: [0,1] \to [0,\infty)\) satisfying the normalization condition
\[
\int_0^1 \varrho(x) \, dx = 1,
\]
there exists a quantum circuit \(\rU := \rU_{\ell} \rU_{\ell-1} \cdots \rU_1\) of length \(\ell = N-1\) such that the law of the random variable \(\rA\) in the outcome algebraic probability space \((\mathbb{M}_N(\mathbb{C}), \mathcal{N}_{\rho_0}(\rU))\) satisfies
\[
\mathbb{P}_{\mathcal{N}_{\rho_0}(\rU)}(\rA = k) = \int_{\frac{k}{2^n}}^{\frac{k+1}{2^n}} \varrho(x) \, dx, \quad \text{for each } k \in \mathbb{Z}_N.
\]
\end{theorem}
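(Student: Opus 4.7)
The plan is to realize the Grover-Rudolph construction explicitly within the elementary-gate framework developed in Section~\ref{ch_3}. First, define for integers $0 \le j \le n$ and $0 \le k < 2^j$ the dyadic probabilities
$$p_k^{(j)} := \int_{k/2^j}^{(k+1)/2^j} \varrho(x)\,dx,$$
so that $p_0^{(0)} = 1$, the target probability appearing in the statement coincides with $p_k^{(n)}$, and additivity of the integral yields the recursion $p_k^{(j)} = p_{2k}^{(j+1)} + p_{2k+1}^{(j+1)}$. Whenever $p_k^{(j)} > 0$, fix $\theta_k^{(j)} \in [0,\pi/2]$ by $\cos^2\theta_k^{(j)} = p_{2k}^{(j+1)}/p_k^{(j)}$ (set $\theta_k^{(j)} := 0$ otherwise), and denote by $\rR(\theta) \in \mathrm{U}(2)$ the real rotation sending $\ket{0} \mapsto \cos\theta\,\ket{0} + \sin\theta\,\ket{1}$.

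For each pair $(j,k)$ with $1 \le j \le n$ and $0 \le k < 2^{j-1}$, I would associate the multi-controlled gate $\rU_{j,k} \in \mathrm{U}(N)$ that applies $\rR(\theta_k^{(j-1)})$ to qubit $n-j+1$ whenever the most significant $j-1$ qubits (positions $n, n-1, \ldots, n-j+2$ under the bijection $b_n$) encode the integer $k$, and acts as the identity on every orthogonal branch. Replacing the fixed rotation angle by a variable $\rV \in \mathrm{U}(2)$ yields a map $\mathfrak{i}_{j,k}: \mathrm{U}(2) \to \mathrm{U}(N)$ which is a group homomorphism because on the designated two-dimensional invariant subspace the action is by $\rV$ and on the orthogonal subspace by the identity, so the composition law is inherited componentwise; injectivity is immediate, since $\rV$ is completely visible on the active subspace. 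Hence $\mathfrak{i}_{j,k} \in \mathrm{Emb}(\mathrm{U}(2),\mathrm{U}(N))$ and $\rU_{j,k} \in \mathrm{QG}(N)$.

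The heart of the argument is an induction on $j \in \{0,1,\ldots,n\}$ showing that the composition of all $\rU_{j',k}$ with $j' \le j$ (applied in increasing order of $j'$; the gates at fixed $j'$ may be ordered arbitrarily, since they act on mutually orthogonal control-branches and therefore commute) produces, from the initial pure state $\rho_0 = \ket{b_n(0)}\bra{b_n(0)}$, the pure state $\ket{\Phi_j}\bra{\Phi_j}$ with
$$\ket{\Phi_j} = \sum_{k=0}^{2^j-1} \sqrt{p_k^{(j)}}\; \ket{b_n(k\cdot 2^{n-j})}.$$
The inductive step reduces to the elementary identities $\cos\theta_k^{(j-1)}\sqrt{p_k^{(j-1)}} = \sqrt{p_{2k}^{(j)}}$ and $\sin\theta_k^{(j-1)}\sqrt{p_k^{(j-1)}} = \sqrt{p_{2k+1}^{(j)}}$, which follow at once from the definition of $\theta$ and the dyadic recursion for the $p$'s.

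At level $j = n$ the final state is $\ket{\Phi_n} = \sum_{k=0}^{N-1} \sqrt{p_k^{(n)}}\,\ket{b_n(k)}$, and the formula for the law of a random variable developed in Section~\ref{ch_2} gives
$$\mathbb{P}_{\mathcal{N}_{\rho_0}(\rU)}(\rA = k) = \bigl|\bra{b_n(k)}\ket{\Phi_n}\bigr|^2 = p_k^{(n)},$$
which is exactly the integral claimed in the statement. The total number of elementary gates is $\sum_{j=1}^{n} 2^{j-1} = 2^n - 1 = N-1$, matching the prescribed circuit length. The hardest part will be not the algebra but the bookkeeping: one must carefully reconcile the LSB-first encoding used by $b_n$ with the MSB-first dyadic refinement of $[0,1]$, and verify rigorously that the prescribed embedding $\mathfrak{i}_{j,k}$ is a monomorphism so that each $\rU_{j,k}$ genuinely qualifies as an elementary quantum gate under the paper's definition.
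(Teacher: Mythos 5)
Your proposal is correct and follows essentially the same route as the paper: the same dyadic conditional-probability angles, the same multi-controlled rotations acting from the most significant qubit downward, and the same gate count $\sum_{j=1}^{n}2^{j-1}=N-1$. The only difference is organizational --- the paper first proves a trigonometric product decomposition of the integral (Theorem~\ref{Rudolph-Grover-Proposition}) and then a separate circuit theorem (Theorem~\ref{theorem:Grover-RudolphX}), whereas you merge both into a single induction on the intermediate states $\ket{\Phi_j}$.
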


We proof the above theorem in two steps.

\subsection{A Trigonometric Decomposition of the Target Integral}

Our first goal is to express the integral
\[
I_k = \int_{\frac{k}{2^n}}^{\frac{k+1}{2^n}} \varrho(x)\,dx,
\]
for each \(k \in \mathbb{Z}_{2^n}\) in a form suitable for implementation through elementary quantum gates. To this end, we introduce the function
\[
\mathsf{T}:\mathbb{Z}_2 \times \mathbb{R} \to \mathbb{R}, \quad \mathsf{T}_z(x) = (\cos x)^{1-z}(\sin x)^z,
\]
which will serve as the building block in our decomposition.

\begin{theorem}\label{Rudolph-Grover-Proposition}
  Let \(\varrho: [0,1] \subset \mathbb{R} \to [0,\infty) \) be a nonnegative function satisfying the normalization condition
  \[
  \int _0^1 \varrho (x) \, dx = 1.
  \]
  Then, for each \(k \in \mathbb{Z}_{2^n}\) with binary representation \(b_n(k) = z_1z_2\cdots z_n\), 
  there exist parameters
  \[
  \theta_{z_2\cdots z_n},\,\theta_{z_3\cdots z_n},\,\ldots,\,\theta_{z_n}
  \]
  and a parameter \(\theta = \theta(\varrho)\) (independent of \(k\)) in the closed interval \(\left[0,\frac{\pi}{2}\right]\) such that
  \begin{align}
    \int_{\frac{k}{2^n}}^{ \frac{k+1}{2^n}}  \varrho(x) \, dx =  \mathsf{T}_{z_1}^2(\theta_{z_{2}\cdots z_n})\mathsf{T}_{z_{2}}^2(\theta_{z_{3}\cdots z_n}) \cdots \mathsf{T}_{z_{n-1}}^2(\theta_{z_n})\mathsf{T}_{z_n}^2(\theta). \label{eq:proposition}
  \end{align}
  Furthermore, the normalization condition is equivalent to
  \begin{align}\label{eq:proposition2}
    \sum_{z_1\cdots z_n \in \mathbb{Z}_2^n} \mathsf{T}_{z_1}^2(\theta_{z_{2}\cdots z_n})\mathsf{T}_{z_{2}}^2(\theta_{z_{3}\cdots z_n}) \cdots \mathsf{T}_{z_{n-1}}^2(\theta_{z_n})\mathsf{T}_{z_n}^2(\theta) = 1.
  \end{align}
\end{theorem}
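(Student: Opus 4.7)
The plan is to build $\theta$ and the parameters $\theta_{z_{j+1}\cdots z_n}$ from a top-down dyadic refinement of $[0,1]$, so that $\int_{k/2^n}^{(k+1)/2^n} \varrho(x)\,dx$ emerges as a telescoping product of conditional probabilities. The proof relies only on additivity of the integral and the elementary identity $\cos^2 + \sin^2 = 1$.

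First I would introduce, for each $1 \le j \le n$ and tuple of bits $(z_n, z_{n-1}, \ldots, z_j)$, the dyadic sub-interval $I[z_n, \ldots, z_j] \subset [0,1]$ of length $2^{-(n-j+1)}$ whose left endpoint has the binary expansion $0.z_n z_{n-1}\cdots z_j$, together with the weight
$$P[z_n, \ldots, z_j] := \int_{I[z_n, \ldots, z_j]} \varrho(x)\, dx,$$
and set $P := 1$ at the empty level. Since appending one more bit splits the cell in half, additivity of the integral gives the recursion $P[z_n, \ldots, z_{j+1}, 0] + P[z_n, \ldots, z_{j+1}, 1] = P[z_n, \ldots, z_{j+1}]$. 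At the finest level one recovers the target integral: $P[z_n, \ldots, z_1] = \int_{k/2^n}^{(k+1)/2^n} \varrho(x)\,dx$ when $b_n(k) = z_1 z_2 \cdots z_n$.

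Next, I would define the angles as conditional-probability angles. Choose $\theta = \theta(\varrho) \in [0, \pi/2]$ with $\cos^2 \theta = P[0]$ and $\sin^2 \theta = P[1]$; this is possible because both weights lie in $[0,1]$ and sum to $1$. For $1 \le j \le n-1$ and any bit string $z_{j+1} \cdots z_n$ with $P[z_n, \ldots, z_{j+1}] > 0$, define $\theta_{z_{j+1}\cdots z_n} \in [0, \pi/2]$ by
$$\cos^2 \theta_{z_{j+1}\cdots z_n} = \frac{P[z_n, \ldots, z_{j+1}, 0]}{P[z_n, \ldots, z_{j+1}]}, \qquad \sin^2 \theta_{z_{j+1}\cdots z_n} = \frac{P[z_n, \ldots, z_{j+1}, 1]}{P[z_n, \ldots, z_{j+1}]};$$
when the denominator vanishes, pick any value in $[0, \pi/2]$. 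From $\mathsf{T}_z^2(\alpha) = (\cos^2\alpha)^{1-z}(\sin^2\alpha)^z$ this is equivalent to the identities $\mathsf{T}_{z_j}^2(\theta_{z_{j+1}\cdots z_n}) = P[z_n, \ldots, z_j]/P[z_n, \ldots, z_{j+1}]$ and $\mathsf{T}_{z_n}^2(\theta) = P[z_n]$.

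Multiplying these ratios over $j = 1, \ldots, n-1$ and inserting the extra factor $P[z_n]$ yields a telescoping product
$$\mathsf{T}_{z_1}^2(\theta_{z_2\cdots z_n}) \cdots \mathsf{T}_{z_{n-1}}^2(\theta_{z_n})\, \mathsf{T}_{z_n}^2(\theta) = \frac{P[z_n, \ldots, z_1]}{P[z_n, \ldots, z_2]} \cdots \frac{P[z_n, z_{n-1}]}{P[z_n]} \cdot P[z_n] = P[z_n, \ldots, z_1],$$
which is exactly \eqref{eq:proposition}. Any cell of zero weight has only zero-weight sub-cells, so both sides vanish there regardless of the angle chosen, and the identity still holds. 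Equation \eqref{eq:proposition2} then follows from \eqref{eq:proposition} by summing over $(z_1, \ldots, z_n) \in \mathbb{Z}_2^n$, since the dyadic cells $I_k$ partition $[0,1]$ and $\int_0^1 \varrho = 1$. The only delicate point is the bookkeeping between the paper's convention $b_n(k) = z_1 \cdots z_n$ (least significant bit first) and the natural top-down refinement indexed by the most significant bits; once this alignment is fixed, the argument is a purely combinatorial telescoping with no analytic obstruction.
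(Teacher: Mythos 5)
Your proposal is correct and follows essentially the same route as the paper's proof: a dyadic refinement of $[0,1]$, angles defined as $\arccos$ of square roots of conditional integral ratios, and a telescoping product of those ratios, with the same resolution of the zero-measure cells (the first factor entering a zero-weight cell already vanishes, so the choice of subsequent angles is immaterial). The only differences are notational: you index the cells most-significant-bit first and make the telescoping explicit, whereas the paper routes the same bookkeeping through Lemmas~\ref{lema1} and~\ref{lemma:decomposition}.
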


  To prove Theorem~\ref{Rudolph-Grover-Proposition}, we partition the interval \([0,1]\) into equal subintervals \(2^\ell\) for each \(1\le \ell\le n\). For \(0\le k\le 2^\ell-1\), define
  \[
  z_{b_\ell(k)}^{(\ell)}:=\frac{k}{2^\ell}.
  \]
  We note that \( b_{\ell}(2^{\ell}) \) is not formally defined; however, for notational convenience, we assign \( z_{b_{\ell}(2^{\ell})}^{(\ell)} := 1 \).
  Thus, for fixed \(\ell\) we have
  \[
  0=z_{b_\ell(0)}^{(\ell)} < z_{b_\ell(1)}^{(\ell)} < \cdots < z_{b_\ell(2^\ell-1)}^{(\ell)} < z_{b_\ell(2^\ell)}^{(\ell)}=1.
  \]
  Next, we establish a series of technical lemmas (Lemmas~\ref{lema1} and \ref{lemma:decomposition}) to derive the recursive decomposition of the integral in terms of the functions $\rT_z$ and their corresponding angles.

  \begin{lemma}\label{lema1}
  Given $1 \le \ell \le n-1,$, for each $k \in \mathbb{Z}_{2^{\ell}}$ it holds $b_{\ell+1}(2k) = 0b_{\ell}(k)$ and $b_{\ell+1}(2k+1) = 1b_{\ell}(k).$ Furthermore,
  $b_{\ell+1}(2k+2) = 0b_{\ell}(k+1).$
  \end{lemma}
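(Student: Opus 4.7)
The plan is to prove all three identities by direct computation from the defining equation of the bijection $b_n$, exploiting the fact that multiplication by $2$ acts as a shift on the little-endian binary representation used in the paper.

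First I would fix $k \in \mathbb{Z}_{2^{\ell}}$ and write its canonical binary expansion $k = \sum_{i=1}^{\ell} z_i\, 2^{i-1}$ with $z_i \in \{0,1\}$, so that $b_{\ell}(k) = z_1 z_2 \cdots z_{\ell}$ by definition of $b_{\ell}$. Multiplying by two yields
\[
2k \;=\; \sum_{i=1}^{\ell} z_i\, 2^{i} \;=\; 0\cdot 2^{0} + z_1\cdot 2^{1} + z_2\cdot 2^{2} + \cdots + z_{\ell}\cdot 2^{\ell}.
\]
Since $k \le 2^{\ell}-1$ implies $2k \le 2^{\ell+1}-2 < 2^{\ell+1}$, this is the unique binary expansion of $2k$ of length $\ell+1$. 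Applying the definition of $b_{\ell+1}$ then gives $b_{\ell+1}(2k) = 0\, z_1 z_2 \cdots z_{\ell} = 0\, b_{\ell}(k)$, which is the first identity.

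For the second identity, I would observe that adding $1$ only alters the least significant bit, which was $0$ in the expansion of $2k$. Hence
\[
2k+1 \;=\; 1\cdot 2^{0} + z_1\cdot 2^{1} + \cdots + z_{\ell}\cdot 2^{\ell},
\]
so $b_{\ell+1}(2k+1) = 1\, z_1 z_2 \cdots z_{\ell} = 1\, b_{\ell}(k)$. The third identity follows by applying the first identity to $k+1$ in place of $k$: if $k+1 \in \mathbb{Z}_{2^{\ell}}$ (i.e.\ $k \le 2^{\ell}-2$), then $b_{\ell+1}(2(k+1)) = 0\, b_{\ell}(k+1)$ immediately.

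The only delicate point, and therefore the main obstacle, is the boundary case $k = 2^{\ell}-1$, where $k+1 = 2^{\ell}$ lies outside the formal domain of $b_{\ell}$. This is precisely the situation the authors anticipated with the notational convention $z_{b_{\ell}(2^{\ell})}^{(\ell)} := 1$ introduced just before the lemma. I would handle it by extending $b_{\ell+1}$ consistently to $2^{\ell+1}$ via the same little-endian rule (viewing $2^{\ell+1}$ as the string $0\,0\cdots 0\,1$ of length $\ell+1$), and interpreting $0\, b_{\ell}(k+1)$ accordingly, so that the stated equality holds as a symbolic identity compatible with the convention. This then completes the proof in all cases.
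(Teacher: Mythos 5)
Your proof is correct and follows essentially the same route as the paper's: expand $k$ in the little-endian binary form defining $b_\ell$, observe that multiplication by $2$ shifts the digits and prepends a $0$, that adding $1$ flips that new least significant bit, and that the third identity is the first applied to $k+1$. You are in fact slightly more careful than the paper (which dismisses the last statement as ``straightforward'') in flagging the boundary case $k=2^{\ell}-1$ and tying it to the authors' convention $z_{b_{\ell}(2^{\ell})}^{(\ell)}:=1$.
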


  \begin{proof}
    Assume that $k = z_12^0+z_22^2+\cdots+z_{\ell}2^{\ell-1}$, that is, $b_{\ell}(k)=z_1\cdots z_{\ell}.$ Then $2k =  z_12^1+z_22^2+\cdots+z_{\ell}2^{\ell}$ and hence
$b_{\ell+1}(2k) = 0z_1\cdots z_{\ell} = 0b_{\ell}(k).$ Using $2k +1 = 1\, 2^0+z_12^1+z_22^2+\cdots+z_{\ell}2^{\ell}$ we obtain  $b_{\ell+1}(2k+1) = 1b_{\ell}(k).$
    The last statement is straightforward.
    \end{proof}

  \bigskip

  Now,  given $1 \le \ell \le n-1,$ then for each $k \in \mathbb{Z}_{2^{\ell}}$ since
  $$
  z_{b_{\ell}(k)}^{(\ell)} = k \frac{1}{2^{\ell}} =  2k \frac{1}{2^{\ell+1}} =  z_{b_{\ell+1}(2k)}^{(\ell+1)},
  $$
  and
  $$
  z_{b_{\ell}(k+1)}^{(\ell)} = (k+1) \frac{1}{2^{\ell}} =  (2k+2) \frac{1}{2^{\ell+1}} =  z_{b_{\ell+1}(2k+2)}^{(\ell+1)},
  $$
  holds, we consider the following decomposition
  $$
  [z_{b_{\ell}(k)}^{(\ell)},z_{b_{\ell}(k+1)}^{(\ell)}] = [z_{b_{\ell+1}(2k)}^{(\ell+1)},z_{b_{\ell+1}(2k+1)}^{(\ell+1)}] \cup  [z_{b_{\ell+1}(2k+1)}^{(\ell+1)},z_{b_{\ell+1}(2k+2)}^{(\ell+1)}].
  $$
  From now one we will use the following notation
  $$
  I_{b_{\ell}(k)}^{(\ell)} := [z_{b_{\ell}(k)}^{(\ell)},z_{b_{\ell}(k+1)}^{(\ell)}].
  $$
  for $k \in \mathbb{Z}_{2^{\ell}}$ and $1 \le \ell \le n.$ Hence, from Lemma~\ref{lema1} we have the following corollary.
  
  \begin{corollary}\label{corolario1}
   Given $1 \le \ell \le n-1,$ then for each $k \in \mathbb{Z}_{2^{\ell}}$ it holds
  $$
  I_{b_{\ell}(k)}^{(\ell)} = I_{b_{\ell+1}(2k)}^{(\ell+1)} \cup I_{b_{\ell+1}(2k+1)}^{(\ell+1)} = I_{0b_{\ell}(k)}^{(\ell+1)} \cup I_{1b_{\ell}(k)}^{(\ell+1)}.
  $$
  \end{corollary}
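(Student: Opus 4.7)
The plan is to unpack both equalities directly from the two pieces of information already assembled just above the corollary: the endpoint identities $z_{b_{\ell}(k)}^{(\ell)} = z_{b_{\ell+1}(2k)}^{(\ell+1)}$ and $z_{b_{\ell}(k+1)}^{(\ell)} = z_{b_{\ell+1}(2k+2)}^{(\ell+1)}$, which were verified in the discussion preceding the corollary, and the binary-expansion identities furnished by Lemma~\ref{lema1}. Since $I_{b_{\ell}(k)}^{(\ell)}$ is defined to be the interval $[z_{b_{\ell}(k)}^{(\ell)}, z_{b_{\ell}(k+1)}^{(\ell)}]$, the whole statement is a matching of intervals and an application of binary bookkeeping; no analytic content is involved.

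First I would rewrite the left-hand side interval in terms of level $\ell+1$ endpoints: using the two endpoint identities just noted, I have $I_{b_{\ell}(k)}^{(\ell)} = [z_{b_{\ell+1}(2k)}^{(\ell+1)}, z_{b_{\ell+1}(2k+2)}^{(\ell+1)}]$. Next I would insert the midpoint $z_{b_{\ell+1}(2k+1)}^{(\ell+1)} = (2k+1)/2^{\ell+1}$, which evidently satisfies $z_{b_{\ell+1}(2k)}^{(\ell+1)} < z_{b_{\ell+1}(2k+1)}^{(\ell+1)} < z_{b_{\ell+1}(2k+2)}^{(\ell+1)}$ because of the strict monotonicity $0 = z_{b_{\ell+1}(0)}^{(\ell+1)} < \cdots < z_{b_{\ell+1}(2^{\ell+1})}^{(\ell+1)} = 1$ recorded earlier. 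This splits the interval into the two consecutive subintervals $[z_{b_{\ell+1}(2k)}^{(\ell+1)}, z_{b_{\ell+1}(2k+1)}^{(\ell+1)}]$ and $[z_{b_{\ell+1}(2k+1)}^{(\ell+1)}, z_{b_{\ell+1}(2k+2)}^{(\ell+1)}]$, which by definition are $I_{b_{\ell+1}(2k)}^{(\ell+1)}$ and $I_{b_{\ell+1}(2k+1)}^{(\ell+1)}$ respectively. This yields the first equality.

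For the second equality I would simply substitute the binary identities of Lemma~\ref{lema1}: $b_{\ell+1}(2k) = 0\,b_{\ell}(k)$ and $b_{\ell+1}(2k+1) = 1\,b_{\ell}(k)$, which replaces the subscripts and produces $I_{0b_{\ell}(k)}^{(\ell+1)} \cup I_{1b_{\ell}(k)}^{(\ell+1)}$. Both equalities then hold simultaneously, completing the proof.

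The hardest part is essentially bookkeeping rather than content; the only point requiring a small amount of care is that the definition of $I_{b_{\ell}(k)}^{(\ell)}$ uses the upper endpoint $z_{b_{\ell}(k+1)}^{(\ell)}$, and the identification $z_{b_{\ell}(k+1)}^{(\ell)} = z_{b_{\ell+1}(2k+2)}^{(\ell+1)}$ relies on the convention $z_{b_{\ell}(2^{\ell})}^{(\ell)} := 1$ that was explicitly introduced for the boundary case $k = 2^{\ell}-1$. I would briefly flag this so that the argument is valid uniformly in $k \in \mathbb{Z}_{2^{\ell}}$, including the rightmost index.
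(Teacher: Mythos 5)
Your proposal is correct and follows essentially the same route as the paper, which derives the corollary directly from the endpoint identities $z_{b_{\ell}(k)}^{(\ell)} = z_{b_{\ell+1}(2k)}^{(\ell+1)}$, $z_{b_{\ell}(k+1)}^{(\ell)} = z_{b_{\ell+1}(2k+2)}^{(\ell+1)}$, the insertion of the midpoint, and the relabeling supplied by Lemma~\ref{lema1}. Your extra remark about the boundary convention $z_{b_{\ell}(2^{\ell})}^{(\ell)} := 1$ is a sensible clarification but does not change the argument.
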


  The above corollary allows us to write
  \begin{align}\label{interval:decomposition}
      \int_{I_{b_{\ell}(k)}^{(\ell)}} \varrho(x) dx =   \int_{{I_{0b_{\ell}(k)}^{(\ell+1)}}^{(\ell)}} \varrho(x) dx  +  \int_{I_{1b_{\ell}(k)}^{(\ell+1)}} \varrho(x) dx.
  \end{align}
  Recall that $\lfloor z \rfloor$ denotes the integer part function. Now, we have the following lemma.
  
  \begin{lemma}\label{lemma:decomposition}
  Let $k \in \mathbb{Z}_{2^n}$ be such that $b_n(k)=z_1z_2\cdots z_n.$ Then for $1 \le s \le n-1$ we have $$ b_n(k) = z_1 z_2 \cdots z_{s}  b_{n-s}(\lfloor 2^{-s}k \rfloor).
  $$
  Furthermore, $b_{n-s}(\lfloor 2^{-s}k \rfloor) = z_{s+1} \cdots z_n \in \mathbb{Z}_2^{n-s}.$
  \end{lemma}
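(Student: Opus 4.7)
The plan is to prove the second (more explicit) assertion, namely
$b_{n-s}(\lfloor 2^{-s} k \rfloor) = z_{s+1} z_{s+2} \cdots z_n$,
by a direct computation from the definition of $b_n$; the first assertion is then the trivial concatenation $z_1 \cdots z_n = (z_1 \cdots z_s)\,(z_{s+1} \cdots z_n)$.

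First I would write $k = \sum_{i=1}^n z_i\, 2^{i-1}$ and split the sum at the index $s$, reindexing the tail via $i = s + j$:
\begin{equation*}
k \;=\; \sum_{i=1}^{s} z_i\, 2^{i-1} \;+\; 2^s \sum_{j=1}^{n-s} z_{s+j}\, 2^{j-1}.
\end{equation*}
The remainder term $\sum_{i=1}^s z_i\, 2^{i-1}$ lies in $\{0,1,\ldots,2^s - 1\}$ because each $z_i \in \{0,1\}$, so it is annihilated by $\lfloor 2^{-s}\,\cdot\,\rfloor$. Hence
\begin{equation*}
\lfloor 2^{-s} k \rfloor \;=\; \sum_{j=1}^{n-s} z_{s+j}\, 2^{j-1},
\end{equation*}
and the definition of $b_{n-s}$ reads off the right-hand side as the string $z_{s+1} z_{s+2} \cdots z_n$, as required.

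An alternative route proceeds by induction on $s$, using Lemma~\ref{lema1} as the base case: writing $k = 2k' + z_1$, the two branches of Lemma~\ref{lema1} give $b_{n-1}(k') = z_2 \cdots z_n$ regardless of whether $z_1 = 0$ or $z_1 = 1$, and $\lfloor k/2 \rfloor = k'$ in either case. Iterating this shift-by-one operation $s$ times yields the general statement.

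The main obstacle is purely notational: one must keep the little-endian convention $z_1 \leftrightarrow 2^0,\, z_n \leftrightarrow 2^{n-1}$ consistent throughout the reindexing of the geometric series and verify that the remainder is strictly below $2^s$. There is no genuine mathematical difficulty beyond this bookkeeping.
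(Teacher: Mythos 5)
Your proof is correct and follows essentially the same route as the paper: both split the expansion $k=\sum_{i=1}^n z_i 2^{i-1}$ at index $s$ and observe that the low-order part $\sum_{i=1}^s z_i 2^{i-1}<2^s$ is discarded by $\lfloor 2^{-s}\cdot\rfloor$, leaving $\sum_{j=1}^{n-s} z_{s+j}2^{j-1}$. Your phrasing via integer division (rather than the paper's fractional-part bookkeeping) is marginally cleaner, but the argument is the same.
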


\begin{proof}
Since $k = z_12^0+z_12^1+\cdots + z_n 2^{n-1},$ then $2^{-s}k =  z_12^{-s}+z_22^{-s+1}+\cdots +z_{s}2^{-1} + z_{s+1}2^0 +\cdots+z_n 2^{n-s},$ and
hence $\lfloor 2^{-s}k \rfloor = z_{s+1}2^0 +\cdots+z_n 2^{n-s},$ because $z_i \in \mathbb{Z}_2$ for all $1\leq i \leq n.$ Thus,
$b_n(k) = z_1z_2\cdots z_s b_{n-s}(\lfloor 2^{-s}k \rfloor)$ for all $1 \le s \le n-1.$ 
\end{proof}

To conclude this section, we present the proof of Theorem~\ref{Rudolph-Grover-Proposition}.
   
\begin{proof}[Proof of Theorem~\ref{Rudolph-Grover-Proposition}]
  Let $k \in \mathbb{Z}_{2^n}$ be such that $b_n(k)=z_1z_2\cdots z_n.$ Take $2 \le \ell \le n-1,$ from Lemma~\ref{lemma:decomposition}, choosing $s=n-\ell$, then $b_{\ell}(\lfloor 2^{-(n-\ell)}k \rfloor)=z_{n-\ell+1}\cdots z_n$ and we can write \eqref{interval:decomposition} as follows:
  \begin{align}\label{interval:decomposition1}
      \int_{I_{z_{n-\ell+1}\cdots z_n}^{(\ell)}} \varrho(x) dx =   \int_{I_{0z_{n-\ell+1}\cdots z_n}^{(\ell+1)}} \varrho(x) dx  +  \int_{I_{1z_{n-\ell+1}\cdots z_n}^{(\ell+1)}} \varrho(x) dx,
  \end{align}
  and if $ \int_{I_{z_{n-\ell+1}\cdots z_n}^{(\ell)}} \varrho(x) dx \neq 0,$ \eqref{interval:decomposition1} can be written as
  \begin{align}\label{cos-sin}
  \frac{\int_{I_{0z_{n-\ell+1}\cdots z_n}^{(\ell+1)}} \varrho(x) dx}{  \int_{I_{z_{n-\ell+1}\cdots z_n}^{(\ell)}} \varrho(x) dx} 
  + \frac{\int_{I_{1z_{n-\ell+1}\cdots z_n}^{(\ell+1)}} \varrho(x) dx}{\int_{I_{z_{n-\ell+1}\cdots z_n}^{(\ell)}} \varrho(x) dx} = 1.
  \end{align}
  Define
  \begin{align*}
  \theta_{z_{n-\ell+1}\cdots z_n} := 
  \left\{
    \begin{array}{ll}
    \arccos \sqrt{\frac{\int_{I_{0z_{n-\ell+1}\cdots z_n}^{(\ell+1)}} \varrho(x) dx}{ \int_{I_{z_{n-\ell+1}\cdots z_n}^{(\ell)}} \varrho(x) dx}} & \text{ if } \int_{I_{z_{n-\ell+1}\cdots z_n}^{(\ell)}} \varrho(x) dx \neq 0, \\
    \frac{\pi}{2} & \text{ if } \int_{I_{z_{n-\ell+1}\cdots z_n}^{(\ell)}} \varrho(x) dx = 0 \text{ and } z_{n-\ell+1} = 0, \\
    0 & \text{ if } \int_{I_{z_{n-\ell+1}\cdots z_n}^{(\ell)}} \varrho(x) dx = 0 \text{ and } z_{n-\ell+1} = 1.
\end{array}
\right.
\end{align*}
  and now, if $0 < \theta_{z_{n-\ell+1}\cdots z_n} < \frac{\pi}{2},$  from equation \eqref{cos-sin} we have 
  \begin{align*}
     \cos^2 \theta_{z_{n-\ell+1}\cdots z_n}  =  \frac{\int_{I_{0z_{n-\ell+1}\cdots z_n}^{(\ell+1)}} \varrho(x) dx}{  \int_{I_{z_{n-\ell+1}\cdots z_n}^{(\ell)}} \varrho(x) dx} \text{ and }  
     \sin^2 \theta_{z_{n-\ell+1}\cdots z_n} = \frac{\int_{I_{1z_{n-\ell+1}\cdots z_n}^{(\ell+1)}} \varrho(x) dx}{\int_{I_{z_{n-\ell+1}\cdots z_n}^{(\ell)}} \varrho(x) dx}.
  \end{align*}
In consequence, 
  $$
  \rT_z^2(\theta_{z_{n-\ell+1}\cdots z_n}) = \left\{
    \begin{array}{ll}
      \frac{\int_{I_{zz_{n-\ell+1}\cdots z_n}^{(\ell+1)}} \varrho(x) dx}{ \int_{I_{z_{n-\ell+1}\cdots z_n}^{(\ell)}} \varrho(x) dx} & \text{ if } \int_{I_{z_{n-\ell+1}\cdots z_n}^{(\ell)}} \varrho(x) dx \neq 0, \\
    0 & \text{ otherwise. }
\end{array}
\right.
  $$
  By taking $s = n- 1$ in Lemma~\ref{lemma:decomposition},  we have $b_{1}(\lfloor 2^{-(n-1)}k \rfloor) = z_n \in \mathbb{Z}_2,$ that is,
  $$
  \int_{I_{z_n}^{(1)}} \varrho(x) dx \in \left\{
  \int_{I_{0}^{(1)}} \varrho(x) dx, \int_{I_{1}^{(1)}} \varrho(x) dx,
  \right\}
  $$
  where it holds
  $$
  \int_{I_{0}^{(1)}} \varrho(x) dx + \int_{I_{1}^{(1)}} \varrho(x) dx = 1.
  $$
Define
  $$
  \theta = \theta(\varrho) := 
  \arccos \sqrt{\int_{I_{0}^{(1)}} \varrho(x) dx},
  $$
and hence 
$$
\cos^2 \theta = \int_{I_{0}^{(1)}} \varrho(x) dx \text{ and } \sin^2 \theta = \int_{I_{1}^{(1)}} \varrho(x) dx.
$$
Thus, $\rT_{z_n}^2(\theta) =  \int_{I_{z_n}^{(1)}}\varrho(x) dx.$

Recall that $I_{b_{n}(k)}^{(n)} =[z_{b_n(k)}^{(n)},z_{b_n(k+1)}^{(n)}] = [\frac{k}{2^n},\frac{k+1}{2^n}],$ and hence 
  $$
  \int_{\frac{k}{2^n}}^{ \frac{k+1}{2^n}} \varrho(x)dx = \int_{I_{b_{n}(k)}^{(n)}} \varrho(x) dx.
  $$
  Now, by using Lemma~\ref{lemma:decomposition}, we can write
    \begin{align*}
    \int_{I_{b_{n}(k)}^{(n)}} \varrho(x) dx  & = 
    \int_{I_{z_1\cdots z_n}^{(n)}} \varrho(x) dx  \\  
    & = 
    \frac{\int_{I_{z_1z_2\cdots z_n}^{(n)}} \varrho(x) dx}{ \int_{I_{z_2\cdots z_n}^{(n-1)}} \varrho(x) dx}
     \cdot \frac{\int_{I_{z_2z_3\cdots z_n}^{(n-1)}} \varrho(x) dx}{ \int_{I_{z_3\cdots z_n}^{(n-2)}} \varrho(x) dx} \cdots 
     \frac{\int_{I_{z_{n-1}z_n}^{(2)}} \varrho(x) dx}{ \int_{I_{z_n}^{(1)}} \varrho(x) dx} \cdot \int_{I_{z_n}^{(1)}} \varrho(x) dx \\
     & =  \rT_{z_1}^2(\theta_{z_{2}\cdots z_n})\rT_{z_{2}}^2(\theta_{z_{3}\cdots z_n}) \cdots \rT_{z_{n-1}}^2(\theta_{z_n})\rT_{z_n}^2(\theta),
    \end{align*}
    if 
    $$
    \int_{I_{z_{n-\ell+1}z_{n-\ell +2}\cdots z_n}^{(\ell)}} \varrho(x) dx \neq 0 \text{ holds for all } 1 \le \ell \le n.
    $$
Otherwise, if the above integral is zero for some $\ell,$ then 
$$
\rT_{z_{n-\ell +1}}(\theta_{z_{n-\ell+2}\cdots z_n}) = (\cos \theta_{z_{n-\ell+2}\cdots z_n})^{1-z_{n-\ell +1}}(\sin \theta_{z_{n-\ell+2}\cdots z_n})^{z_{n-\ell +1}}  = 0.
$$
Thus,
$$
\theta_{z_{n-\ell+2}\cdots z_n} = \left\{ 
\begin{array}{ll}
  \pi/2  & \text{if } z_{n-\ell +1} = 0, \\
 0 & \text{if } z_{n-\ell +1} = 1. 
\end{array}
\right.
$$

    This proves the first statement. The second one an is straightforward consequence of the fact
    $$
\int_{0}^1 \varrho(x)\, dx = \sum_{k \in \mathbb{Z}_{2^n}} \int_{I_{b_{n}(k)}^{(n)}} \varrho(x) dx = 1,
    $$
    and the proof is complete.
  \end{proof}

\subsection{Proof of Theorem~\ref{theorem:Grover-Rudolph}}  

Theorem~\ref{theorem:Grover-Rudolph} is a direct consequence of Theorem~\ref{Rudolph-Grover-Proposition} together with the following result.

  \begin{theorem}\label{theorem:Grover-RudolphX}
  Let $(\rA, \rho_0, \mathcal{N})$ be a universal digital quantum computer in $\mathbb{M}_N(\mathbb{C}).$ Then, there exists a quantum circuit $\rU$ of length $N-1$ such that the law of the random variable $\rA$ in the outcome algebraic probability space $(\mathbb{M}_N(\mathbb{C}), \mathcal{N}_{\rho_0}(\rU))$ satisfies the following condition for each $k \in \mathbb{Z}_{2^n}$, where $b_n(k) = z_1z_2\cdots z_n$:
  \begin{equation}
    \mathbb{P}_{\mathcal{N}_{\rho_0}(\rU)}(\rA = k)
    = \mathsf{T}_{z_1}^2(\theta_{z_{2}\cdots z_n})\mathsf{T}_{z_{2}}^2(\theta_{z_{3}\cdots z_n}) \cdots \mathsf{T}_{z_{n-1}}^2(\theta_{z_n})\mathsf{T}_{z_n}^2(\theta),
  \end{equation}
  where the parameters $\theta_{z_k \cdots z_n}$ for $2 \leq k \leq n$ and $\theta$ appear in the decomposition \eqref{eq:proposition}.
\end{theorem}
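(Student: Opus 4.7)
The plan is to construct the circuit $\rU$ explicitly as a product of $N-1$ two-level elementary quantum gates, organized into $n$ successive layers, so that applying each layer to the initial state $\ket{b_n(0)} = \ket{0}^{\otimes n}$ introduces exactly one additional $\mathsf{T}$-factor into the amplitude of each computational basis vector. This mirrors the recursive structure of the decomposition \eqref{eq:proposition}: I first rotate the $n$-th qubit unconditionally (producing the factor $\mathsf{T}_{z_n}(\theta)$), then the $(n-1)$-th qubit conditional on the value of the $n$-th (producing $\mathsf{T}_{z_{n-1}}(\theta_{z_n})$), and so on, ending with the first qubit conditioned on all the others.

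Concretely, for each $1 \le k \le n$ and each binary string $w \in \mathbb{Z}_2^{k-1}$ (empty when $k=1$), let $\mathfrak{i}_{k,w}:\mathrm{U}(2) \to \mathrm{U}(N)$ be the embedding that sends $\rV \in \mathrm{U}(2)$ to the two-level unitary acting as $\rV$ on $\mathrm{span}\bigl\{\ket{0^{n-k}} \otimes \ket{0} \otimes \ket{w},\ \ket{0^{n-k}} \otimes \ket{1} \otimes \ket{w}\bigr\}$ and as the identity on the orthogonal complement. Each $\mathfrak{i}_{k,w}$ is clearly a group monomorphism, so the rotation $\rU_{k,w} := \mathfrak{i}_{k,w}(R_y(\theta_w))$ lies in $\mathrm{QG}(N)$, where $R_y(\alpha)\ket{0} = \cos(\alpha)\ket{0} + \sin(\alpha)\ket{1}$. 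I then take $\rU$ to be the composition of all these $\sum_{k=1}^n 2^{k-1} = N-1$ gates, applied layer by layer with layer $k=1$ first and layer $k=n$ last; within a single layer the ordering is immaterial because the support subspaces across distinct $w$ are mutually orthogonal.

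The technical heart of the argument is an induction on $k$ showing that, after layers $1,\ldots,k$ have been applied to $\ket{b_n(0)}$, the state has the form
\[
\ket{0^{n-k}} \otimes \sum_{z_{n-k+1}\cdots z_n\, \in\, \mathbb{Z}_2^{k}} \mathsf{T}_{z_{n-k+1}}(\theta_{z_{n-k+2}\cdots z_n})\, \cdots\, \mathsf{T}_{z_{n-1}}(\theta_{z_n})\, \mathsf{T}_{z_n}(\theta)\, \ket{z_{n-k+1}\cdots z_n}.
\]
The base case $k=1$ follows directly from $R_y(\theta)\ket{0} = \mathsf{T}_0(\theta)\ket{0} + \mathsf{T}_1(\theta)\ket{1}$. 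For the inductive step, one observes that the current state is supported entirely on basis vectors with the prefix $\ket{0^{n-k}}$, so each gate $\rU_{k+1,w}$ of the next layer reaches exactly one amplitude in its two-level support, namely the one on $\ket{0^{n-k-1}} \otimes \ket{0} \otimes \ket{w}$, and splits it into the pair $\mathsf{T}_{z_{n-k}}(\theta_w)$ for $z_{n-k}\in\{0,1\}$. Orthogonality across distinct $w$ then guarantees that the layer-$(k+1)$ gates act independently on the corresponding amplitudes, and the telescoping delivers the statement at level $k+1$. After the $n$-th layer the $\ket{0^{n-k}}$ prefix vanishes, and the Born-rule identity $\mathbb{P}_{\mathcal{N}_{\rho_0}(\rU)}(\rA=k) = |\bra{b_n(k)}\rU\ket{b_n(0)}|^2$ established earlier for $n$-qubit QPUs yields precisely the product in the statement.

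The main obstacle I anticipate is the careful bookkeeping of which qubit is rotated at each layer and the verification that the two-level supports within a layer are pairwise orthogonal, so that the gates commute there and act independently on the live amplitudes. A secondary subtlety is the degenerate case where some $\theta_w$ equals $0$ or $\pi/2$, corresponding to vanishing conditional integrals in Theorem~\ref{Rudolph-Grover-Proposition}; there the relevant $\mathsf{T}$-factor evaluates to $0$ or $1$, and the induction still closes because the amplitude mass is transferred or suppressed consistently with the decomposition.
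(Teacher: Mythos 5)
Your construction is correct and follows essentially the same route as the paper: an unconditional rotation on the last qubit, then successive layers of conditional rotations with the angles from Theorem~\ref{Rudolph-Grover-Proposition}, an induction on the layer establishing the form of the intermediate state, and the Born rule at the end, with the same gate count $1+2+\cdots+2^{n-1}=N-1$. The only cosmetic difference is that you realize each layer with strict two-level unitaries (conditioning additionally on the all-zero prefix), whereas the paper uses the controlled gates $\rI_{2^{n-\ell}}\otimes\rC\rC^{(1)}_{\mathbf{z}}\rR(\theta_{\mathbf{z}})$ that ignore the prefix; both act identically on the states actually produced, and both are legitimate elementary quantum gates in the sense of $\mathrm{Emb}(\mathrm{U}(2),\mathrm{U}(N))$.
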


Next, we introduce a subset of elementary quantum gates called control quantum gates, that it will be useful to prove Theorem~\ref{theorem:Grover-Rudolph}.

\subsubsection{Control quantum gates}

Fix $1 \le \ell \le n$  and consider $\rU \in \mathrm{U}(2).$ 
We define a map 
$$
\rC^{(\ell)}: \mathbb{Z}_2^{n-1} \times \mathrm{U}(2) \rightarrow \mathbb{M}_N(\mathbb{C}), \quad (\mathbf{z},\rU) \mapsto \rC_{\mathbf{z}}^{(\ell)}\rU,
$$
where $\mathbf{z}=z_1\cdots z_{\ell -1}z_{\ell+1} \cdots z_{n} \in \mathbb{Z}_2^{n-1}$ as 
$$
\rC_{\mathbf{z}}^{(\ell)}\rU := \ket{z_1\cdots z_{\ell -1}} \bra{z_1\cdots z_{\ell -1}} \otimes \rU \otimes \ket{z_{\ell+1} \cdots z_{n}} \bra{z_{\ell+1} \cdots z_{n}}.
$$
This matrix acts over 
$\ket{\mathbf{u}} =  \ket{u_1 \cdots u_{\ell-1} u_{\ell} u_{\ell +1} \cdots u_{n}}$ for $\mathbf{u} \in \mathbb{Z}_2^n$
as follows:
$$
\rC_{\mathbf{z}}^{(\ell)}\rU \ket{\mathbf{u}} = \scalebox{0.9}{$\left\{
\begin{array}{ccc}
  w_{\ell}^{(n)}(\rU)\ket{\mathbf{u}} & \text{if }  u_1\cdots u_{\ell-1} u_{\ell+1} \cdots u_{n} = z_1\cdots z_{\ell-1} z_{\ell+1} \cdots z_{n}, \\
  0 &  \text{ otherwise.}
\end{array}
\right.$}
$$

For $\rU,\rV,\rI_2 \in \mathrm{U}(2),$ we have the following properties:
\begin{align}\label{P1}
(\rC_{\mathbf{z}}^{(\ell)}\rU)^{\star} = \rC_{\mathbf{z}}^{(\ell)}\rU^{\star}, 
\end{align}
\begin{align}\label{P2}
\rC_{\mathbf{z}}^{(\ell)}\rU \cdot \rC_{\mathbf{z}}^{(\ell)}\rV = \rC_{\mathbf{z}}^{(\ell)}\rU \rV,
\end{align}
and
\begin{align}\label{P3}
(\rC_{\mathbf{z}}^{(\ell)}\rU)^{\star} \cdot \rC_{\mathbf{z}}^{(\ell)}\rU =  \rC_{\mathbf{z}}^{(\ell)} \rI_2.
\end{align}
Observe that if $\mathbf{z}, \mathbf{z}^{\prime} \in \mathbb{Z}_2^{n-1}$ satisfies $\mathbf{z} \neq \mathbf{z}^{\prime},$
since $\bra{\mathbf{z}}\ket{\mathbf{z}^{\prime}} = \rO_N$ then 
\begin{align}\label{anhilator}
\rC_{\mathbf{z}}^{(\ell)}\rU \cdot \rC_{\mathbf{z}^{\prime}}^{(\ell)}
\rV = \rC_{\mathbf{z}^{\prime}}^{(\ell)}\rV \cdot \rC_{\mathbf{z}}^{(\ell)} \rU  = \rO_N,
\end{align}
holds.
Now, for $\mathbf{z} \in \mathbb{Z}_{2}^{n-1}$ we define the following map
$$
\rC \rC_{\mathbf{z}}^{(\ell)}:\mathrm{U}(2) \rightarrow \mathbb{M}_N(\mathbb{C}), \quad \rU \mapsto \rC \rC_{\mathbf{z}}^{(\ell)}\rU:= \rC_{\mathbf{z}}^{(\ell)} \rU + \sum_{\scalebox{0.6}{$\begin{matrix}  
  \mathbf{z}^{\prime}  \in \mathbb{Z}_2^{n-1} \\ \mathbf{z}^{\prime}  \neq  \mathbf{z}
  \end{matrix}$}
  } \rC_{\mathbf{z}^{\prime}}^{(\ell)} \rI_2.
$$
The above unitary matrix acts over 
$\ket{\mathbf{u}} =  \ket{u_1 \cdots u_{\ell-1} u_{\ell} u_{\ell +1} \cdots u_{n}}$ for $\mathbf{u} \in \mathbb{Z}_2^n$
as follows:

$$
\rC\rC_{\mathbf{z}}^{(\ell)}\rU \ket{\mathbf{u}} =\scalebox{0.9}{$\left\{
\begin{array}{ll}
  w_{\ell}^{(n)}(\rU)\ket{\mathbf{u}} & \text{if } u_1\cdots u_{\ell-1} u_{\ell+1} \cdots u_{n} = z_1\cdots z_{\ell-1} z_{\ell+1} \cdots z_{n}, \\
  \ket{\mathbf{u}} & \text{otherwise.}
\end{array}
\right.$}
$$
Then, we have the following lemma.

\begin{lemma}\label{lemma:control-gates}
For each $\mathbf{z} \in \mathbb{Z}_{2}^{n-1}$ and $1 \le \ell \le n,$ the map $\rC \rC_{\mathbf{z}}^{(\ell)} \in \mathrm{Emb}(\mathrm{U}(2),\mathrm{U}(N)).$ 
\end{lemma}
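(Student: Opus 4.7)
The plan is to establish three properties of $\rC\rC_{\mathbf{z}}^{(\ell)}$ in sequence: (i) that its image lies in $\mathrm{U}(N)$, (ii) that it respects the group operation, and (iii) that it is injective. The key auxiliary fact I will use first is the resolution of identity
\[
\sum_{\mathbf{z}' \in \mathbb{Z}_2^{n-1}} \rC_{\mathbf{z}'}^{(\ell)}\rI_2 = \rI_N,
\]
which follows because $\{\ket{\mathbf{z}'} : \mathbf{z}' \in \mathbb{Z}_2^{n-1}\}$ is an orthonormal basis of the $(n-1)$-qubit factor, so summing $\ket{\mathbf{z}'}\bra{\mathbf{z}'}$ gives the identity there, and the remaining tensor slot contributes $\rI_2$.

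To verify unitarity, I would compute $(\rC\rC_{\mathbf{z}}^{(\ell)}\rU)^{\star}\,(\rC\rC_{\mathbf{z}}^{(\ell)}\rU)$ by expanding the two sums and applying \eqref{P1}, \eqref{P2}, \eqref{P3}, and the annihilation identity \eqref{anhilator}. All mixed terms involving distinct control labels vanish by \eqref{anhilator}, so only the diagonal contributions survive, yielding $\rC_{\mathbf{z}}^{(\ell)}\rI_2 + \sum_{\mathbf{z}'\neq \mathbf{z}} \rC_{\mathbf{z}'}^{(\ell)}\rI_2$, which equals $\rI_N$ by the resolution of identity. The symmetric computation for $(\rC\rC_{\mathbf{z}}^{(\ell)}\rU)(\rC\rC_{\mathbf{z}}^{(\ell)}\rU)^{\star}$ is identical.

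For the homomorphism property, I would expand the product $\rC\rC_{\mathbf{z}}^{(\ell)}\rU \cdot \rC\rC_{\mathbf{z}}^{(\ell)}\rV$ as a double sum over control labels. Again by \eqref{anhilator}, only terms where both factors have the same control label survive; using \eqref{P2}, the $\mathbf{z}$-term contributes $\rC_{\mathbf{z}}^{(\ell)}(\rU\rV)$ while each $\mathbf{z}'\neq \mathbf{z}$ contributes $\rC_{\mathbf{z}'}^{(\ell)}\rI_2$. Summing gives exactly $\rC\rC_{\mathbf{z}}^{(\ell)}(\rU\rV)$, as required.

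Finally, injectivity is straightforward from the action on basis states: if $\rC\rC_{\mathbf{z}}^{(\ell)}\rU = \rC\rC_{\mathbf{z}}^{(\ell)}\rV$, then applying both sides to any $\ket{\mathbf{u}}$ whose control coordinates match $\mathbf{z}$ yields $w_{\ell}^{(n)}(\rU)\ket{\mathbf{u}} = w_{\ell}^{(n)}(\rV)\ket{\mathbf{u}}$ on the $\ell$-th slot, forcing $\rU = \rV$. I do not foresee any conceptual obstacle; the only delicate part is the careful bookkeeping when expanding products and sums of the $\rC_{\mathbf{z}'}^{(\ell)}$ blocks, but the orthogonality relation \eqref{anhilator} collapses nearly all cross terms automatically, so the computation is routine once the resolution-of-identity lemma is in hand.
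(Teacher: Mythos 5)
Your proposal is correct and follows essentially the same route as the paper: both arguments rest on the identities \eqref{P1}--\eqref{P3}, the annihilation relation \eqref{anhilator} to kill cross terms, and the resolution of identity $\sum_{\mathbf{z}'}\rC_{\mathbf{z}'}^{(\ell)}\rI_2=\rI_N$ (which the paper uses implicitly as $\rC\rC_{\mathbf{z}}^{(\ell)}\rI_2=\rI_N$). The only cosmetic differences are that the paper derives unitarity from the homomorphism property \eqref{P4} combined with the adjoint identity \eqref{P5}, rather than by direct expansion, and checks injectivity via the trivial-kernel criterion instead of comparing actions on basis states.
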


\begin{proof}
First at all, given $\rU,\rV \in \mathrm{U}(2),$ by using \eqref{P2} and \eqref{anhilator} , we have that
\begin{align}\label{P4}
(\rC \rC_{\mathbf{z}}^{(\ell)}\rU) \, (\rC \rC_{\mathbf{z}}^{(\ell)}\rV)
= \rC \rC_{\mathbf{z}}^{(\ell)}(\rU\rV).
\end{align}
From \eqref{P3} we have
\begin{align}\label{P5}
(\rC \rC_{\mathbf{z}}^{(\ell)}\rU)^{\star} = \rC_{\mathbf{z}}^{(\ell)} \rU^* + \sum_{\scalebox{0.6}{$\begin{matrix}  
  \mathbf{z}^{\prime}  \in \mathbb{Z}_2^{n-1} \\ \mathbf{z}^{\prime}  \neq  \mathbf{z}
  \end{matrix}$}
  } \rC_{\mathbf{z}^{\prime}}^{(\ell)} \rI_2 = \rC \rC_{\mathbf{z}}^{(\ell)}\rU^{\star}.
\end{align}
Now, by using \eqref{P5} and \eqref{P4} we have
$$
(\rC \rC_{\mathbf{z}}^{(\ell)}\rU)^{\star} \, \rC \rC_{\mathbf{z}}^{(\ell)}\rU = \rC \rC_{\mathbf{z}}^{(\ell)}(\rU^{\star}\rU) = \rC \rC_{\mathbf{z}}^{(\ell)}\rI_2 = \rI_N.
$$
hence $\rC \rC_{\mathbf{z}}^{(\ell)}\rU \in \mathrm{U}(N).$ Since 
$\rC \rC_{\mathbf{z}}^{(\ell)}\rU = \rI_N$ if and only if $\rU = \rI_2,$ then $\rC \rC_{\mathbf{z}}^{(\ell)} \in \mathrm{Emb}(\mathrm{U}(2),\mathrm{U}(N)).$ This ends the proof.    
\end{proof}

From the above lemma, for each $\mathbf{z} \in \mathbb{Z}_{2}^{n-1},$ $1 \le \ell \le n,$ and $\rU \in \mathrm{U}(2)$ the unitary matrix
$$
\rC \rC_{\mathbf{z}}^{(\ell)}\rU:= \rC_{\mathbf{z}}^{(\ell)} \rU + \sum_{\scalebox{0.6}{$\begin{matrix}  
\mathbf{z}^{\prime}  \in \mathbb{Z}_2^{n-1} \\ \mathbf{z}^{\prime}  \neq  \mathbf{z}
\end{matrix}$}
} \rC_{\mathbf{z}^{\prime}}^{(\ell)} \rI_2, 
$$
is an elementary quantum gate in $\mathrm{U}(N)$ that we called \emph{control quantum gate.} 

\begin{figure}[ht]
\centering
\includegraphics[scale=0.5]{./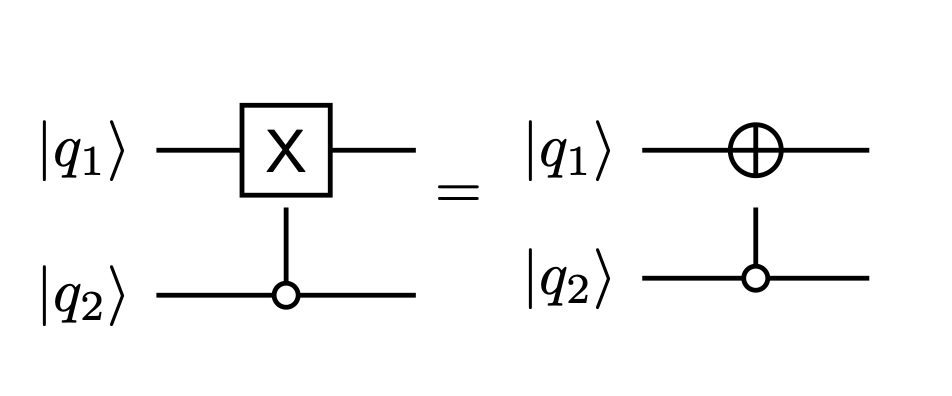}
\caption{Wire diagram for the CNOT elementary quantum gate $\rC\rC_{0}^{(1)}\rX.$}
\label{circuit03}
\end{figure}

\begin{example}
For $n=2,$ and $\ell = 1$ we have the following elementary quantum gate for each  $z \in \mathbb{Z}_2.$
Consider the unitary matrix $\rX=\begin{bmatrix} 0 & 1 \\ 1 & 0 \end{bmatrix}$ and the control quantum gate
$$
\rC\rC_{z}^{(1)}\rX = X \otimes \ket{z}\bra{z}+ \rI_2 \otimes \ket{1-z}\bra{1-z} \text{ for } z\in \mathbb{Z}_2.
$$
For $z=0,$ the unitary matrix $\rC\rC_{0}^{(1)}\rX$ it is known as the CNOT quantum gate (see Figure~\ref{circuit03}). 
\end{example}

The above example allows us to define for $2 \le \ell \le  n$ and $\mathbf{z} = z_{\ell} \cdots z_{n} \in \mathbb{Z}_{2}^{n-\ell+1}$ the control gate $\rI_{2^{\ell-1}} \otimes \rC\rC_{\mathbf{z}}^{(1)}\rU, $ where 
$$
\rC\rC_{\mathbf{z}}^{(1)}\rU = \rC_{\mathbf{z}}^{(1)} \rU + \sum_{\scalebox{0.6}{$\begin{matrix}  
\mathbf{z}^{\prime}  \in \mathbb{Z}_2^{n-\ell} \\ \mathbf{z}^{\prime}  \neq  \mathbf{z}
\end{matrix}$}} \rC_{\mathbf{z}^{\prime}}^{(1)} \rI_2,
$$
as follows. Consider the map 
$$
(\rI_{2^{\ell-1}} \otimes \rC\rC_{\mathbf{z}}^{(1)}):\mathrm{U}(2)
\longrightarrow \mathrm{U}(N), \quad \rU \mapsto (\rI_{2^{\ell-1}} \otimes \rC\rC_{\mathbf{z}}^{(1)})(\rU):=\rI_{2^{\ell-1}} \otimes \rC\rC_{\mathbf{z}}^{(1)}\rU.
$$
Then the next lemma it is not difficult to prove.

\begin{lemma}\label{lemma:control-gates1} Assume $2\le \ell \le n.$
    For each $\mathbf{z} \in \mathbb{Z}_{2}^{n-\ell+1}$ the map $\rI_{2^{\ell-1}} \otimes \rC\rC_{\mathbf{z}}^{(1)} \in \mathrm{Emb}(\mathrm{U}(2),\mathrm{U}(N)).$ 
    \end{lemma}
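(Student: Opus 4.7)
The strategy is to factor the map
$\rU \mapsto \rI_{2^{\ell-1}} \otimes \rC\rC_{\mathbf{z}}^{(1)}\rU$
as a composition of two monomorphisms, so that the result follows immediately from Lemma~\ref{lemma:control-gates} together with the fact that the composition of monomorphisms is a monomorphism.

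First, I would recognize that, for a fixed $\mathbf{z}$, the assignment
$\rU \mapsto \rC\rC_{\mathbf{z}}^{(1)}\rU$ is exactly an instance of the construction of Lemma~\ref{lemma:control-gates} with $\ell=1$, but applied to a smaller register of $n' := n-\ell+1$ qubits rather than the full $n$-qubit register. Concretely, this means interpreting the expression
$\rC_{\mathbf{z}}^{(1)}\rU + \sum_{\mathbf{z}^{\prime} \neq \mathbf{z}} \rC_{\mathbf{z}^{\prime}}^{(1)}\rI_2$
as a map $\mathrm{U}(2) \to \mathrm{U}(2^{n'})$, to which the hypotheses of the previous lemma apply verbatim. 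Hence
\[
\rC\rC_{\mathbf{z}}^{(1)} \in \mathrm{Emb}\bigl(\mathrm{U}(2),\,\mathrm{U}(2^{n'})\bigr).
\]

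Next, I would introduce the auxiliary map
\[
\iota_{\ell}: \mathrm{U}(2^{n'}) \to \mathrm{U}(N), \qquad \rV \mapsto \rI_{2^{\ell-1}} \otimes \rV,
\]
and check that $\iota_{\ell}$ is itself a group monomorphism. Multiplicativity follows directly from property~3 of the Kronecker product, since
$(\rI_{2^{\ell-1}} \otimes \rV_1)(\rI_{2^{\ell-1}} \otimes \rV_2) = \rI_{2^{\ell-1}} \otimes \rV_1\rV_2$.
Unitarity of the image follows from properties~3 and~5 together with $\rV^{\star}\rV = \rI_{2^{n'}}$, giving
$(\rI_{2^{\ell-1}} \otimes \rV)^{\star}(\rI_{2^{\ell-1}} \otimes \rV) = \rI_{2^{\ell-1}} \otimes \rI_{2^{n'}} = \rI_N$. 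Injectivity is immediate from the block structure: any diagonal block of $\rI_{2^{\ell-1}} \otimes \rV$ equals $\rV$, so $\iota_{\ell}(\rV_1) = \iota_{\ell}(\rV_2)$ forces $\rV_1 = \rV_2$.

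Finally, the map in the statement is literally $\iota_{\ell} \circ \rC\rC_{\mathbf{z}}^{(1)}$, a composition of monomorphisms, hence itself a monomorphism from $\mathrm{U}(2)$ into $\mathrm{U}(N)$, which is exactly the claim. I do not anticipate any serious obstacle: the argument is almost entirely formal, and the only point needing mild care is the bookkeeping that aligns the index $\mathbf{z}$ of the smaller-register control gate with the notation of the full register, which is a matter of identifying indices rather than mathematical content.
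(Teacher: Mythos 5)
Your proof is correct, and it supplies an argument that the paper itself omits (the text merely asserts that the lemma ``is not difficult to prove''). The factorization $\iota_{\ell}\circ \rC\rC_{\mathbf{z}}^{(1)}$ is exactly the natural route: Lemma~\ref{lemma:control-gates} with target position $1$ on the sub-register of the last $n-\ell+1$ qubits gives the first monomorphism, and your verification that $\rV\mapsto \rI_{2^{\ell-1}}\otimes\rV$ is an injective group homomorphism into $\mathrm{U}(N)$ (multiplicativity and unitarity from the Kronecker-product identities, injectivity from the block-diagonal structure) is complete. The ``bookkeeping'' point you flag is real but harmless: the paper writes $\mathbf{z}\in\mathbb{Z}_2^{n-\ell+1}$ in the lemma while its own displayed definition sums over $\mathbf{z}'\in\mathbb{Z}_2^{n-\ell}$; the control string for a sub-register of $n-\ell+1$ qubits with one target has $n-\ell$ bits, so your identification is the consistent reading and matches how the gates are used in the proof of Theorem~\ref{theorem:Grover-RudolphX}.
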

  
  Given $2 \le \ell \le n,$ from the above lemma, for each $\mathbf{z} \in \mathbb{Z}_{2}^{n-\ell +1}$ and $\rU \in \mathrm{U}(2)$ the unitary matrix
  $\rI_{2^{\ell-1}} \otimes \rC\rC_{\mathbf{z}}^{(1)}\rU$ is an elementary quantum gate in $\mathrm{U}(N).$
  
\bigskip

  Finally, for each $\alpha \in \mathbb{R}$ we will denote the rotation matrix  $\rR(\alpha) \in \mathrm{U}(2)$ as
$$
\rR(\alpha) := \begin{bmatrix}
\cos \alpha  & -\sin \alpha \\
\sin \alpha & \cos \alpha
\end{bmatrix} \in \mathrm{U}(2).
$$
Now, we have all the ingredients to start the proof of Theorem~\ref{theorem:Grover-RudolphX}.

\begin{proof}[Proof of Theorem~\ref{theorem:Grover-RudolphX}]
For each \(k \in \mathbb{Z}_{2^n}\) with binary representation 
\(
b_n(k) = z_1 z_2 \cdots z_n,
\)
the angles \(\theta_{z_j \cdots z_n}\) (for \(2 \le j \le n\)) and \(\theta\) are the parameters that appear in the decomposition~\eqref{eq:proposition} of Theorem~\ref{Rudolph-Grover-Proposition}.
To begin, we consider the elementary quantum gate
$\rU_1:=\rI_{2^{n-1}}\otimes \rR(\theta),$ 
then:
  \begin{align*}
    \ket{\psi_1} &= \rU_1 \ket{0}^{\otimes n} = ( \rI_{2^{n-1}}\otimes \rR(\theta))\ket{0}^{\otimes n} = \ket{0}^{\otimes n-1} \otimes (\cos \theta \ket{0} + \sin \theta \ket{1}) \\
    &= \ket{0}^{\otimes n-1} \otimes \displaystyle \sum_{z_n\in \mathbb{Z}_2} (\cos \theta)^{1-z_n}(\sin \theta )^{z_n}\ket{z_n} \\ 
    & = \ket{0}^{\otimes n-1} \otimes \displaystyle \sum_{z_n\in \mathbb{Z}_2} \rT_{z_n}(\theta)\ket{z_n}
 \end{align*}

Now, take the unitary matrix $\rU_2:= (\rI_{2^{n-2}}\otimes \rC\rC^{(1)}_{0}\rR(\theta_0)) (\rI_{2^{n-2}}\otimes \rC\rC^{(1)}_{1}\rR(\theta_1)),$ defined by the product of two elementary quantum gates. Then 
  \begin{align*}
    \ket{\psi_2} & = \rU_2\ket{\psi_1} = (\rI_{2^{n-2}}\otimes \rC\rC^{(1)}_{0}\rR(\theta_0)) (\rI_{2^{n-2}}\otimes \rC\rC^{(1)}_{1}\rR(\theta_1))\ket{\psi_1} \\
    & = (\rI_{2^{n-2}}\otimes \rC\rC^{(1)}_{0}\rR(\theta_0))  (\rI_{2^{n-2}}\otimes \rC\rC^{(1)}_{1}\rR(\theta_1))\left( \ket{0}^{\otimes n-2} \otimes \ket{0} \otimes  \displaystyle \sum_{z_n\in \mathbb{Z}_2} \rT_{z_n}(\theta)\ket{z_n}\right) \\
    &\\
    & = (\rI_{2^{n-2}}\otimes \rC\rC^{(1)}_{0}\rR(\theta_0))   \left( \ket{0}^{\otimes n-2} \otimes \ket{0} \otimes \displaystyle \rT_{0}(\theta)\ket{0} + \ket{0}^{\otimes n-2} \otimes R(\theta_1)\ket{0} \otimes \displaystyle \rT_{1}(\theta)\ket{1}\right)
    \\
    &\\
    & = \ket{0}^{\otimes n-2} \otimes \rR(\theta_0)\ket{0} \otimes \displaystyle \rT_{0}(\theta)\ket{0} + \ket{0}^{\otimes n-2} \otimes \rR(\theta_1)\ket{0} \otimes \displaystyle \rT_{1}(\theta)\ket{1}  \\
    &  \\
    & = \ket{0}^{\otimes n-2} \otimes \scalebox{0.87}{$\displaystyle \left( \cos \theta_0 \ket{0} \otimes \rT_0(\theta)\ket{0} + \sin \theta_0 \ket{1} \otimes \rT_0(\theta) \ket{0} + \cos \theta_1 \ket{0} \otimes  \rT_1(\theta) \ket{1} + \sin \theta_1 \ket{1}  \otimes \rT_{1}(\theta) \ket{1} \right)$} \\ 
    &  \\
    & = \ket{0}^{\otimes n-2} \otimes \scalebox{0.85}{$\displaystyle \left( \sum_{z_{n-1}\in \mathbb{Z}_2} (\cos \theta_{0})^{z_n}(\sin \theta_{0})^{1-z_n}\ket{z_n} \otimes \rT_0(\theta) \ket{0} + \sum_{z_{n-1}\in \mathbb{Z}_2} (\cos \theta_{1})^{z_n}(\sin \theta_{1})^{1-z_n}\ket{z_n}   \otimes \rT_{1}(\theta) \ket{1} \right)$} \\ 
    & \\
     &=\ket{0}^{\otimes n-2} \otimes \displaystyle \sum_{(z_{n-1}z_n)\in \mathbb{Z}_{2}^2}\rT_{z_{n-1}}(\theta_{z_n})\ket{z_{n-1}}\otimes \rT_{z_n}(\theta)\ket{z_n}.
 \end{align*}
Observe that we can write
$$
\rU_2 = \prod_{z_n \in \mathbb{Z}_2}(\rI_{2^{n-2}}\otimes \rC\rC^{(1)}_{z_n}\rR(\theta_{z_n}))  := (\rI_{2^{n-2}}\otimes \rC\rC^{(1)}_{0}\rR(\theta_0)) (\rI_{2^{n-2}}\otimes \rC\rC^{(1)}_{1}\rR(\theta_1)).
$$
For the next step we take
$$
\rU_3 = \prod_{(z_{n-1}z_n) \in \mathbb{Z}_{2}^2}(\rI_{2^{n-3}}\otimes \rC\rC^{(1)}_{z_{n-1}z_n}\rR(\theta_{z_{n-1}z_n})),
$$
which is the product of $2^2$ elementary quantum gates, to obtain
\begin{align*}
  \ket{\psi_3} & = \rU_3\ket{\psi_2} \\ 
  & = \prod_{(z_{n-1}z_n) \in \mathbb{Z}_{2}^2}\scalebox{0.9}{$(\rI_{2^{n-3}}\otimes \rC\rC^{(1)}_{z_{n-1}z_n}\rR(\theta_{z_{n-1}z_n})) \left( \ket{0}^{\otimes n-2} \otimes \displaystyle \sum_{(z_{n-1}z_n)\in \mathbb{Z}_{2}^2}\rT_{z_{n-1}}(\theta_{z_n})\ket{z_{n-1}}\otimes \rT_{z_n}(\theta)\ket{z_n}\right)$} \\
  & =   \ket{0}^{\otimes n-3} \otimes \scalebox{0.87}{$\displaystyle \sum_{z_{n-2}\in \mathbb{Z}_2} \sum_{(z_{n-1}z_n)\in \mathbb{Z}_{2}^2}(\cos \theta_{z_{n-1}z_n})^{z_{n-2}}
  (\sin \theta_{z_{n-1}z_n})^{1-z_{n-2}} \ket{z_{n-2}}  \otimes \rT_{z_{n-1}}(\theta_{z_n})\ket{z_{n-1}}\otimes \rT_{z_n}(\theta)\ket{z_n}$}\\
  & = \ket{0}^{\otimes n-3} \otimes \scalebox{0.9}{$ \displaystyle \sum_{(z_{n-2} z_{n-1} z_n)\in \mathbb{Z}_{2^3}}\rT_{z_{n-2}}(\theta_{z_{n-1}z_n})\ket{z_{n-2}}\otimes \rT_{z_{n-1}}(\theta_{z_n})\ket{z_{n-1}}\otimes \rT_{z_n}(\theta)\ket{z_n} $}\\
 &  = \ket{0}^{\otimes n-3} \otimes\scalebox{0.9}{$ \displaystyle \sum_{(z_{n-2} z_{n-1} z_n)\in \mathbb{Z}_{2^3}}\rT_{z_{n-2}}(\theta_{z_{n-1}z_n})\rT_{z_{n-1}}(\theta_{z_n})\rT_{z_n}(\theta)\ket{z_{n-2}z_{n-1}z_n}$}.
\end{align*}

Proceeding inductively, at the step $n$ we
take
$$
\rU_n = \prod_{(z_2 \cdots z_n) \in \mathbb{Z}_{2}^{n-1}}\rC\rC^{(1)}_{z_2 \cdots z_n}\rR(\theta_{z_2 \cdots z_n}),
$$
which is the product of $2^{n-1}$-elementary quantum gates, and then
\begin{align*}
    \ket{\psi_n} & = \rU_{n}\ket{\psi_{n-1}} \\
    & = \prod_{(z_2 \cdots z_n) \in \mathbb{Z}_{2}^{n-1}}\rC\rC^{(1)}_{z_2 \cdots z_n}\rR(\theta_{z_2 \cdots z_n}) \left( 
    \ket{0} + \sum_{(z_{2}\cdots z_n)\in \mathbb{Z}_{2}^{n-1}}\rT_{z_{2}}(\theta_{z_{3}\cdots z_n}) \cdots \rT_{z_n}(\theta)\ket{z_3 \cdots z_n}
    \right) \\
    & =  \displaystyle \sum_{z_1 \in \mathbb{Z}_2}\sum_{(z_{2}\cdots z_n)\in \mathbb{Z}_{2}^{n-1}} (\cos \theta_{z_2\cdots z_n})^{z_1} (\sin \theta_{z_2\cdots z_n})^{1-z_1} \ket{z_1} \otimes  \rT_{z_{2}}(\theta_{z_{3}\cdots z_n})\cdots \rT_{z_n}(\theta)\ket{z_2 \cdots z_n} \\
    & = \displaystyle  \sum_{(z_{1}\cdots z_n)\in \mathbb{Z}_{2}^{n}}\rT_{z_{1}}(\theta_{z_{2}\cdots z_n}) \cdots \rT_{z_n}(\theta)\ket{z_1 \cdots z_n}.
 \end{align*}
 Let $(\rA,\rho_0,\mathcal{N})$ be a universal digital quantum computer in $\mathbb{M}_N(\mathbb{C})$ and consider the quantum pure state $\ket{\psi_n}\bra{\psi_n}.$  Assume that for $k' \in \mathbb{Z}_{2^n}$ we have $b_n(k')=z_1'z_2'\cdots z_n'.$ Then the random variable $\rA$ has a law in the algebraic probability space $(\mathbb{M}_N(\mathbb{C}),\ket{\psi_n}\bra{\psi_n})$ given by
\begin{align*}
\mathbb{P}_{\ket{\psi_n}\bra{\psi_n}}(\rA = k') & = 
\tr\left(\ket{\psi_n}\bra{\psi_n}\ket{b_n(k')}\bra{b_n(k')}\right) \\
& = \tr\left(\ket{\psi_n}\bra{\psi_n}\ket{z_1'\cdots z_n'}\bra{z_1'\cdots z_n'}\right)  
\\
& =|\bra{\psi_n} \ket{z_1'\cdots z_n'}|^2 \\ 
& =\rT_{z_1'}^2(\theta_{z_{2}'\cdots z_n'})\rT_{z_{2'}}^2(\theta_{z_{3}'\cdots z_n'}) \cdots \rT_{z_{n-1}'}^2(\theta_{z_n'})\rT_{z_n'}^2(\theta).
\end{align*}

Since $\ket{\psi_n}=\rU_n \rU_{n-1} \cdots \rU_1 \ket{0}^{\otimes n} = \rU_n \rU_{n-1} \cdots \rU_1 \ket{b_n(0)},$ put $\rU:= \rU_n \rU_{n-1} \cdots \rU_1$ and hence 
$\ket{\psi_n}\bra{\psi_n} = \mathcal{N}_{\rho_0}(\rU).$ Thus,
$(\mathbb{M}_N(\mathbb{C}),\ket{\psi_n}\bra{\psi_n}) = (\mathbb{M}_N(\mathbb{C}), \mathcal{N}_{\rho_0}(\rU))$
is an output algebraic probability space obtained by means the unitary matrix $\rU.$ 
Now, for $2 \le \ell \le n,$  the unitary matrix
$$
\rU_{\ell} = \prod_{(z_{n-\ell+2} \cdots z_n) \in \mathbb{Z}_{2}^{\ell-1}}(\rI_{2^{n-\ell}}\otimes \rC\rC^{(1)}_{z_{n-\ell+2}\cdots z_n}\rR(\theta_{z_{n-\ell+2} \cdots z_n})),
$$
is the product of $2^{\ell}$-elementary quantum gates and $\rU_1 = \rI_{2^{n-1}}\otimes \rR(\theta)$ is a single elementary quantum gate. In consequence, the quantum circuit $\rU=\rU_{n-1} \cdots \rU_2\rU_1$ is the product of $2^n-1$- elementary quantum gates and hence, it has length $N-1.$ This ends the proof.
\end{proof}

\subsection{Constructing a Quantum Circuit for a Specific Task}

We now apply Theorem~\ref{theorem:Grover-Rudolph} to construct a quantum circuit that accomplishes the task outlined in Example 2 of Section~\ref{ch_3}. The objective is to construct a quantum circuit, \(\rU \in \mathrm{U}(8)\), such that
\[
\ket{\Psi} = \frac{1}{\sqrt{3}} \Bigl(\ket{b_3(1)} + \ket{b_3(2)} + \ket{b_3(4)}\Bigr) = \frac{1}{\sqrt{3}}\Bigl(\ket{100} + \ket{010} + \ket{001}\Bigr) = \rU \ket{b_3(0)}.
\]
where the measurement process determines the probability distribution of the random variable \(\rA\) in the algebraic probability space \((\mathbb{M}_N(\mathbb{C}), \rho)\), given by
\[
\mathbb{P}_{\rho}(\rA = k) = \frac{1}{3}\Bigl(\delta_{k,1} + \delta_{k,2} + \delta_{k,4}\Bigr).
\]

This probability can be rewritten as
\[
\mathbb{P}_{\rho}(\rA = k) = \int_{k/8}^{(k+1)/8}  \varrho(x) \, dx,
\]
where \(\varrho(x)\) is defined in terms of the characteristic function  \(\mathbf{1}_{I}(x)\), which is given by:
\[
\mathbf{1}_{I}(x) =
\begin{cases}
1, & x \in I, \\
0, & x \notin I.
\end{cases}
\]
Thus, we define
\[
\varrho(x):= \frac{8}{3} \left(
 \mathbf{1}_{[1/8,2/8]}(x) + \mathbf{1}_{[2/8,3/8]}(x) + \mathbf{1}_{[4/8,5/8]}(x)\right).
\]
For this case,
\[
\theta = \arccos \sqrt{\int_{0}^{1/2} \varrho(x) dx} =  \arccos \sqrt{\frac{2}{3}} = \frac{\pi}{6}.
\]

To construct \(\rU_2\), we use:
\[
\rU_2:= \prod_{z_2\in\mathbb{Z}_2}(\rI_{2}\otimes \rC\rC^{(1)}_{z_2}\mathsf{R}(\theta_{z_2}))=(\rI_{2}\otimes \rC\rC^{(1)}_{0}\mathsf{R}(\theta_0)) (\rI_{2}\otimes \rC\rC^{(1)}_{1}\mathsf{R}(\theta_1)).
\]
where
\[
\theta_0 = \arccos\sqrt{\frac{\int_{0}^{1/4}\varrho(x) \, dx}{\int_{0}^{1/2}\varrho(x) \, dx}} =  \frac{\pi}{2}, \quad \theta_1 = \arccos\sqrt{\frac{\int_{1/2}^{3/4} \varrho(x) \, dx}{\int_{1/2}^{1} \varrho(x) \, dx}} = 0.
\]
Since \(\rR(0)=\rI_2\), it follows that \( (\rI_{2}\otimes \rC\rC^{(1)}_{1}\mathsf{R}(\theta_1)) = \rI_{8}. \) Thus,
\[
\rU_2 = (\rI_{2}\otimes \rC\rC^{(1)}_{0}\mathsf{R}(\pi/2)).
\]
To complete the construction, we determine
\[
\mathsf{U}_3 = \prod_{(z_{1},z_2) \in \mathbb{Z}_{2}^2}\rC\rC^{(1)}_{z_{1}z_2}\mathsf{R}(\theta_{z_{1}z_2}),
\]
where
\[
\theta_{00} = \arccos\sqrt{\frac{\int_{0}^{1/8}\varrho(x) \, dx}{\int_{0}^{1/4}\varrho(x) \, dx}} = \frac{\pi}{2}, \quad \theta_{11} =\arccos\sqrt{\frac{\int_{3/4}^{7/8}\varrho(x) \, dx}{\int_{3/4}^{1}\varrho(x) \, dx}}=\frac{\pi}{2}.
\]
\[
\theta_{01}= \arccos \sqrt{\frac{\int_{1/2}^{5/8} \varrho(x) \, dx}{\int_{1/2}^{3/4} \varrho(x) \, dx}}= 0, \quad
\theta_{10} = \arccos\sqrt{\frac{ \int_{1/4}^{3/8} \varrho(x) \, dx}{\int_{1/4}^{1/2} \varrho(x) \, dx}}=0.
\]
Thus,
\[
\rU_3 = \rC\rC^{(1)}_{00}\mathsf{R}(\pi/2)\,\rC\rC^{(1)}_{11}\mathsf{R}(\pi/2).
\]
Consequently,
\[
\rU =  \rC\rC^{(1)}_{00}\mathsf{R}(\pi/2)\,\rC\rC^{(1)}_{11}\mathsf{R}(\pi/2) \, (\rI_{2}\otimes \rC\rC^{(1)}_{0}\mathsf{R}(\pi/2))\, (\rI_{4} \otimes \rR(\pi/6))
\]
is a quantum circuit of length \(4.\)

\bigskip

We now illustrate Theorem~\ref{theorem:Grover-Rudolph} with a specific numerical example.

\subsection{A Numerical Example}

Consider the following density function (see Figure~\ref{fig:Density_Function}):
\[
\varrho(x) =
\begin{cases}
4x, & 0 \leq x \leq \frac{1}{2}, \\
4 - 4x, & \frac{1}{2} \leq x \leq 1.
\end{cases}
\]

\begin{figure}[h]
    \centering
    \includegraphics[scale=0.5]{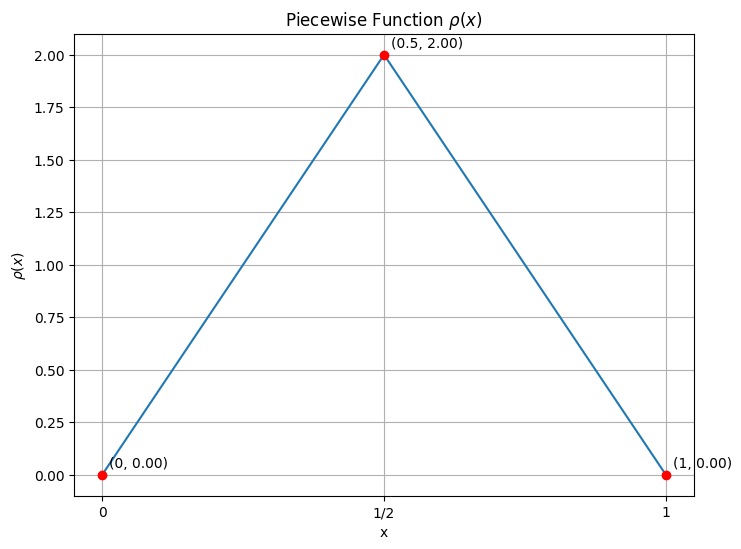}
    \caption{The density function \(\varrho(x)\).}
    \label{fig:Density_Function}
\end{figure}

We now construct a quantum circuit that realizes the unitary matrix \(\rU\) as described in Theorem~\ref{theorem:Grover-Rudolph}, such that
\[
\mathbb{P}_{\rho_0}(\mathsf{A} = k) = \int_{\frac{k}{2^n}}^{\frac{k+1}{2^n}} \varrho(x) \, dx, \quad \text{for each } k \in \mathbb{Z}_{2^3}.
\]
This is implemented in a \(3\)-qubit universal digital quantum computer \((\rA, \rho_0, \mathcal{N})\) within the algebra \(\mathbb{M}_{8}(\mathbb{C})\). Consequently, from Theorem~\ref{theorem:Grover-Rudolph}, we must compute \(2^3 - 1 = 7\) unitary matrices or, equivalently, determine \(7\) parameters.

\bigskip

\noindent \textbf{Step 1}: Computing \(\mathsf{U}_1\)
We start with the unitary matrix
\[
\mathsf{U}_1 := \rI_{2^{2}} \otimes \mathsf{R}(\theta),
\]
where
\[
\theta = \arccos \sqrt{\frac{\int_{0}^{1/2} \varrho(x) dx}{\int_{0}^{1} \varrho(x) dx}} = \arccos \sqrt{\int_{0}^{1/2} \varrho(x) dx} = \frac{\pi}{4}.
\]
Figure~\ref{Figura_1} illustrates the value of the integral \(\int_{0}^{1} \varrho(x) dx\) (shaded in blue) and \(\int_{0}^{1/2} \varrho(x) dx\) (shaded in dark blue).

\begin{figure}[h]
    \centering
    \includegraphics[width=0.45\textwidth]{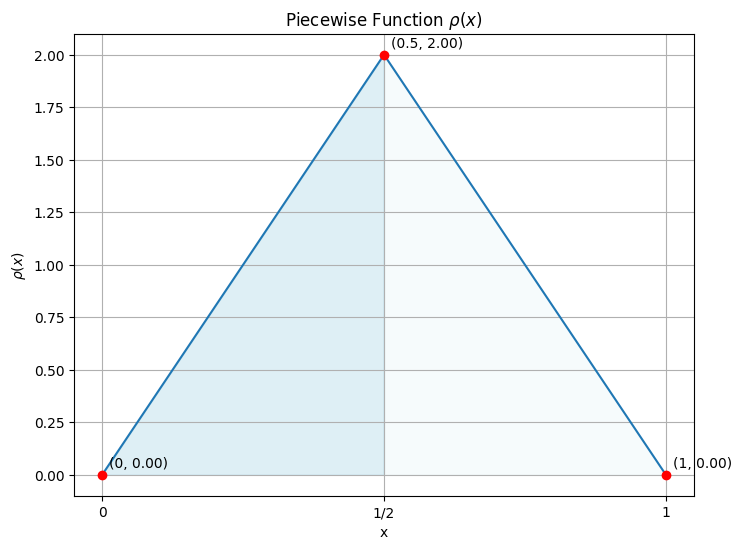}
    \caption{The integral \(\int_{0}^{1} \varrho(x) dx\) is colored in blue.}
    \label{Figura_1}
\end{figure}

\bigskip

\noindent \textbf{Step 2}: Computing \(\mathsf{U}_2\)
To construct
\[
\rU_2 := \prod_{z_2\in\mathbb{Z}_2}(\rI_{2} \otimes \rC\rC^{(1)}_{z_2} \mathsf{R}(\theta_{z_2})),
\]
we need the parameters
\[
\theta_0 = \arccos\sqrt{\frac{\int_{0}^{1/4} \varrho(x) \, dx}{\int_{0}^{1/2} \varrho(x) \, dx}} =  \frac{\pi}{3}, \quad
\theta_1 = \arccos\sqrt{\frac{\int_{1/2}^{3/4} \varrho(x) \, dx}{\int_{1/2}^{1} \varrho(x) \, dx}} = \frac{\pi}{6}.
\]

Figure~\ref{Figura_2} shows the decomposition of these integrals into their respective colored regions.

\begin{figure}[h]
    \centering
    \includegraphics[width=0.5\textwidth]{./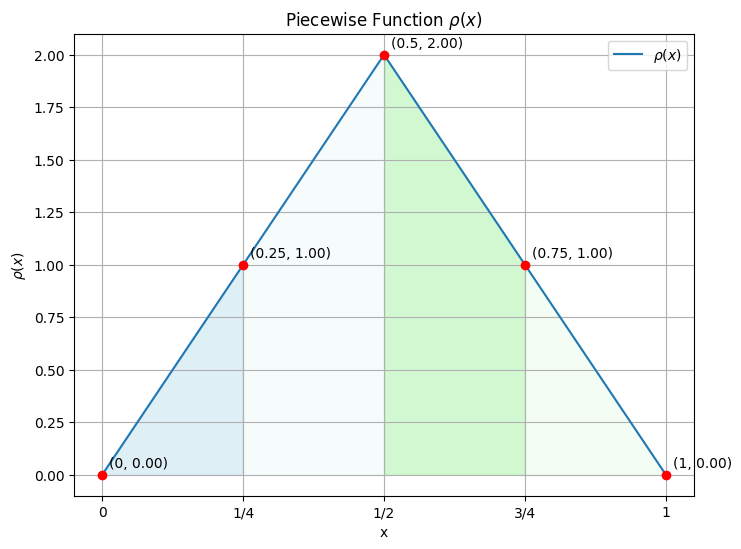}
    \caption{The integral \(\int_{0}^{1/2}\varrho(x) \, dx\) is in blue, and \(\int_{1/2}^{1} \varrho(x) \, dx\) is in green.}
    \label{Figura_2}
\end{figure}

\noindent \textbf{Step 3}: Computing \(\mathsf{U}_3\)
To finalize, we compute
\[
\mathsf{U}_3 = \prod_{(z_{1},z_2) \in \mathbb{Z}_{2}^2} \rC\rC^{(1)}_{z_{1}z_2} \mathsf{R}(\theta_{z_{1}z_2}),
\]
where the parameters are given by
\[
\theta_{00} = \frac{\pi}{3}, \quad
\theta_{11} = \frac{\pi}{6}, \quad
\theta_{01} = \arccos \frac{\sqrt{21}}{6}, \quad
\theta_{10} = \arccos \frac{\sqrt{15}}{6}.
\]
Figure~\ref{Figura_3} illustrates the integral regions required to compute these parameters. Thus, the complete quantum circuit \(\rU\) is given by
\[
\rU := \left(\prod_{(z_{1},z_2) \in \mathbb{Z}_{2}^2} \rC\rC^{(1)}_{z_{1}z_2} \mathsf{R}(\theta_{z_{1}z_2})\right)
 \left(\prod_{z_2\in\mathbb{Z}_2}(\rI_{2} \otimes \rC\rC^{(1)}_{z_2} \mathsf{R}(\theta_{z_2}))\right)
\left(\rI_{2^{2}} \otimes \mathsf{R}(\theta)\right).
\]

\begin{figure}[h]
    \centering
    \includegraphics[width=0.55\textwidth]{./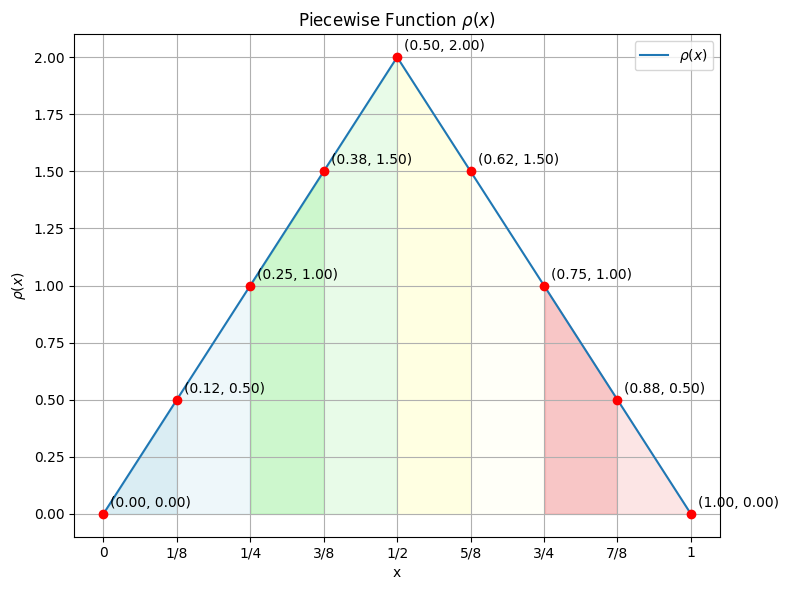}
    \caption{The integral areas used to determine parameters \(\theta_{00}, \theta_{01}, \theta_{10},\) and \(\theta_{11}\).}
    \label{Figura_3}
\end{figure}

\noindent\textbf{Quantum Circuit Implementation}: 
Figure~\ref{fig:quantikz-circuit} illustrates the corresponding quantum circuit. We simulate this quantum circuit using Qiskit's Aer module \texttt{qasm\_simulator}. Running \(2048\) experiments in a simulated \(3\)-qubit quantum computer, we obtain the probability distribution shown in Figure~\ref{Histogram}.

\begin{figure}[ht] 
    \centering
    \scalebox{0.8}{ 
    \begin{quantikz}[row sep=0.5cm, font=\scriptsize]
    \lstick{$z_0$} &&&\gate[3,style={fill=blue!10},label style=blue]{\rU_3}&\\
    \lstick{$z_1$}&&\gate[2,style={fill=green!10},label style=customgreen]{\rU_2}&&\\
    \lstick{$z_2$}&\gate[style={fill=red!10},label style=red]{\rU_1}&&&
    \end{quantikz} = 
    \begin{quantikz}[row sep=0.5cm, font=\scriptsize]
           &&&&& \gate[style={fill=blue!10},label style=blue]{\rR(\theta_{00})}& \gate[style={fill=blue!10},label style=blue]{\rR(\theta_{10})}&  \gate[style={fill=blue!10},label style=blue]{\rR(\theta_{01})} & \gate[style={fill=blue!10},label style=blue]{\rR(\theta_{11})} & \meter{} \\
           &\slice{}&&  \gate[style={fill=green!10},label style=customgreen]{\rR(\theta_0)} & \gate[style={fill=green!10},label style=customgreen]{\rR(\theta_1)}& \ocontrol{}&\control{}& \ocontrol{}&\control{}  & \meter{}\\
        & \gate[style={fill=red!10},label style=red]{\rR(\theta)}&&\octrl{-1}&\ctrl{-1}\slice{}  & \octrl{-2}&\octrl{-2}&\ctrl{-2}&\ctrl{-2} &  \meter{}
    \end{quantikz}
    }
    \caption{Quantum circuit \(\rU\) implementing Theorem~\ref{theorem:Grover-Rudolph}.} 
    \label{fig:quantikz-circuit} 
\end{figure}
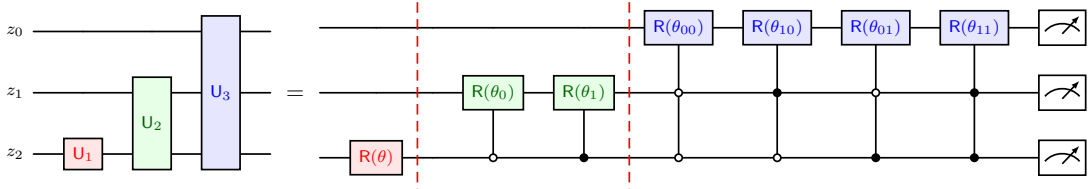

\begin{figure}[h]
    \centering
    \includegraphics[scale=0.5]{./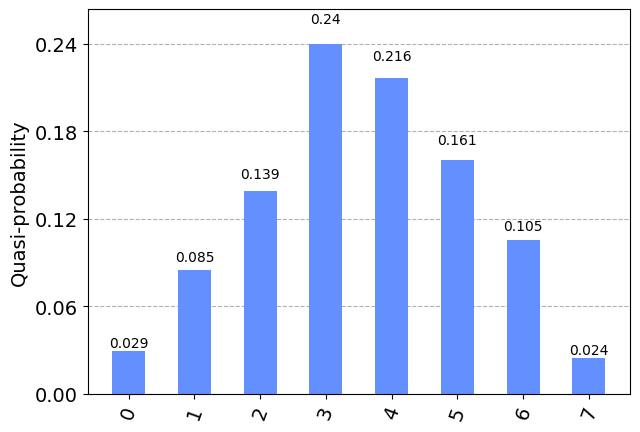}
    \caption{Outcome of the Grover-Rudolph algorithm on a simulated \(3\)-qubit quantum computer.}
    \label{Histogram}
\end{figure}

\bigskip

We can conclude that, despite the simplicity of the given example, it provides valuable insight into the correct functioning of the algorithm. From the comparison, it is evident that the computed values of the integrals of \(\varrho(x)\) are reasonably close to the true values. The maximum deviation occurs in the interval \([3/8, 1/2]\), where the computed integral is \(0.240\) compared to the true value of \(0.21875\), resulting in a difference of approximately \(0.02125\). The minimum deviation is observed in the interval \([1/2, 5/8]\), with a difference of \(0.00275\).  In general, these results indicate that the quantum algorithm \(\rU\) provides a reasonably accurate approximation of the true integrals of the probability density function \(\varrho(x)\) over the specified intervals. These results were obtained using the simulator provided by the Qiskit Python package.

\bigskip

A natural next step would be to implement this circuit on a real quantum device, allowing for a more realistic assessment of its performance and potential limitations. Nevertheless, this example effectively illustrates the algorithm's behavior and facilitates a deeper understanding of its underlying principles.

\section{Conclusions and Final Remarks}\label{Conclusions}

In this work, we developed a novel mathematical framework for a universal digital quantum computer using the formalism of algebraic probability theory. This approach allowed us to rigorously define quantum circuits as finite sequences of elementary quantum gates and establish their role in implementing unitary transformations. Our framework was applied to the construction of quantum circuits that encode probability distributions, specifically in the context of the Grover-Rudolph algorithm.

\bigskip

Through our analysis, we demonstrated that every unitary transformation in \(\mathrm{U}(N)\) can be decomposed into a finite sequence of elementary quantum gates, leading to the concept of a \emph{universal dictionary}. This result guarantees that the set of elementary quantum gates \(\mathrm{QG}(N)\) forms a complete basis for quantum computation, a fundamental property for the realization of practical quantum algorithms.

\bigskip

A key outcome of our study was the explicit construction of a quantum circuit designed to implement a specific probability distribution via a sequence of controlled quantum gates and rotation matrices. This was verified numerically through a quantum simulation, confirming the theoretical predictions and demonstrating the effectiveness of our approach. Some directions for future research are the following:
\begin{itemize}
    \item \textbf{Generalization to Higher-Dimensional Systems:} Extending the algebraic probability framework to more complex quantum systems, such as continuous-variable quantum computers.
    \item \textbf{Optimization of Quantum Circuits:} Investigating circuit minimization techniques to reduce the number of elementary quantum gates required for a given unitary transformation.
    \item \textbf{Application to Quantum Machine Learning:} Exploring how the methods developed here can be used to encode and manipulate probability distributions relevant to quantum machine learning algorithms.
\end{itemize}

The results presented in this paper provide a rigorous mathematical foundation for quantum circuit synthesis within an algebraic probability framework. By bridging the gap between algebraic probability and quantum computing, our work offers a new perspective on quantum algorithm design. The Grover-Rudolph algorithm serves as a concrete example of how probability distributions can be precisely encoded in quantum circuits, opening doors for further advancements in quantum information processing.

\bigskip

As quantum technologies continue to evolve, the mathematical techniques introduced in this paper may serve as valuable tools for designing efficient quantum algorithms and expanding the capabilities of quantum computation. The theoretical insights developed in this work have the potential to contribute to the broader effort of enhancing the practicality and scalability of quantum computing in real-world applications

\bigskip

\noindent \textbf{Acknowledgements:} This research was funded by the grant number COMCUANTICA/007 from the Generalitat Valenciana, Spain and by the grant number INDI24/17 from the Universidad CEU Cardenal Herrera, Spain,

\begin{appendices}

\section{Proof of Theorem~\ref{Liouville-vonNeumann}}\label{Appendix_1}

To prove Theorem~\ref{Liouville-vonNeumann} we use the boundedness and self-adjointness of the commutator map induced by a fixed Hermitian matrix \(\rH\).

\begin{proposition}\label{bounded}
Let \(\rH \in \mathbb{M}_N(\mathbb{C})\) be an Hermitian matrix. Then the linear map 
\[
[\rH,\cdot] : (\mathbb{M}_N(\mathbb{C}), \|\cdot\|_{HS}) \longrightarrow (\mathbb{M}_N(\mathbb{C}), \|\cdot\|_{HS}),
\]
defined by \(\rA \mapsto [\rH,\rA]\), is bounded and self-adjoint.
\end{proposition}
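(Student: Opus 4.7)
The plan is to handle the two claims separately since each reduces to a short algebraic manipulation; the main work is simply being careful with the conventions of the Hilbert--Schmidt inner product $(\rA,\rB)_{HS} = \tr(\rA^{\star}\rB)$ and the induced norm.

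For boundedness, I would start from the definition $[\rH,\rA] = \rH\rA - \rA\rH$ and apply the triangle inequality in the Hilbert--Schmidt norm, reducing the problem to estimating $\|\rH\rA\|_{HS}$ and $\|\rA\rH\|_{HS}$. Then I would invoke the submultiplicativity inequality $\|\rX\rY\|_{HS} \le \|\rX\|_{HS}\|\rY\|_{HS}$, which follows directly from the Cauchy--Schwarz inequality applied to the rows of $\rX$ and columns of $\rY$. This yields
\[
\|[\rH,\rA]\|_{HS} \;\le\; 2\,\|\rH\|_{HS}\,\|\rA\|_{HS},
\]
so the operator norm of $[\rH,\cdot]$ is bounded by $2\|\rH\|_{HS}$.

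For self-adjointness with respect to $(\cdot,\cdot)_{HS}$, the key step is to expand
\[
\bigl([\rH,\rA],\rB\bigr)_{HS} \;=\; \tr\!\bigl((\rH\rA - \rA\rH)^{\star}\rB\bigr) \;=\; \tr(\rA^{\star}\rH^{\star}\rB) - \tr(\rH^{\star}\rA^{\star}\rB),
\]
then use the hypothesis $\rH^{\star} = \rH$ to replace $\rH^{\star}$ by $\rH$, and finally apply the cyclic invariance of the trace in the second term to rewrite it as $\tr(\rA^{\star}\rB\rH)$. Factoring out $\rA^{\star}$ gives exactly $\tr\!\bigl(\rA^{\star}[\rH,\rB]\bigr) = (\rA,[\rH,\rB])_{HS}$, which is the required identity.

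There is no genuine obstacle here: the only mild point requiring care is the order of factors under the adjoint, since $(\rH\rA)^{\star} = \rA^{\star}\rH^{\star}$ produces the correct sign only because $\rH^{\star} = \rH$; without Hermiticity the argument would instead show that $[\rH,\cdot]^{\star} = -[\rH^{\star},\cdot]$, i.e.\ the commutator with a skew-Hermitian matrix is skew-adjoint on $(\mathbb{M}_N(\mathbb{C}),\|\cdot\|_{HS})$. I would make this observation explicit at the end of the proof, since it ties in naturally with the skew-Hermitian generators used just afterwards in the definition of the map $\mathcal{N}$.
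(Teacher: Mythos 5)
Your proof is correct, and your self-adjointness argument is essentially the one the paper gives: expand $\tr([\rH,\rA]^{\star}\rB)$, use $\rH^{\star}=\rH$, and apply cyclicity of the trace to land on $\tr(\rA^{\star}[\rH,\rB])$. Where you genuinely differ is in the boundedness step: the paper simply observes that $(\mathbb{M}_N(\mathbb{C}),\|\cdot\|_{HS})$ is a finite-dimensional Hilbert space, so every linear map is automatically bounded, whereas you derive the explicit estimate $\|[\rH,\rA]\|_{HS}\le 2\|\rH\|_{HS}\|\rA\|_{HS}$ from the triangle inequality and submultiplicativity of the Hilbert--Schmidt norm. Your route costs one extra (standard) inequality but buys a quantitative bound on the operator norm of $[\rH,\cdot]$ and an argument that survives in infinite dimensions, where the paper's one-line justification would fail; either is acceptable here. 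One small correction to your closing aside: the general identity your computation yields is $[\rH,\cdot]^{\star}=[\rH^{\star},\cdot]$, with no minus sign, since $\tr\bigl((\rH\rA-\rA\rH)^{\star}\rB\bigr)=\tr\bigl(\rA^{\star}(\rH^{\star}\rB-\rB\rH^{\star})\bigr)$. Your stated conclusion that the commutator with a skew-Hermitian $\mathsf{G}$ is skew-adjoint is nevertheless correct, because then $[\mathsf{G},\cdot]^{\star}=[\mathsf{G}^{\star},\cdot]=[-\mathsf{G},\cdot]=-[\mathsf{G},\cdot]$.
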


\begin{proof}
Linearity is immediate from the definition of the commutator. Since \((\mathbb{M}_N(\mathbb{C}), \|\cdot\|_{HS})\) is a finite-dimensional Hilbert space, every linear map is bounded. To show that the map is self-adjoint, we must verify that
\[
\langle [\rH,\rA], \rB \rangle_{HS} = \langle \rA, [\rH,\rB] \rangle_{HS},
\]
for all \(\rA, \rB \in \mathbb{M}_N(\mathbb{C})\). Indeed, using the definition of the Hilbert--Schmidt inner product,
\[
\langle [\rH,\rA], \rB \rangle_{HS} = \operatorname{tr}\left([\rH,\rA]^\star \rB\right)
= \operatorname{tr}\left((\rA^\star \rH^\star - \rH^\star \rA^\star)\rB\right).
\]
Since \(\rH\) is Hermitian (\(\rH^\star=\rH\)), this becomes
\[
\operatorname{tr}\left(\rA^\star \rH \rB - \rH \rA^\star \rB\right)
= \operatorname{tr}\left(\rA^\star \rH \rB\right) - \operatorname{tr}\left(\rA^\star \rB \rH\right).
\]
By the cyclicity of the trace we have
\(
\operatorname{tr}\left(\rA^\star \rB \rH\right) = \operatorname{tr}\left(\rH\rA^\star \rB\right).
\)
Therefore, 
\[
\langle [\rH,\rA], \rB \rangle_{HS} = \operatorname{tr}\left(\rA^\star (\rH \rB - \rB\rH)\right)
= \langle \rA, [\rH,\rB] \rangle_{HS}.
\]
This shows that the map \( [\rH,\cdot] \) is self-adjoint.
\end{proof}

\bigskip

Our next result concerns the time evolution of the quantum state governed by the Liouville–von Neumann equation.

\begin{proof}[Proof of Theorem~\ref{Liouville-vonNeumann}]
By Proposition~\ref{bounded}, the differential equation \eqref{dynamics} is linear in the Hilbert space 
\(\bigl(\mathbb{M}_{N}(\mathbb{C}), \|\cdot\|_{HS}\bigr)\). Existence and uniqueness of its solutions 
are therefore guaranteed.  To verify that 
\(
\rho(t) \;=\; e^{-\,i\,t\,\rH} \,\rho_{0}\, e^{\,i\,t\,\rH}
\)
is indeed a solution, we differentiate with respect to \(t\):
\begin{align*}
\frac{d\rho(t)}{dt} & =
-\,i\,\rH\,e^{-\,i\,t\,\rH}\,\rho_{0}\,e^{\,i\,t\,\rH}
\;+\;
e^{-\,i\,t\,\rH}\,\rho_{0}\,\bigl(i\,\rH\bigr)\,e^{\,i\,t\,\rH} \\
& =
-\,i\,\bigl[\rH,\,e^{-\,i\,t\,\rH}\,\rho_{0}\,e^{\,i\,t\,\rH}\bigr]
\;=\;
-\,i\,[\rH,\,\rho(t)].
\end{align*}
Hence, \(\rho(t)\) satisfies the differential equation \eqref{dynamics}. 
Because \(\rho(t)\) is given in closed form by the exponential of a skew-Hermitian operator, 
and unitary conjugation preserves the rank of \(\rho_{0}\), 
all stated properties follow immediately.
\end{proof}

\section{Proof of Theorem~\ref{thm:universal_dictionary}}\label{Appendix_A}
  
  Along this appendix we assume that $N \in \mathbb{N}$ with $N\ge 2.$ Let $2 \le N < M$ two integers numbers, along this section we denote by  $\mathrm{Emb}(\mathrm{U}(N),\mathrm{U}(M))$ the set of injective group homomorphisms. We now prove a more general result, of which Theorem~\ref{thm:universal_dictionary} is a special case.

\begin{theorem}\label{general_universal_dictionary}
Let $N \ge 3.$ Given $\rU \in \mathrm{U}(N)$ there exists 
$m=\frac{N(N-1)}{2}$ and 
$$
(\rV_1,f_1),\ldots,(\rV_m,f_m) \in \mathrm{U}(2) \times 
\mathrm{Emb}(\mathrm{U}(2),\mathrm{U}(N)),
$$
such that 
$$
\rU = f_m(\rV_m)\cdots f_1(\rV_1).
$$
\end{theorem}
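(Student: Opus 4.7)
My plan is to prove this by induction on $N$ using a Givens-type reduction by two-level unitaries. I begin by introducing a canonical family of embeddings: for each pair $1 \le p < q \le N$, define $\iota_{p,q} \colon \mathrm{U}(2) \to \mathrm{U}(N)$ to be the map sending $\rV \in \mathrm{U}(2)$ to the matrix equal to $\rI_N$ outside rows and columns $\{p,q\}$ and whose $2 \times 2$ principal submatrix indexed by $\{p,q\}$ equals $\rV$. Each $\iota_{p,q}$ is manifestly a group homomorphism, and it is injective since $\iota_{p,q}(\rV) = \rI_N$ forces $\rV = \rI_2$; hence $\iota_{p,q} \in \mathrm{Emb}(\mathrm{U}(2),\mathrm{U}(N))$.

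For the induction, the base case $N=2$ is trivial (take $f_1 = \mathrm{id}_{\mathrm{U}(2)}$ and $\rV_1 = \rU$, giving $m=1$). For the inductive step from $N-1$ to $N$, given $\rU \in \mathrm{U}(N)$, the plan is to construct $N-1$ two-level unitaries $\rM_j := \iota_{1,j}(\rV_j)$ for $j = 2, \ldots, N$ such that $\rM_2 \rM_3 \cdots \rM_N \rU$ has the block form $\mathrm{diag}(1,\rU')$ for some $\rU' \in \mathrm{U}(N-1)$. The $\rV_j$'s are chosen so as to zero out the first column of $\rU$ from the bottom up: at stage $j$, the current intermediate matrix has a first column of the form $(a_1, 0, \ldots, 0, a_j, 0, \ldots, 0)^{T}$ with nonzero entries only at positions $1$ and $j$, and the choice
\[
\rV_j \;=\; \frac{1}{\sqrt{|a_1|^2+|a_j|^2}} \begin{pmatrix} \overline{a_1} & \overline{a_j} \\ -a_j & a_1 \end{pmatrix} \in \mathrm{SU}(2) \subset \mathrm{U}(2)
\]
ensures $\rV_j (a_1, a_j)^{T} = (\sqrt{|a_1|^2 + |a_j|^2},\, 0)^{T}$. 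Because $\iota_{1,j}$ acts as the identity on rows other than $1$ and $j$, previously zeroed entries at positions $j+1, \ldots, N$ are preserved. After the $N-1$ stages the first column of the result equals $e_1$, since its Euclidean norm is preserved and equals $1$; unitarity of the reduced matrix then forces the first row to equal $e_1^{T}$, giving the claimed block-diagonal form.

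Applying the inductive hypothesis to $\rU' \in \mathrm{U}(N-1)$ yields embeddings $g_1, \ldots, g_{m'} \in \mathrm{Emb}(\mathrm{U}(2), \mathrm{U}(N-1))$ and unitaries $\rW_1, \ldots, \rW_{m'}$ with $m' = (N-1)(N-2)/2$ such that $\rU' = g_{m'}(\rW_{m'}) \cdots g_1(\rW_1)$. The canonical block embedding $\kappa \colon \rA \mapsto \mathrm{diag}(1, \rA)$ is an injective group homomorphism $\mathrm{U}(N-1) \to \mathrm{U}(N)$, so each $\kappa \circ g_i$ belongs to $\mathrm{Emb}(\mathrm{U}(2), \mathrm{U}(N))$. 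Inverting the reduction and using $\rM_j^{\star} = \iota_{1,j}(\rV_j^{\star})$, I obtain
\[
\rU \;=\; \rM_N^{\star} \rM_{N-1}^{\star} \cdots \rM_2^{\star}\; (\kappa \circ g_{m'})(\rW_{m'}) \cdots (\kappa \circ g_1)(\rW_1),
\]
a product of $(N-1) + (N-1)(N-2)/2 = N(N-1)/2$ factors of the required form, closing the induction. The main point requiring care is verifying that the Givens step truly produces a block-diagonal matrix and not merely a matrix with a canonical first column; this follows from the fact that in $\mathrm{U}(N)$, a unit first column equal to $e_1$ forces the first row to equal $e_1^{T}$ as well. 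The rest is routine bookkeeping on the order of multiplication and on the disjoint row-support of the successive Givens actions.
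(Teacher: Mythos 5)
Your proposal is correct and follows essentially the same route as the paper: your embeddings $\iota_{p,q}$ are the paper's $K_{pq}^{(N)}$, your Givens sweep reducing the first column to $e_1$ and then peeling off the block $\mathrm{diag}(1,\rU')$ is exactly the content of the paper's reduction lemma and its block-diagonalization proposition, and the gate count $(N-1)+(N-1)(N-2)/2=N(N-1)/2$ matches. The only points to tidy are the degenerate stage where $a_1=a_j=0$ (take $\rV_j=\rI_2$, as the paper does via its explicit case analysis) and your description of the intermediate first column, which can in general have nonzero entries at all of the positions $1,\dots,j$ rather than only at $1$ and $j$ — neither issue affects the validity of the argument.
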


  Next. we introduce the following two maps 
  $\mathrm{up}_N,\mathrm{down}_N \in \mathrm{Emb}(\mathrm{U}(N-1),\mathrm{U}(N))$
  as follows. Given $\rU \in \mathrm{U}(N-1)$ we define
  $$
  \mathrm{up}_N(\rU) := \begin{bmatrix}
    \rU & \ket{0_{N-1}} \\
    \bra{0_{N-1}} & 1
  \end{bmatrix} \text{ and } \mathrm{down}_N(\rU) := \begin{bmatrix}
    1 & \bra{0_{N-1}} \\
    \ket{0_{N-1}} & \rU
  \end{bmatrix},
  $$
  where $\ket{0_{N-1}}$ represents the zero vector in $\mathbb{H}_{N-1}.$
  Observe that for $\ket{\Psi} \in \mathbb{H}_{N-1}$ and $\eta \in \mathbb{C}$ we have 
  $$
  \mathrm{up}_N(\rU)\begin{bmatrix}
    \ket{\Psi} \\
    \eta
  \end{bmatrix} = \begin{bmatrix}
    \rU \ket{\Psi} \\
    \eta
  \end{bmatrix} \text{ and } \mathrm{down}_N(\rU)\begin{bmatrix}
    \eta \\
    \ket{\Psi}
    \end{bmatrix} = \begin{bmatrix}
      \eta \\
      \rU \ket{\Psi}
    \end{bmatrix}.
  $$

  Finally, given $N\ge 3$ we introduce for $1 \le i < j \le N$ the map $K_{ij}^{(N)}\in \mathrm{Emb}(\mathrm{U}(2),\mathrm{U}(N))$ as follows
  $$
  K_{ij}^{(N)}(\rU) = K_{ij}^{(N)}\left(\begin{bmatrix}
    u_{11} & u_{12} \\
    u_{21} & u_{22}
  \end{bmatrix}\right)
  = 
  \left[
  \begin{array}{cccccccccccc}
   1 & 0 & \cdots & 0 & 0 & 0 & \cdots & 0 & 0 & 0 & \cdots & 0 \\
   0 & 1  &\cdots &0& 0 & 0  &\cdots & 0 & 0 & 0 & \cdots& 0 \\
  \vdots & \vdots &\ddots & \vdots & \vdots & \vdots & \ddots &\vdots& \vdots &\vdots& \ddots &\vdots \\
  0 & 0  &\cdots &1& 0 & 0  &\cdots & 0 & 0 & 0 & \cdots& 0 \\
   0 & 0  &\cdots &0& u_{11} & 0  &\cdots & 0 & u_{12} & 0 & \cdots& 0  \\
   0 & 0  &\cdots &0& 0 & 1  &\cdots & 0 & 0 & 0 & \cdots& 0  \\
  \vdots & \vdots &\ddots & \vdots & \vdots & \vdots & \ddots &\vdots& \vdots &\vdots& \ddots &\vdots \\
   0 & 0  &\cdots &0& 0 & 0  &\cdots & 1 & 0 & 0 & \cdots& 0 \\
   0 & 0  &\cdots &0& u_{21} & 0  &\cdots & 0 & u_{22} & 0 & \cdots& 0  \\
   0 & 0  &\cdots &0& 0 & 0  &\cdots & 0 & 0 & 1 & \cdots& 0  \\
  \vdots & \vdots &\ddots & \vdots & \vdots & \vdots & \ddots &\vdots& \vdots &\vdots& \ddots &\vdots \\
  0 & 0 & \cdots & 0 & 0 & 0 & \cdots & 0 & 0 & 0 & \cdots & 1
  \end{array}\right],
  $$
  here we have a modification of the identity matrix, where $u_{11}$ is in the $ii$-th position, $u_{22}$ is in the $jj$-th position,$u_{21}$ is in the $ji$-th position and $u_{12}$ is in the $ij$-th position. It is not difficult to see that $K_{ij}^{(N)} \in \mathrm{Emb}(\mathrm{U}(2),\mathrm{U}(N))$ holds. This matrix has the following property
  $$
  K_{ij}^{(N)}(\rU) = K_{ij}^{(N)}\left(\begin{bmatrix}
    u_{11} & u_{12} \\
    u_{21} & u_{22}
  \end{bmatrix}\right)\begin{bmatrix}
  \Psi_1 \\
    \vdots \\
    \Psi_{i-1} \\
    \Psi_i \\
    \Psi_{i+1} \\
    \vdots \\
    \Psi_{j-1} \\
    \Psi_j \\
    \Psi_{j+1} \\
    \vdots \\
    \Psi_N
  \end{bmatrix} = \begin{bmatrix}
    \Psi_1 \\
    \vdots \\
    \Psi_{i-1} \\
    u_{11}\Psi_i + u_{12}\Psi_j \\
    \Psi_{i+1} \\
    \vdots \\
    \Psi_{j-1} \\
    u_{21}\Psi_i + u_{22}\Psi_j \\
    \Psi_{j+1} \\
    \vdots \\
    \Psi_N  
  \end{bmatrix}.
  $$
  
  \bigskip
  
  To prove Theorem~\ref{general_universal_dictionary} we need to shown the following two results.
  
    \begin{lemma}\label{reduction}
      Let $\ket{\Psi} \in \mathbb{H}_N$ be a non-zero vector for some $N\ge 3.$ Then there exists 
      $$
      (\rV_1^{(N)},i_1^{(N)}),\ldots,(\rV_{N-1}^{N},i_{N-1}^{(N)}) \in \mathrm{U}(2)  \times \mathrm{Emb}(\mathrm{U}(2),\mathrm{U}(N))
      $$ 
      such that
      $$
      i_{N-1}^{(N)}(\rV_{N-1}^{(N)}) \cdots i_1^{(N)}(\rV_1^{(N)}) \ket{\Psi} =  \begin{bmatrix} \|\ket{\Psi}\| \\ 0 \\ \vdots  \\ 0 \end{bmatrix}
      $$
      \end{lemma}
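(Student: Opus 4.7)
The plan is a constructive Givens-rotation style reduction: peel off the coordinates $\Psi_2,\Psi_3,\ldots,\Psi_N$ of $\ket{\Psi}$ one at a time, merging the magnitude of each into the first coordinate via a two-level unitary of the form $K_{1,k}^{(N)}(\rV_k)$. Because $K_{1,k}^{(N)}\in\mathrm{Emb}(\mathrm{U}(2),\mathrm{U}(N))$ acts nontrivially only on positions $1$ and $k$, the intermediate states stay in a convenient explicit form, and after $N-1$ such applications all coordinates past the first are zero.

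The key building block is the following explicit two-by-two unitary. For any pair $(a,b)\in\mathbb{C}^2\setminus\{(0,0)\}$, define
\[
\rV(a,b)\ :=\ \frac{1}{\sqrt{|a|^2+|b|^2}}\begin{bmatrix} \overline{a} & \overline{b}\\ -b & a\end{bmatrix}\ \in\ \mathrm{U}(2),
\]
and set $\rV(0,0):=\rI_2$. A direct computation shows that $\rV(a,b)\rV(a,b)^{\star}=\rI_2$ and
\[
\rV(a,b)\begin{bmatrix} a\\b\end{bmatrix}\ =\ \begin{bmatrix}\sqrt{|a|^2+|b|^2}\\ 0\end{bmatrix}.
\]
The use of $\overline{a}$ and $\overline{b}$ in the top row is precisely what forces the resulting first entry to be the \emph{nonnegative real} square root rather than merely some phase multiple of it; this is the only nontrivial design choice in the argument.

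I would then proceed by induction on $k\in\{1,\ldots,N-1\}$ with the following invariant: after applying the composition $K_{1,k+1}^{(N)}(\rV_k^{(N)})\cdots K_{1,2}^{(N)}(\rV_1^{(N)})$, the resulting vector has the form $(\gamma_k,\,0,\ldots,0,\,\Psi_{k+2},\ldots,\Psi_N)^T$, where $\gamma_0:=\Psi_1$, and at each step one sets $\rV_k^{(N)}:=\rV(\gamma_{k-1},\Psi_{k+1})$, so that $\gamma_k=\sqrt{|\gamma_{k-1}|^2+|\Psi_{k+1}|^2}\in\mathbb{R}_{\ge 0}$. The inductive step follows immediately from the property of $K_{1,k+1}^{(N)}$ displayed after its definition, together with the action of $\rV(\cdot,\cdot)$ recorded above. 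Iterating to $k=N-1$ and telescoping gives $\gamma_{N-1}^2=|\Psi_1|^2+\cdots+|\Psi_N|^2=\|\Psi\|^2$, hence the final vector is exactly $(\|\Psi\|,0,\ldots,0)^T$. Choosing $i_k^{(N)}:=K_{1,k+1}^{(N)}$ for $k=1,\ldots,N-1$ produces the required $N-1$ pairs.

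I do not foresee a substantive obstacle: the construction is essentially a complex QR decomposition implemented by $N-1$ Givens rotations. The only mild points to watch are verifying unitarity and the image condition for $\rV(a,b)$ (a two-line computation) and handling the degenerate case $(\gamma_{k-1},\Psi_{k+1})=(0,0)$, which is absorbed into the convention $\rV(0,0)=\rI_2$ and simply carries the zero block through unchanged.
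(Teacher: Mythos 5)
Your proof is correct. It rests on the same key ingredient as the paper's — the two-level embeddings $K_{ij}^{(N)}\in\mathrm{Emb}(\mathrm{U}(2),\mathrm{U}(N))$ filled with a Givens-type $2\times 2$ block whose conjugated top row forces the surviving entry to be the nonnegative real $\sqrt{|a|^2+|b|^2}$ — but you organize the argument differently. You run a single induction over the coordinate index $k$ inside a fixed $\mathrm{U}(N)$, always pivoting on position $1$ and applying $K_{1,2}^{(N)},K_{1,3}^{(N)},\dots,K_{1,N}^{(N)}$ in turn, with the uniform formula $\rV(a,b)$ and the convention $\rV(0,0)=\rI_2$ absorbing all degenerate cases. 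The paper instead inducts on the dimension $N$: it first reduces the truncated vector $\ket{\Psi'}\in\mathbb{H}_{N-1}$ by the induction hypothesis, lifts those unitaries through the embedding $\mathrm{up}_N$, and finishes with one more $K$-rotation mixing positions $1$ and $N$; this forces it to treat the cases $\ket{\Psi'}=0$ and (in the base case $N=3$) $\Psi_1=0$ separately by inserting explicit swap matrices $K_{ij}^{(3)}\bigl(\left[\begin{smallmatrix}0&1\\1&0\end{smallmatrix}\right]\bigr)$. Your version is leaner — it avoids the $\mathrm{up}_N$ machinery and the case analysis entirely — while the paper's dimension induction is structured so that it feeds directly into the subsequent induction proving Theorem~\ref{general_universal_dictionary}. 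Both yield exactly $N-1$ pairs, as required; the only point worth making explicit in your write-up is that when $\rV_k^{(N)}=\rI_2$ the corresponding factor $i_k^{(N)}(\rI_2)=\rI_N$ still counts as one of the $N-1$ pairs demanded by the statement, which asks only for existence, not nontriviality.
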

      \begin{proof}
      We prove this lemma by induction on $N.$ First, we consider the case $N=3.$ Assume that $\ket{\Psi} = \begin{bmatrix} \Psi_1 \\ \Psi_2 \\ \Psi_3 \end{bmatrix}$ and suppose first that $\Psi_1 \neq 0.$ Then we take the matrix
      $$
      K_{12}^{(3)}\left( \begin{bmatrix}
        \frac{\overline{\Psi_1}}{\sqrt{ |\Psi_1|^2+|\Psi_2|^2}} & \frac{\overline{\Psi_2}}{\sqrt{ |\Psi_1|^2+|\Psi_2|^2}} \\
        \frac{\Psi_2}{\sqrt{ |\Psi_1|^2+|\Psi_2|^2}} & -\frac{\Psi_1}{\sqrt{ |\Psi_1|^2+|\Psi_2|^2}} \end{bmatrix} \right)  =\begin{bmatrix}
      \frac{\overline{\Psi_1}}{\sqrt{ |\Psi_1|^2+|\Psi_2|^2}} & \frac{\overline{\Psi_2}}{\sqrt{ |\Psi_1|^2+|\Psi_2|^2}} & 0 \\
      \frac{\Psi_2}{\sqrt{ |\Psi_1|^2+|\Psi_2|^2}} & -\frac{\Psi_1}{\sqrt{ |\Psi_1|^2+|\Psi_2|^2}} & 0 \\
      0 & 0 & 1
      \end{bmatrix}
      $$ 
      that satisfies
      $$
      \begin{bmatrix}
        \frac{\overline{\Psi_1}}{\sqrt{ |\Psi_1|^2+|\Psi_2|^2}} & \frac{\overline{\Psi_2}}{\sqrt{ |\Psi_1|^2+|\Psi_2|^2}} & 0 \\
        \frac{\Psi_2}{\sqrt{ |\Psi_1|^2+|\Psi_2|^2}} & -\frac{\Psi_1}{\sqrt{ |\Psi_1|^2+|\Psi_2|^2}} & 0 \\
        0 & 0 & 1
        \end{bmatrix} \begin{bmatrix} \Psi_1 \\ \Psi_2 \\ \Psi_3 \end{bmatrix} 
        = \begin{bmatrix} \sqrt{ |\Psi_1|^2+|\Psi_2|^2}\\ 0\\ \Psi_3\end{bmatrix}.
        .
      $$
      Now, we consider the matrix
      $$
      K_{13}^{(3)}\left( \begin{bmatrix}
        \frac{\sqrt{ |\Psi_1|^2+|\Psi_2|^2}}{\|\ket{\Psi}\|}  & \frac{\overline{\Psi_3}}{\|\ket{\Psi}\|} \\
        & \\
        \frac{\Psi_3}{\|\ket{\Psi}\|} & -\frac{\sqrt{ |\Psi_1|^2+|\Psi_2|^2}}{\|\ket{\Psi}\|}
        \end{bmatrix}\right) = \begin{bmatrix}
          \frac{\sqrt{ |\Psi_1|^2+|\Psi_2|^2}}{\|\ket{\Psi}\|} & 0  &\frac{\overline{\Psi_3}}{\|\ket{\Psi}\|} \\
        0 & 1 & 0 \\
        \frac{\Psi_3}{\|\ket{\Psi}\|} & 0 & -\frac{\sqrt{ |\Psi_1|^2+|\Psi_2|^2}}{\|\ket{\Psi}\|}
        \end{bmatrix},
      $$
      and hence 
      $$
      \begin{bmatrix}
        \frac{\sqrt{ |\Psi_1|^2+|\Psi_2|^2}}{\|\ket{\Psi}\|} & 0  &\frac{\overline{\Psi_3}}{\|\ket{\Psi}\|} \\
      0 & 1 & 0 \\
      \frac{\Psi_3}{\|\ket{\Psi}\|} & 0 & -\frac{\sqrt{ |\Psi_1|^2+|\Psi_2|^2}}{\|\ket{\Psi}\|}
      \end{bmatrix}\begin{bmatrix} \sqrt{ |\Psi_1|^2+|\Psi_2|^2}\\ 0\\ \Psi_3\end{bmatrix} = \begin{bmatrix} \|\ket{\Psi}\|\\ 0\\ 0\end{bmatrix}.
      $$
      Thus, there exists $(V_1^{(3)},i_1^{(3)}),(V_2^{(3)},i_2^{(3)}) \in \mathrm{U}(2) \times\mathrm{Emb}(\mathrm{U}(2),\mathrm{U}(3))$ such that 
      $$
      i_2^{(3)}(V_2^{(3)})i_1^{(3)}(V_1^{(3)})\ket{\Psi} = \begin{bmatrix} \|\ket{\Psi}\|\\ 0\\ 0\end{bmatrix}.
      $$
  The cases $\Psi_1 = 0,\Psi_2\neq 0$ and $\Psi_1 = 0,\Psi_2 =0,\Psi_3\neq 0$ are solved by using the product of matrices
  $$
  K_{12}^{(3)}\left( \begin{bmatrix} 0 & 1 \\ 1 & 0 \end{bmatrix}\right)K_{23}^{(3)}\left(\begin{bmatrix}
    \frac{\overline{\Psi_2}}{\sqrt{ |\Psi_2|^2+|\Psi_3|^2}} & \frac{\overline{\Psi_3}}{\sqrt{ |\Psi_2|^2+|\Psi_3|^2}} \\
    \frac{\Psi_3}{\sqrt{ |\Psi_2|^2+|\Psi_3|^2}} & -\frac{\Psi_2}{\sqrt{ |\Psi_2|^2+|\Psi_3|^2}} \end{bmatrix} \right),
  $$
  because
  \begin{align*}
  K_{12}^{(3)}\left( \begin{bmatrix} 0 & 1 \\ 1 & 0 \end{bmatrix}\right)K_{23}^{(3)}\left(\begin{bmatrix}
    \frac{\overline{\Psi_2}}{\sqrt{ |\Psi_2|^2+|\Psi_3|^2}} & \frac{\overline{\Psi_3}}{\sqrt{ |\Psi_2|^2+|\Psi_3|^2}} \\
    \frac{\Psi_3}{\sqrt{ |\Psi_2|^2+|\Psi_3|^2}} & -\frac{\Psi_2}{\sqrt{ |\Psi_2|^2+|\Psi_3|^2}} \end{bmatrix} \right)\begin{bmatrix} 0 \\ \Psi_2 \\ \Psi_3 \end{bmatrix} \\ 
      = K_{12}^{(3)}\left( \begin{bmatrix} 0 & 1 \\ 1 & 0 \end{bmatrix}\right) \begin{bmatrix} 0\\ \sqrt{ |\Psi_2|^2+|\Psi_3|^2}\\ 0 \end{bmatrix} = \begin{bmatrix} \|\ket{\Psi}\| \\ 0\\ 0 \end{bmatrix}.,
  \end{align*}
  
      In consequence the result follows for $N=3.$ Assume the result is true for $N-1$ and consider 
      $$
      \ket{\Psi} = \begin{bmatrix} \Psi_1 \\ \Psi_2 \\ \vdots \\ \Psi_{N} \end{bmatrix} = \begin{bmatrix}\ket{\Psi^{\prime}}\\ \Psi_N \end{bmatrix} \in \mathbb{H}_N, \text{ where } \ket{\Psi'} = \begin{bmatrix} \Psi_1 \\ \Psi_2 \\ \vdots \\ \Psi_{N-1} \end{bmatrix} \in \mathbb{H}_{N-1}
      $$ 
      be a non-zero vector. If $\ket{\Psi^{\prime}} \neq 0,$ we can apply the induction hypothesis to $\ket{\Psi^{\prime}}.$ This implies the existence of $(\rV_1^{(N-1)},i_1^{(N-1)}),\ldots,(\rV_{N-2}^{(N-1)},i_{N-2}^{(N-1)}) \in\mathrm{U}(2) \times\mathrm{Emb}(\mathrm{U}(2),\mathrm{U}(N-1)) $ such that 
      $$
      i_{N-2}^{(N-1)}(\rV_{N-2}^{(N-1)}) \cdots i_1^{(N-1)}(\rV_1^{(N-1)}) \ket{\Psi^{\prime}} =  \begin{bmatrix} \|\ket{\Psi^{\prime}}\| \\ 0 \\ \vdots  \\ 0 \end{bmatrix}.
      $$
    Hence, 
    \begin{align*}
    & \mathrm{up}_N(i_{N-2}^{(N-1)}(\rV_{N-2}^{(N-1)})) \cdots  \mathrm{up}_N(i_1^{(N-1)}(\rV_1^{(N-1)})) \ket{\Psi} \\ 
    & \\
    = & \mathrm{up}_N\left(i_{N-2}^{(N-1)}(\rV_{N-2}^{(N-1)}) \cdots  i_1^{(N-1)}(\rV_1^{(N-1)})\right)  \begin{bmatrix}\ket{\Psi^{\prime}}\\ \Psi_N \end{bmatrix}
    \\ 
    = & \begin{bmatrix} i_{N-2}^{(N-1)}(\rV_{N-2}^{(N-1)}) \cdots i_1^{(N-1)}(\rV_1^{(N-1)}) \ket{\Psi^{\prime}} \\ \Psi_N \end{bmatrix} = \begin{bmatrix} \|\ket{\Psi^{\prime}}\| \\ 0 \\ \vdots  \\ 0 \\ \Psi_N \end{bmatrix}.
    \end{align*}
    To conclude, we take the matrix 
    $$
  K_{1(N-1)}^{(N)}\left(
    \begin{bmatrix}
      \frac{\|\ket{\Psi^{\prime}}\|}{\|\ket{\Psi}\|} & \frac{\overline{\Psi_N}}{\|\ket{\Psi}\|} \\
      & \\
      \frac{\Psi_N}{\|\ket{\Psi}\|} & - \frac{\|\ket{\Psi^{\prime}}\|}{\|\ket{\Psi}\|}
    \end{bmatrix}
  \right)\begin{bmatrix} \|\ket{\Psi^{\prime}}\| \\ 0 \\ \vdots  \\ 0 \\ \Psi_N \end{bmatrix} = \begin{bmatrix} \|\ket{\Psi}\| \\ 0 \\ \vdots  \\ 0 \\ 0 \end{bmatrix},
    $$
  here we use that $\|\ket{\Psi}\|^2 =\|\ket{\Psi^{\prime}}\|^2 + |\Psi_N|^2.$ In consequence we have that 
  $$
  K_{1(N-1)}^{(N)}(V_{N-1}^{(N-1)})(\mathrm{up}_N \circ i_{N-2}^{(N-1)})(\rV_{N-2}^{(N-1)})) \cdots  (\mathrm{up}_N \circ i_1^{(N-1)})(\rV_1^{(N-1)})) \ket{\Psi} = \begin{bmatrix} \|\ket{\Psi}\| \\ 0 \\ \vdots  \\ 0 \\ 0 \end{bmatrix},
  $$
  where 
  $
  (V_{N-1}^{(N-1)},K_{1(N-1)}^{(N)}),(\rV_{N-2}^{(N-1)},\mathrm{up}_N \circ i_{N-2}^{(N-1)}),\ldots,(\rV_1^{(N-1)},\mathrm{up}_N \circ i_1^{(N-1)})$ are in  $\mathrm{U}(2) \times\mathrm{Emb}(\mathrm{U}(2),\mathrm{U}(N)) .$ Otherwise, if $\ket{\Psi^{\prime}} = 0,$ then $\Psi_N \neq 0,$ and we have
  \begin{align*}
    K_{12}^{(N)}\left( \begin{bmatrix}
      0 & 1 \\
     1 & 0 \end{bmatrix}
     \right) \cdots K_{N-2(N-1)}^{(N)}\left( \begin{bmatrix}
      0 & 1 \\
     1 & 0 \end{bmatrix}
     \right) K_{N(N-1)}^{(N)}\left( \begin{bmatrix}
     0 & \frac{\overline{\Psi_N}}{|\Psi_N|} \\
    \frac{\Psi_N}{|\Psi_N|} & 0 \end{bmatrix}
    \right)\begin{bmatrix} 0 \\ 0 \\ \vdots \\ 0 \\ \Psi_{N} \end{bmatrix} = \begin{bmatrix}  |\Psi_N| \\ 0 \\ \vdots \\ 0 \\ 0 \end{bmatrix}.
    \end{align*}
  This concludes the proof of the lemma.
  \end{proof}
  
  \bigskip
  
  \begin{proposition}\label{thm:universal_dictionary_extended}
    Assume $N \ge 3.$ Then for each $\rU \in \mathrm{U}(N)$ there exists 
    \begin{enumerate}
    \item[(a)] $(\rV_1^{(N)},i_1^{(N)}),\ldots,(\rV_{N-1}^{N},i_{N-1}^{(N)}) \in \mathrm{U}(2)  \times \mathrm{Emb}(\mathrm{U}(2),\mathrm{U}(N))$ and 
    \item[(b)] $(\rU_{N-1}^{(N)},\mathrm{down}_N) \in U(N-1)\times \mathrm{Emb}(\mathrm{U}(N-1),\mathrm{U}(N)) $
    \end{enumerate}
    such that 
    \begin{align*}
    i_{N-1}^{(N)}(\rV_{N-1}^{(N)}) \cdots i_1^{(N)}(\rV_1^{(N)}) \rU = \mathrm{down}_N(\rU_{N-1}^{N}).
    \end{align*}
    Furthermore, 
    \begin{align*}
     \rU =  \mathrm{down}_N(\rU_{N-1}^{(N)})i_{N-1}^{(N)}((\rV_{N-1}^{(N)})^{\star}) \cdots i_1^{(N)}((\rV_1^{(N)})^{\star}).
    \end{align*}
    \end{proposition}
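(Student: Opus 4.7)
The plan is to apply Lemma~\ref{reduction} directly to the first column of $\rU$ and then invoke unitarity to force the block structure encoded by $\mathrm{down}_N$.

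Write $\rU = [\,\ket{\Psi}\ \ket{u_2}\ \cdots\ \ket{u_N}\,]$, where $\ket{\Psi}$ denotes its first column. Because $\rU \in \mathrm{U}(N)$, the columns form an orthonormal family in $\mathbb{H}_N$, so $\|\ket{\Psi}\| = 1$ and in particular $\ket{\Psi}$ is nonzero. Lemma~\ref{reduction} then supplies pairs $(\rV_1^{(N)}, i_1^{(N)}), \ldots, (\rV_{N-1}^{(N)}, i_{N-1}^{(N)}) \in \mathrm{U}(2) \times \mathrm{Emb}(\mathrm{U}(2), \mathrm{U}(N))$ such that
$$
i_{N-1}^{(N)}(\rV_{N-1}^{(N)}) \cdots i_1^{(N)}(\rV_1^{(N)}) \ket{\Psi} = \begin{bmatrix} 1 \\ 0 \\ \vdots \\ 0 \end{bmatrix}.
$$
Define $\rW := i_{N-1}^{(N)}(\rV_{N-1}^{(N)}) \cdots i_1^{(N)}(\rV_1^{(N)})\,\rU$. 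Since $\mathrm{U}(N)$ is closed under products, $\rW \in \mathrm{U}(N)$, and by construction its first column is the first standard basis vector of $\mathbb{H}_N$.

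The next step is to exploit the identity $\rW^{\star}\rW = \rI_N$. The first entry of the first row of $\rW$ equals $1$, and since every row of a unitary matrix has unit norm, the remaining entries of that row must vanish. Therefore $\rW$ admits the block decomposition
$$
\rW = \begin{bmatrix} 1 & \bra{0_{N-1}} \\ \ket{0_{N-1}} & \rU_{N-1}^{(N)} \end{bmatrix} = \mathrm{down}_N(\rU_{N-1}^{(N)})
$$
for a uniquely determined $(N-1)\times(N-1)$ matrix $\rU_{N-1}^{(N)}$. The orthonormality of the last $N-1$ columns of $\rW$ transfers to the columns of $\rU_{N-1}^{(N)}$, yielding $\rU_{N-1}^{(N)} \in \mathrm{U}(N-1)$. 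This establishes parts (a) and (b) of the proposition.

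For the furthermore statement, left-multiply the identity $i_{N-1}^{(N)}(\rV_{N-1}^{(N)}) \cdots i_1^{(N)}(\rV_1^{(N)}) \rU = \mathrm{down}_N(\rU_{N-1}^{(N)})$ by the adjoint of the elementary-gate product, reverse the order of the factors, and use that each $i_k^{(N)}$ is a group homomorphism, so $(i_k^{(N)}(\rV_k^{(N)}))^{\star} = i_k^{(N)}((\rV_k^{(N)})^{\star})$. The bulk of the argument is carried by Lemma~\ref{reduction}; the only potential obstacle is the block-structure step, which is immediate from unitarity but deserves a careful write-up to confirm that the trailing principal submatrix is indeed an element of $\mathrm{U}(N-1)$.
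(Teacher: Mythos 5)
Your proposal is correct and follows essentially the same route as the paper: apply Lemma~\ref{reduction} to the (unit-norm) first column of $\rU$, use unitarity of the resulting product to force the first row to be $(1,0,\dots,0)$, read off the $\mathrm{down}_N$ block structure, and invert the elementary-gate product using that each $i_k^{(N)}$ is a group homomorphism. The only cosmetic difference is that you derive the vanishing of the first row from unit row norms while the paper uses orthogonality of the remaining columns to the first; both are immediate from unitarity.
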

  
    \begin{proof}
  Given $\rU \in U(N)$ we can write
    $$
    \rU = \begin{bmatrix}
      \ket{\Psi_1} & \ket{\Psi_2} & \cdots  & \ket{\Psi_N}
    \end{bmatrix},
    $$
    where $\{\ket{\Psi_1},\ket{\Psi_2},\ldots ,\ket{\Psi_N}\}$ is an orthonormal basis of $\mathbb{H}_N.$ From Lemma~\ref{reduction}
    there exists 
    $$
    (\rV_1^{(N)},i_1^{(N)}),\ldots,(\rV_{N-1}^{N},i_{N-1}^{(N)}) \in \mathrm{U}(2)  \times \mathrm{Emb}(\mathrm{U}(2),\mathrm{U}(N))
    $$ 
    such that
      $$
      i_{N-1}^{(N)}(\rV_{N-1}^{(N)}) \cdots i_1^{(N)}(\rV_1^{(N)})  \ket{\Psi_1} =  \begin{bmatrix} 1 \\ 0 \\ \vdots \\ 0 \end{bmatrix}.
      $$
    In consequence, for $\rZ:= i_{N-1}^{(N)}(\rV_{N-1}^{(N)}) \cdots i_1^{(N)}(\rV_1^{(N)})$ we have
    \begin{align*}
    \rZ \rU  & = \begin{bmatrix}
      \rZ \ket{\Psi_1} &  \rZ \ket{\Psi_2} & \cdots  & \rZ \ket{\Psi_N}
    \end{bmatrix} =
    \begin{bmatrix}
    1 & u_{12} & \cdots & u_{1N} \\
    0 & u_{22} & \cdots & u_{2N} \\
    \vdots & \vdots & \ddots & \vdots \\
    0 & u_{N2} & \cdots & u_{NN}
    \end{bmatrix}.
  \end{align*}
    Since, the columns of the matrix $\rZ \rU$ are also an orthonormal basis of $\mathbb{H}_N,$ we deduce that
    $u_{12}= \cdots =u_{1N} =0.$ Thus, we have that
    $$
    i_{N-1}^{(N)}(\rV_{N-1}^{(N)}) \cdots i_1^{(N)}(\rV_1^{(N)})  \rU = \mathrm{down}_N\left(\begin{bmatrix}
      u_{22} & \cdots & u_{2N} \\
       \vdots & \ddots & \vdots \\
      u_{N2} & \cdots & u_{NN}
      \end{bmatrix}\right).
    $$
    This proves the proposition.
  \end{proof}
  
  \bigskip
  
  Now, we are able to prove Theorem~\ref{general_universal_dictionary}.
  
  \bigskip
  
  \begin{proof}[Proof of Theorem~\ref{general_universal_dictionary}]
We proceed by induction on \(N\).

\paragraph{Base Case \(\boldsymbol{N=3}\).}
Let \(\rU \in \mathrm{U}(3)\). By Proposition~\ref{thm:universal_dictionary_extended}, 
there exist 
\[
(\rV_1^{(3)}, i_1^{(3)}), \quad
(\rV_2^{(3)}, i_2^{(3)}) 
\;\in\; 
\mathrm{U}(2) \,\times\, \mathrm{Emb}\!\bigl(\mathrm{U}(2), \mathrm{U}(3)\bigr)
\]
and 
\[
(\rU_{2}^{(3)}, \mathrm{down}_3) 
\;\in\; 
\mathrm{U}(2) \,\times\, \mathrm{Emb}\!\bigl(\mathrm{U}(2), \mathrm{U}(3)\bigr)
\]
such that
\[
i_{1}^{(3)}\!\bigl(\rV_{1}^{(3)}\bigr)\,
i_{2}^{(3)}\!\bigl(\rV_{2}^{(3)}\bigr)\,\rU 
\;=\; 
\mathrm{down}_3\!\bigl(\rU_{2}^{(3)}\bigr).
\]
Rearranging, we get
\[
\rU 
\;=\;
i_{2}^{(3)}\!\Bigl(\bigl(\rV_{2}^{(3)}\bigr)^{\star}\Bigr)\,
i_{1}^{(3)}\!\Bigl(\bigl(\rV_{1}^{(3)}\bigr)^{\star}\Bigr)\,
\mathrm{down}_3\!\bigl(\rU_{2}^{(3)}\bigr).
\]
Thus, the statement of the theorem holds for \(N = 3\).

\paragraph{Inductive Step.}
Assume the result is true for \(N - 1\). Let \(\rU \in \mathrm{U}(N)\). 
Applying Proposition~\ref{thm:universal_dictionary_extended} once again, 
we obtain 
\[
(\rV_1^{(N)}, i_1^{(N)}), \ldots, (\rV_{N-1}^{(N)}, i_{N-1}^{(N)}) 
\;\in\; 
\mathrm{U}(2)\,\times\, \mathrm{Emb}\!\bigl(\mathrm{U}(2), \mathrm{U}(N)\bigr)
\]
and 
\[
\bigl(\rU_{N-1}^{(N)}, \mathrm{down}_N\bigr) 
\;\in\; 
\mathrm{U}(N-1)\,\times\, \mathrm{Emb}\!\bigl(\mathrm{U}(N-1), \mathrm{U}(N)\bigr)
\]
such that
\begin{align}\label{decompositionA}
i_{N-1}^{(N)}\!\bigl(\rV_{N-1}^{(N)}\bigr)\,\cdots\,i_1^{(N)}\!\bigl(\rV_1^{(N)}\bigr)\,\rU 
\;=\; 
\mathrm{down}_N\!\bigl(\rU_{N-1}^{(N)}\bigr).
\end{align}
Since \(\rU_{N-1}^{(N)} \in \mathrm{U}(N-1)\), the induction hypothesis applies. 
Hence there exist 
\[
m \;=\; \frac{(N-1)\,(N-2)}{2}
\quad\text{and}\quad
(\rV_1,f_1), \ldots, (\rV_m,f_m) 
\;\in\;
\mathrm{U}(2)\,\times\, \mathrm{Emb}\!\bigl(\mathrm{U}(2), \mathrm{U}(N-1)\bigr)
\]
such that
\[
\rU_{N-1}^{(N)} 
\;=\; 
f_m(\rV_m)\,\cdots\,f_1(\rV_1).
\]
Therefore,
\[
\mathrm{down}_N\!\bigl(\rU_{N-1}^{(N)}\bigr)
\;=\;
(\mathrm{down}_N \circ f_m)(\rV_m)\,\cdots\,(\mathrm{down}_N \circ f_1)(\rV_1),
\]
where each pair 
\(\bigl(\rV_j,\mathrm{down}_N \circ f_j\bigr)\) 
belongs to 
\(\mathrm{U}(2)\,\times\, \mathrm{Emb}\!\bigl(\mathrm{U}(2), \mathrm{U}(N)\bigr)\). 
From \eqref{decompositionA}, it follows that
\[
\rU 
\;=\; \,
i_{1}^{(N)}\!\Bigl(\bigl(\rV_{1}^{(N)}\bigr)^{\star}\Bigr)\,\cdots\,i_{N-1}^{(N)}\!\Bigl(\bigl(\rV_{N-1}^{(N)}\bigr)^{\star}\Bigr)
(\mathrm{down}_N \circ f_m)(\rV_m)\,\cdots\,
(\mathrm{down}_N \circ f_1)(\rV_1).
\]
Counting the total number of \(\mathrm{U}(2)\) factors, we see that
\[
(N - 1) 
\;+\; 
\frac{(N - 2)\,(N - 1)}{2}
\;=\;
\frac{N\,(N - 1)}{2}
\]
is precisely the total number of 2\(\times\)2 unitaries required. 
This completes the proof.
\end{proof}

  \end{appendices}

\bibliography{bbl_general}

\end{document}